\documentclass[11pt]{amsart}

\usepackage{amssymb}
\usepackage{amsmath}
\usepackage{amsthm}
\usepackage{amsfonts}

\usepackage{setspace}

\usepackage[T1]{fontenc}
\usepackage[latin1]{inputenc}

\usepackage{graphicx}

\usepackage[top=1.25in, bottom=1.25in, left=1.25in, right=1.25in]{geometry}

\usepackage{chngcntr}
\counterwithin{figure}{section}
\usepackage{bm}
\usepackage{float}

\singlespacing

 \usepackage[usenames]{color}
 \definecolor{red}{rgb}{1.0,0.0,0.0}
 
 \definecolor{gre}{rgb}{0.03,0.50,0.03}
 
\definecolor{dvio}{rgb}{0.58, 0.0, 0.83}

\newtheorem{Theorem}{Theorem}[section]
\newtheorem{Proposition}[Theorem]{Proposition}
\newtheorem{Lemma}[Theorem]{Lemma}
\newtheorem{Corollary}[Theorem]{Corollary}
\newtheorem{Definition}[Theorem]{Definition}

\newtheorem{Assumption}[Theorem]{Assumption}
\newtheorem{Remark}[Theorem]{Remark}

\numberwithin{equation}{section}


\def\L{\mathcal{L}}
\def\R{\mathbb R}
\def\N{\mathbb N}
\def\Z{\mathbb Z}

\def\call{{\mathcal L}}

\def\d{\mathrm{d}}

\newenvironment{MyFigure}[1][]{\begin{figure}[#1]\vspace{1.0cm}}{\vspace{1.0cm}\end{figure}}

\sloppy


\begin{document}

\title[]{A dynamic theory of spatial externalities
}

\author[ ]{Raouf Boucekkine$^\MakeLowercase{a}$}
\thanks{$^{a}$Rennes School of Business}
\address{R. Boucekkine, Rennes School of Business, 2 rue Robert d'Arbrissel, Rennes, France.}
 \email{raouf.boucekkine@rennes-sb.com}

\author[ ]{Giorgio Fabbri$^\MakeLowercase{b}$}
\thanks{$^{b}$Univ. Grenoble Alpes, CNRS, INRA, Grenoble INP, GAEL, 38000 Grenoble, France.}
\address{G. Fabbri, Univ. Grenoble Alpes, CNRS, INRA, Grenoble INP, GAEL - CS 40700 - 38058 Grenoble CEDEX 9, France. The work of Giorgio Fabbri is partially supported by the French National Research Agency in the framework of the ``Investissements d'Avenir'' program (ANR-15-IDEX-02) and of the center of excellence LABEX MME-DII (ANR-11-LABX-0023-01).}
\email{giorgio.fabbri@univ-grenoble-alpes.fr}.

\author[ ]{Salvatore Federico$^\MakeLowercase{c}$}
\thanks{$^{c}$Universit\`a degli Studi di Genova, Dipartimento di Economia.}
\address{S. Federico, Universit\`a degli Studi di Genova, Dipartimento di Economia. Via Vivaldi, 5 -- Darsena -- 16126, Italy.} \email{salvatore.federico@unige.it}

\author[ ]{Fausto Gozzi$^\MakeLowercase{d}$}
\thanks{$^{d}$Dipartimento di Economia e Finanza, LUISS \emph{Guido Carli}, Roma.\bigskip}
\address{F. Gozzi, Dipartimento di Economia e Finanza, Libera Universit\`a degli Studi Sociali \emph{Guido Carli}, Roma} \email{fgozzi@luiss.it}

\date{\today }

\begin{abstract}
We characterize the shape of spatial externalities in a continuous time and space differential game with transboundary pollution. We posit a realistic spatiotemporal law of motion for pollution (diffusion and advection), and tackle spatiotemporal non-cooperative (and cooperative) differential games. Precisely, we consider a circle partitioned into several states where a local authority decides autonomously about its investment, production and depollution strategies over time knowing that investment/production generates pollution, and pollution is transboundary. The time horizon is infinite. We allow for a rich set of geographic heterogeneities across states. We solve {\bf analytically} the induced non-cooperative differential game  and  characterize its long-term spatial distributions. In particular, we prove that there exist a Perfect Markov Equilibrium, unique among the class of the affine feedbacks. We further provide with a full exploration of the free riding problem and the associated border effect.

\end{abstract}

\maketitle

\textbf{Key words}:
Spatial externalities, spatial diffusion, differential games in continuous time and space, infinite dimensional optimal control problems, environmental federalism.
\bigskip \noindent

\textbf{JEL classification}:
Q53, 
R12, 
O13, 
C72, 
C61, 
O44. 

\newpage

\linespread{1.35}\selectfont


\section{Introduction}
Spatial externalities are central for the understanding of fundamental phenomena in economics, geography, epidemiology or ecology, among others. Spatial externalities may arise as a result of natural diffusion across space. For example, in the so-called problem of co-existence in ecology (see Barab\'as, 2017, for a modern exposition, and a very nice economic insight into this problem in Beckmann and Wesseler, 2007), farmers who plant a certain type of
crops (e.g. transgenic) may cause negative (or positive) external effects to other farmers who don't, by cross contamination through pollen drift. In the latter example, the externalities do not indeed arise as a result of purposive human action but from natural diffusion. In this paper, we address the problem of (negative) spatial externalities in a context of natural diffusion as in the example above, but in which economic agents/institutions purposively take advantage of natural diffusion to improve their private payoff/welfare, typically through free riding. This covers a large class of applications ranging from pollution control to epidemic control through the more traditional area of natural resources exploitation and the associated tragedy of the commons (Ostrom, 2008). 
In all these areas, free riding in a context of natural diffusion and spatial externalities has a major impact on local and regional spatial patterns.


A very known and documented related area, labeled {\it environmental federalism} by Konisky and Woods (2010), considers transboundary (air, water or soil) pollution. In such a context, the free riding problem has a clear geographic feature: it is associated with a neat border effect. Indeed, if pollution control is decentralized, local governments may be strongly tempted to locate the most polluting facilities near the jurisdictional borders (Monogan III et al., 2017) and/or to enforce less frequently at these borders pro-environmental policies as those promoted by federal or international acts and protocols (see Konisky and Woods, 2010, on the local enforcement of the US federal Clean Air Act).\footnote{There exists a large number of papers documenting that air and water pollution levels are systematically elevated near state borders relative to interior regions. This concerns mainly the US, see Helland and Whitford (2003) or Sigman (2002), but also China, see  Kahn et al. (2015) and Duvivier and Xiong (2013), or Brasil as in Lipscomb and Mobarak (2017).} This paper proposes a general framework which targets the entire class of problems with free riding under natural diffusion and spatial externalities, with a special focus on the environmental problem described just above. We shall formulate the general problem as a differential game in continuous time and space. As natural diffusion involves parabolic differential equations, the involved problems are infinite dimensional. By focusing on the environmental federalism problem, we are able to develop a full analytical approach where the game can be solved in closed form. In particular, we shall prove the existence of a unique Markov Perfect Equilibrium (MPE). We will also  confront the predictions of our game-theoretic model to the abundant empirical literature counterpart.

\smallskip

\smallskip

On the theoretical ground, the free riding problem described above has suggested a quite substantial literature. We refer here for simplicity to the extremely useful works of Hutchinson and Kennedy (2008), and Silva and Caplan (1997).
All these papers use statics frameworks, mostly game-theoretic. To give an example using the same type of geographic space as we will do, Hutchinson and Kennedy (2008) consider a federation of identical states distributed around a circle, each of them occupying an arc of length one. A continuum of identical polluting firms is distributed uniformly along the length of each state, and the mass of firms in each state is normalized to one. Pollution is transboundary as wind is blowing (here from west to east). The authors complete their story of pollution diffusion by assuming an {\it ad hoc} downwind transfer coefficient for emissions per location. In a first stage, the authors show that, under decentralization, states tend to enforce less stringent environmental standards on firms located close to downwind borders, leading  to excessive interstate pollution in equilibrium, which is the standard free riding result. Second, they examine how the interplay between the federal policy on standards and the state policies on enforcement may restore efficiency.

\smallskip

In this paper, while recognizing at it should be the obvious merits of the above mentioned papers, we depart from the latter environmental federalism 
by modeling transboundary pollution as natural diffusion: instead of assuming \emph{ad hoc} pollution diffusion schemes across space, we use a diffusion equation (parabolic partial differential equation) with and without advection. Our choice is not motivated by the mathematical problem involved, which is, as we will see, extremely tricky, but simply by the fact that air or water pollution are hardly {\it ad hoc} and follow precise physical laws. In the case of air pollution, our modelling follows strictly the so-called ``atmospheric diffusion equation'', intensively used in atmospheric physics, see Seinfeld and Pandis, 2016, equation (18.11), from page 768).
Advection allows to introduce non-homogeneous diffusion across space to account for currents or winds for example, which reinforces the realism of our spatiotemporal  modelling. Notice that, since we move to this framework, the resulting games and associated optimal control problems become not only dynamic but also infinite dimensional. A very important parameter in our frame is the so-called diffusion parameter, that is the diffusion speed of pollutants: pollutants with a low speed do not yield the same short-term dynamics as those with higher speeds simply because they do not lead to the same free riding behaviors. This will be made clear in this paper. Only a dynamic framework like ours can highlight this obvious but neglected property.

For simplicity of exposition, we model the space as a circle.\footnote{{The findings of the paper can probably be extended to other models of space as a sphere or other closed surfaces. This would allow to calibrate the model with spatial pollution data and to perform realistic simulations. In this paper we concentrate on the one dimensional model of circle to clearly see and explain the resulting diffusion patterns.}} The circle is partitioned into several states, which need not be identical, contrarily to the above mentioned static theoretical literature. Each state is run by a a local authority which decides about its investment, production and depollution strategies over time knowing that investment/production generates pollution, and pollution is transboundary.\footnote{Note that we allow for heterogeneity also in the discount and risk aversion parameters of the local authorities, heterogeneity across states may also show up in local governance.} The time horizon is infinite. We solve analytically the induced non-cooperative differential game under decentralization, which is itself a far non-trivial task (see the relation to the literature below).

\smallskip

With the closed-form solution paths in hands, we are able to illustrate a series of implications of the model. First of all, we are able to generate the border effect, that's a specific equilibrium behavior near the borders of the states. In particular, we show that net pollution emission flows are increasing as we approach the borders, with strong asymmetries under advection, and that structural breaks show up at the borders. Beside being consistent with the basic theory of spatial externalities, analogous phenomena have been recently disclosed by Lipscomb and Mobarak (2017) in their empirical study on water pollution in Brazil. Second, we are able to detect new regularities and formulate new testable predictions. This is essentially allowed by the fact that our set-up, and the inherent closed-form solution for the MPE, can accommodate a large set of state heterogeneities, contrary to the related  existing theory. For example, our theory yields testable predictions regarding the evolution of the size of spatial externalities (or in other terms, the extent of inefficiency) when the number of states (or jurisdictions) or their size rises. Also the implications of technological, ecological and cultural discrepancy for free riding over space can be derived. 
Last but not least, definitely much easier than the decentralized equilibrium differential game setting, we also characterize the outcomes of the cooperative equilibrium driven by, say, a federal government. This is done to permanently outline the distance to efficiency of the equilibrium counterpart. At the end of the day, we are not only able to find out a closed-form characterization of the spatiotemporal equilibria but also to disentangle a rich set of spatial patterns taking advantage of the many heterogeneities allowed by our methodology (the most elementary being the asymmetry of players). In particular, we show that our model's predictions are consistent with the typical stylized facts highlighted in the related empirical literature.

\smallskip

Beside applicability to other similar problems in other areas as mentioned repeatedly above, our setting can be used to many other environmental issues as important as those raised by environmental federalism. In particular, our frame is perfectly adapted to analyze the currently hot debate around supranational coordination of environmental policies. Indeed, it is general enough to accommodate the two global levels (federal and supranational). Even more important, as our approach allows for deep geographic discrepancies, it is perfectly suitable to study some of the fundamental questions in the international agenda, in particular those related to the North/South environmental divide and the associated debate on the compensations to be given to the South to reach a global deal. This is out of the scope of the current paper.

\subsection*{Relation to the existing literature}

Of course our paper belongs to the literature of dynamic games in the exploitation of the commons (see, for example, Levhari and Mirman, 1980 or Amir, 1996, 2003). The main novelty resides in the infinite-dimensionality of the problem tackled. Indeed, from the methodological point of view, the abstract spatial externality problem we consider is rewritten as a (non-zero sum) differential game in an infinite dimensional state space.
Only very few papers in the mathematical literature (see, e.g., Nisio, 1998 and 1999,
Baltas et al., 2019, or Kocan et al., 1997), deal with such problems which, however, arise in a natural way.

{The paper de Frutos et al. (2021) deserves a specific mention because it is, to the extent of our knowledge, the unique paper dealing with a problem similar to ours, both from an economic and modeling point of view, as they also look at the strategic implications of transboundary pollution in a system driven by a parabolic equation. Even if the treated problem is similar, there are a series of important differences.
First of all, there is a methodological difference: we do not solve HJB equations as in the aforementioned paper, we rather use a direct method adapted to the structure of the functionals, which allows to better understand the structure of the problem.  
 Second, no uniqueness result is  provided by de Frutos et al. (2021), whereas in our work, due to the aforementioned methodology employed,  we are able to prove the uniqueness of open loop equilibria, as well as its uniqueness as a Markovian equilibrium in the class of  in the class of affine feedbacks.  
Third, 
we do prove the convergence of the transitional dynamics to the stationary steady-state, a point not addressed in Frutos et al (2021). Finally, our paper is admittedly more rooted in the economic literature of spatial externalities, with a specific effort to connect with the most recent empirical literature on transboundary pollution (in particular the inherent border effects).}

Related contributions include the papers by de Frutos and Martin-Herran (2019a, 2019b) but in these articles the continuous space-time model ares not studied: analogous discrete-space models are solved delivering feedback Nash equilibria, which are in turn used to capture the spatial interactions among agents through a truly comprehensive set of carefully designed numerical exercises. Except the few above cited papers, the overwhelming part of the differential pollution games studied do not consider spatial diffusion. This is in particular the case of the huge literature on multi-country dynamic games with a common stock of pollution. See Dockner et al (1993), and Boucekkine et al. (2011) for earlier contributions. In the latter literature, pollution dynamics are driven by elementary ordinary differential equations, so that the spatial dimension is simply metaphorical.

\smallskip

In contrast, the (non-strategic) economic literature on spatiotemporal dynamics is rather substantial after the seminal contribution of Brito (2004). In particular, a number of geographic optimal growth models with capital spatiotemporal dynamics have been devised and studied (see Boucekkine et al, 2013, Fabbri, 2016, and Boucekkine et al., 2019, 2021). Another contribution in the same vein but with constant (though space dependent) saving rates is due to Xepapadeas and Yannacopoulos (2016). In all these papers, capital flows across space following a parabolic partial differential equations. Just like the transboundary problems and for the very same reason, the induced problem is infinite-dimensional but there is no strategic ingredient and therefore no game involved. There also exist a few economic papers dealing with transboundary pollution in infinite dimension but without strategic interactions across space, an excellent representative of this stream being due to Camacho and Perez Barahona (2015).
The mathematical literature counterpart is much richer. For optimal control of systems driven by parabolic differential equation (especially in distributed control case) we can mention for instance Lions (1972), Friedman (1982a, 1982b), Barbu (1993), Li and Yong (1995), Lasiecka and Triggiani (2000),  Troltzsch (2000), and Fabbri et al. (2017). But as outlined above, the dynamic games extensions are very scarce.

\bigskip

In the present paper, we show  that, thanks to the special structure of the problem at hand, we are able to express the equilibrium in explicit form. 
Precisely, we rewrite the objective functionals in a transformed way, which allows us to find directly the optimal strategies for the players, and, at the same time, to give a clear economic insight into the properties extracted along the way. The explicit form of the equilibrium also enables us to
characterize it comprehensively and to illustrate some key economic findings readily through complementary numerical exercises. This ultimately shows how effective the machinery of infinite dimensional optimal control can be in studying such type of problems.


\color{black}


This paper is organized as follows. Section 2 proposes a general formulation of the problem. In Section 3 it is specialized in the case of the pollution free riding problem. Section 4 characterizes the cooperative equilibrium. Section 5  and 6 dig deeper in the concept of border effect alluded to above, combining conceptual and numerical analysis, and finally accounting for a rich variety of inter-state heterogeneities. While the previous sections only consider (pollution) diffusion, Section 7 incorporates advection to clearly highlight the implications of non-homogeneous diffusion across space. Section 8 concludes. All the proofs are reported in the Appendix together with a full explanation of the mathematical setting.

\section{A general framework for multi-agents optimization problems with spatial spillovers/externalities}

We consider a dynamic model for a spatial economy subject to spatial spillovers driven by transboundary dynamics.

Even if generalizations are possible\footnote{One can for instance replicate most of the results to the case of the $2$-dimensional sphere surface $S^2$ and indeed many of them can also be generalized to more abstract contexts.} we limit our attention to the case of the circular spatial support $S^1$:
\[
S^1:=\big\{x\in\R^{2}: \ |x|_{\R^{2}}=1\big\}
\]
that is the simplest spatial model being compact and without boundary and then having the significant advantage, in terms of modeling characteristics, of preserving the global dynamics of the involved variables without absorbing or reflecting boundaries (as in the case of possible models on sub-domains with Dirichlet or Neumann boundary conditions). This is also close to the related economic literature, in particular to the geographical setting of Hutchinson and Kennedy (2008).

As usual, in the following we will often describe $S^1$ as the segment $[0, 2\pi]$ with the identification of the two extreme points $0$ and $2\pi$. We denote by $x$ the generic spatial point and with $t\geq 0$ the continuous time coordinate.


\medskip

The global territory $S^1$ is partitioned into a finite number of national states or states/regions within the same country. Each of them is governed by a local public authority (for instance a national, state or regional government) which only takes into account the welfare of the people living on its own territory. More formally we suppose that there are $N$ open intervals in the circle $M_j\subset S^1$, with $j=1,...,N$, such that
$$
M_j\cap M_h=\emptyset \ \ \ \ \mbox{for} \  h\neq j, \ \ h,j=1,...,N.
$$
These regions can cover the whole space $S^1$ --- up to a finite number of boundary points --- but we  also admit the possibility that some residual part with positive measure of the global territory $M_0:=S^1\setminus\bigcup_{j=1}^N M_j$ is uninhabited (for instance seas/oceans).

Each local authority is in charge of economic/political decisions in its own territory. More precisely the authority $j$ chooses the $k$ (control) variables
$$
\bm a_j(t,x)=\left(a_{1,j}(t,x),a_{2,j}(t,x),\dots,a_{k,j}(t,x)\right))
$$
for all $t\in\R_+$ and $x\in M_j$, obeying to suitable constraints so that they take values in a given subset $A$ of $\R^k$. 
We define $\bm a=(a_1,...,a_k):\R_+\times S^1\to A\subseteq \R^k$ as
$$
\bm a(t,x)=\sum_{j=1}^N \bm a_j(t,x)\mathbf{1}_{M_j}(x)+0\cdot \mathbf{1}_{M_0}(x).
$$


We denote by $\bm p=(p_1,...,p_n):\R_+\times S^1\to \R^n$ the state variables common to all the players.
We assume that the dynamics of $\bm p$ are given by the system of controlled semilinear parabolic equations
 $$\begin{cases}
 \frac{\partial p_\ell}{\partial t}(t,x)=\sigma_\ell(x)\frac{\partial^2p_\ell}{\partial x^2}(t,x)+F\left(x,\frac{\partial p_\ell}{\partial x}(t,x),\bm p(t,x),\bm a(t,x)\right),\\
 p_\ell(0,x)=p_{0,\ell}(x),\end{cases}
  \ \ \ \ \ell=1,...,n.$$
where
$F_\ell:S^1\times \R\times \R^n\times A\to\R$
and $\sigma_\ell: S^1\to (0,\infty)$, for $\ell=1,...,n$.

The utility of the local public authority $j$ takes the following form
$$
J_j^{\bm a_{-j}}\big(\bm p_0;\bm a_j)=\int_0^{\infty} e^{-\rho_j t} \left(\int_{M_j} u_j(x,\bm p(t,x), \bm a_j(t,x)) dx\right)dt,
$$
where
$\bm p_0:=(p_{0,1},...,p_{0,n}):S^1\to \R^n$, is the initial spatial distribution of the state variables,
$u_j: M_j\times \R^n\times A\to \R$ is the instantaneous utility of authority $j\in\{1,...,N\}$, and
$\bm a_{-j}$ denotes the control of all the authorities but $j$.
We suppose that the authorities/players engage in a non-cooperative Nash game.

The abstract setting we have presented above can be specified to cover different contexts where space externalities are at work as explained in the introduction section.\footnote{In the case of epidemic dynamics, the state variables are, as in Colombo et al. 2020, the typical ones in the classical behavioral models (like SIR). Pertinent control variables are typically related to lock-down policies (see, e.g., Alvarez et al, 2020) or health enforcement policies (such as tracing, quarantine etc...). The utility, which may differ from authority to authority, should take account of the trade off between effects on economic and humanitarian issues. Another example of specification is the strategic interaction in the exploitation of spatial moving natural resources. For instance in case of fishing 
Territorial Use Rights in Fisheries (TURFs) gives to different actors the right to fish only in a specific zone while the resource moves among the zones. See for example Bressan and Shen (2009). The state variable represents the density of the resource in different locations (which can be driven by pure diffusion/advection dynamic) while the controls are the strength of extraction/harvesting in different sites. Utility can be a function of the extraction/consumption flow in a given zone and can, possibly, include a pro-preservation attitude of the agents.}

Hereafter $p$ will be the pollution spatial-density while the controls are emissions of pollutants in different locations in zones controlled by different authorities. The utility of the planner will take into account the trade-off between producing (and consuming) and disutility coming from pollution.

%
%


\bigskip

The above framework brings to study a differential game in infinite dimensional spaces. A general theory for such problems is missing (only few papers\footnote{See the papers already mentioned by Nisio, 1998 and 1999, Baltas et al., 2019, or Kocan et al., 1997} concerns this topic and none covers our framework. However, using the tools available for the infinite dimensional analysis and control (see e.g. the books Aliprantis and Border, 2006, Li and Yong, 1995,  Fabbri, Gozzi and Swiech, 2017) it is possible to attack specific problems and provide satisfactory solutions by means of infinite dimensional HJBI equations and/or Forward/Backward systems. In specific cases, like the one we are going to present below more specific solution methods can be implemented, giving rise to explicit expressions of the equilibrium strategies.

\section{Pollution externalities: the non-cooperative game}
\label{se:noncoopgame}
We now specialize our abstract framework to a dynamic general equilibrium model for a spatial economy subject to spatial spillovers driven by transboundary pollution dynamics.



As above,  space will be modelled 
by the circular spatial support $S^1$, denoting by $x$ the generic spatial point and with $t\geq 0$ the continuous time coordinate.
As to demographics, we assume the simplest configuration: one individual per location $x$ at any time $t$, so that aggregate and per capita variables coincide at any location.

At any time $t$ and location $x$ the production of the final good $y(t,x)$ depends on the quantity of input used in the production $i(t,x)\geq 0$ according to a linear\footnote{Generalizations including non-linear production functions are possible. In this case the characterization of the equilibrium, together with several qualitative results, remains possible but it's hard to find an explicit solution to the problem.} production function:
\begin{equation}
\label{prod}
y(t,x)= A(x) \; i(t,x),
\end{equation}
where $A(x)> 1$ is the time-independent exogenous productivity at $x$.\footnote{The production input $i(t,x)$ can be interpreted as a capital good but we do not allow for capital accumulation (or equivalently, we assume full depreciation of capital). This is instrumental to obtaining closed-form solutions while preserving the necessary breadth in the analysis of the environmental free riding problem under scrutiny.} The production activity is polluting and a policy of depollution could be advantageously implemented. At any location, output is produced and used for consumption, input and depollution locally. Denoting by $c(t,x)\geq 0$ and $b(t,x)\geq 0$  consumption and  resources devoted to depollution respectively, we have the following resource constraint equation at any location $x$ and time $t$:
\begin{equation}
\label{res}
c(t,x) + i(t,x) + b(t,x)= y(t,x).
\end{equation}
We further assume that using one unit of input produces one unit of emission flow, while a depollution effort $b(t,x) \geq	 0$ can sequestrate a flow $\eta b(t,x)^\theta$, with $\eta\geq 0$ and $\theta\in(0,1)$, of pollutants (we suppose to have decreasing returns in depollution expenditures). Consequently, net emissions are given by
\[
n(t,x) = i(t,x) - \eta (b(t,x))^\theta.
\]
The spatio-temporal dynamics of the pollution stock are subject to two natural phenomena: a diffusion process which tends to disperse the pollutants across the locations, and a location-specific decay $\delta(x)$. All in all, the evolution of the pollution stock $p(t,x)$ is driven by the following parabolic partial differential equation:
\begin{equation}\label{eq:stateequation-sect2}
\begin{cases}
\displaystyle{\frac{\partial p}{\partial t}(t,x) = \sigma \frac{\partial^2 p}{\partial x^2} (t,x)  - \delta(x) p(t, x) + n(t,x),} \\\\
p(0,x)=p_0(x),  \ \ \ x\in S^1,
\end{cases}
\end{equation}
being $\sigma>0$ the diffusivity coefficient measuring the speed of the spatial diffusion of the pollutants and $p_0(x)$ the initial spatial distribution of pollution. We shall introduce advection in Section 6.\footnote{We could have considered a space-dependent diffusion parameter, $\sigma(x)$. Our solution method is unaffected by such a specification just like the many other space-dependent magnitudes  incorporated.}

\medskip

%

Each local authority is in charge of production, consumption and depollution decisions in its own territory and so the authority $j$ chooses $i(t,x)$, $b(t,x)$, $c(t,x)$ for all $t\in\R_+$ and $x\in M_j$ subject to production and resource constraints (\ref{prod}) and (\ref{res}). In order to emphasize that these decisions only concern the region $j$ we will denote them by $i_j(t,x)$, $b_j(t,x)$ and $c_j(t,x)$, so that their relation with the  functions $i(t,x)$ and $b(t,x)$ appearing in (\ref{eq:stateequation-sect2}) are indeed
\begin{equation}
\label{def:ib-sect2}
\big(i(t,x),b(t,x)\big):=
\begin{cases}
\left(i_j(t,
x), b_j(t,x)\right), \ \ \ \mbox{if} \ x\in M_j,\\
0, \ \ \ \ \ \ \ \ \ \ \ \ \ \ \, \ \ \ \ \ \ \ \ \ \mbox{if} \ x\in M_0.
\end{cases}
\end{equation}
We will also denote by $A_j(x)$ the restriction of $A(x)$ to $M_j$.

The utility of the local public authority $j$ depends, as already mentioned, only on the welfare of the population living in its own territory, calculated using the related preference parameters. It takes the following form
\begin{equation}
\label{F-sect2-pre}
\int_0^{\infty} e^{-\rho_j t} \left(\int_{M_j}\left( \frac{\big(c_j(t,x)\big)^{1- \gamma_j}}{1-\gamma_j} - w_j(x)p(t, x) \right) dx\right)dt,
\end{equation}
where $\rho_j>0$ is the discount factor, $\gamma_j\in(0,1)\cup(1,\infty)$ the inverse of the elasticity of intertemporal substitution and $w_j(x)$ is a measure of the unitary location-specific disutility from pollution. The latter can be roughly interpreted as the measure of environmental awareness at location $x$ in state $j$.  It should be noted that we do not assume that all the inhabitants of territory $j$ share the same environmental awareness, which is somehow more realistic. In contrast, the elasticity of substitution is assumed territory-dependent for simplicity.\footnote{Allowing for parameter $\gamma_j$ to depend on location $x$ does not break down the analytical solution.} Finally, each local authority may have a specific view of time discounting, thus the territory-dependent parameter $\rho_j$.


Observe that, even if the expression above only concerns territory $M_j$, it also depends on the choices of other authorities through the variable $p(t,x)$ because, due to pollution diffusion, its value at points $x\in M_j$ depends on all the past production (and thus emission) decisions taken by all the other players (regional authorities). We will make explicit this fact in the notation through the index ``$-j$'', which stands for ``all the index but $j$''. Moreover, supposing that\footnote{This assumption is required by the production function specification (2.1) for the ratio investment to production to be lower than $1$ everywhere and at any time.} $A(x)>1$ at all locations $x\in S^{1}$,
we use (\ref{prod}) and (\ref{res}) to express $c_j(t,x)$ in terms of $i_j(t,x)$ and $b_j(t,x)$ as\footnote{{In the expression $A_j(x)-1$ (and in similar expressions throughout the manuscript) the symbol $1$ represents  the function (the vector) having value 1 at each $x$.}} $(A_j(x)-1) i_j(t, x)-b_j(t,x)$. Finally,  we can write (\ref{F-sect2-pre}) as
\begin{align}
\label{F-sect2}
&J_j^{(i_{-j},b_{-j})}\big(p_0;(i_j,b_j))\nonumber \vspace{.3cm}\\
&:=\int_0^{\infty} e^{-\rho_j t} \left( \int_{M_j}
\frac{\big((A_j(x)-1) i_j(t, x)-b_j(t,x)\big)^{1- \gamma_j}}{1-\gamma_j}
- w_j(x)p(t, x) dx\right)dt.
\end{align}
We now get to formalize how decentralization works in our setting. We suppose that the authorities/players engage in a non-cooperative Nash game. By construction, the
latter is a differential game where each state authority maximizes the spatiotemporal payoff (\ref{F-sect2}) under the state equation (\ref{eq:stateequation-sect2}) subject to positivity constraints on $i_j$, $b_j$ and $c_j$. 

\smallskip


Throughout the text we will assume that the following hypotheses on the regularity of the data and of the parameters are verified: 
\begin{itemize}
	\item[-]
	$p_0\in L^2(S^1;\R_+)$,  $\delta\in C(S^1;\R_+)$, $v\in C^1(S^1;\R)$;
	\item[-] $A_j\in L^\infty(M_j;\R_+)$ and there exists a constant $l$  such that
	$1<l\leq A_j(x)$ for all $j=1,\dots,N$;
	\item[-] For each $j=1,...,N$, one has  $w_j\in C(M_j;\R_+)$ and $w_j$ be extended to a function $\overline{w_j}\in C(\overline{M_j};\R)$ such that $\overline{w_j}(x)>0$ for each $x\in \overline{M_j}$.
\end{itemize}

\color{black}

{Following Section 4.1, in Dockner et al. (2000), we consider two types of strategies: \emph{open loop} and \emph{closed loop Markovian} strategies. After  defining accordingly the two equilibrium concepts, we prove some important existence and uniqueness results. Compared to the scarce related theoretical literature (including de Frutos et al., 2021), these are the main original contributions of this paper. For clarity, we decompose the exposition in two successive subsections starting with the open loop case.}

\subsection{{Open loop equilibria}}
Let, for $j=1,\dots,N$,
\begin{equation}
\label{eq:Cj}
\begin{split}
\mathcal{A}_j:=&\Bigg\{(i_j,b_j): 	\R_+\times M_j\to \R^2_+ \ s.t. \ \  \int_0^{\infty} e^{-\rho_{j}t}\left(\int_{S^{1}} (i(t,x)^{2}+b(t,x)^2)dx\right)dt<\infty\\  & \ \ \ \ \ \ \ \ \ \ \ \ \ \ \ \ \ \ \ \ \ \ \  \mbox{and} \ \  (A_j(x)-1) i_j(t, x)-b_j(t,x)\geq 0 \ \ \ \ \mbox{for all}  \  (t,x)\in \R_+\times M_j\Bigg\}.
\end{split}
\end{equation}

If we are given, for every $j=1,\dots,N$, an element $(i_j,b_j) \in \mathcal{A}_j$, we obtain
a couple $(i,b):\R_+\times S^1 \to\R^2_{+}$ defined as in \eqref{def:ib-sect2}; in this way,
$$(i,b)\in \mathcal{A}:= \mathcal{A}_1\times ... \times \mathcal{A}_N.$$

\begin{Definition}[Open loop admissible strategies]
	The class of \emph{open loop admissible  strategies} is the set $\mathcal{A}.$
\end{Definition}
Observe that, choosing any $(i,b)\in \mathcal{A}$, the state equation \eqref{eq:stateequation-sect2} has a unique solution\footnote{As explained in the Appendix, solution there has to be intended as mild solution to \eqref{SEi} (with $v\equiv 0$), given by \eqref{eq:Ymild}.} for each $p_0:S^1\to\R_+$ square integrable and the functional \eqref{F-sect2} is well defined and finite (see Boucekkine et al. (2021)).

\begin{Definition}\label{def:Nash}
	Let  $p_0:S^1\to\R_+$ square integrable.
	An \emph{open loop (Nash) equilibrium}  for the game starting at $p_0$ is a family of couples $(i_j^*,b_j^*)_{j=1,...,N}\in \mathcal{A}$
	such that, for all $j=1,\dots,N$,
	$$
	J_j^{(i^*_{-j},b^*_{-j})}\big(p_0;(i_j^*,b^*_j) \big)\geq J_j^{(i^*_{-j},b^*_{-j})}\big(p_0;(i_j,b_j)\big),  \ \ \ \ \forall (i_j,b_j)\in \mathcal{A}_j.
	$$
\end{Definition}


As repeatedly mentioned above, we are able to solve analytically for the open loop (Nash) equilibrium involved. This is displayed in the main theorem of our paper below that we are going to present.
{The core of our approach is the rewriting of the functional \eqref{F-sect2}. In Proposition \ref{pr:oo}, it is shown that,
defining $\alpha_j$ as the solution to the following ODE\footnote{{We have to take \eqref{ODEalpha} with $v\equiv 0$; the case $v\neq 0$ is treated in Section \ref{se:advection}, when advection is introduced.}}
\begin{equation}\label{ODEalpha2-sect2bis}
\displaystyle{\rho_j\alpha_j(x)- \sigma \alpha_j'' (x)+\delta(x)\alpha_j(x)=\widehat {w_j}(x), \ \ \ x\in S^1,}
\end{equation}
where \begin{equation}
\label{eq:defwhatj}
\widehat{w_j}(x):=\begin{cases} w_j(x), \ \ \ \mbox{if} \ x\in M_j,\\
0, \ \ \ \ \ \ \ \ \ \mbox{if} \ x\notin M_j,
\end{cases}
\end{equation}
the functional 
$J_j^{(i_{-j},b_{-j})}\big(p_0;(i_j,b_j))$ 
can  be rewritten as 
\begin{footnotesize}
	\begin{align}\label{pppquaterbis}
	&
	J_j^{(i_{-j},b_{-j})}\big(p_0;(i_j,b_j)\big)\\ &=\int_0^{\infty}e^{-\rho_j s}\left(\int_{M_j}
	\left[\frac{\big((A_j(x)-1) i_j(t,x)-b_j(t,x)\big)^{1- \gamma_j}}{1-\gamma_j} - \alpha_j(x) \left(i_j(t,x)-{\eta}b_j(t,x)^{\theta}\right)
	\right]dx\right) ds\nonumber\\
	&-\int_{S^{1}}p_0(x) \alpha_j(x)dx
	-\sum_{k=1, \, k\neq j}^N \int_0^{\infty}e^{-\rho_j  t}\left( \int_{M_{k}}\alpha_j(x) (i_k(t,x)-\eta b_k(t,x)^{\theta})dx \right)dt.\nonumber
	\end{align}
		\end{footnotesize}
		Several properties of the functions $\alpha_{j}$ are described in Appendix \ref{app:alpha}. In particular, Corollary \ref{prop:max} ensures that the objects defined in the theorem below are well posed.}

We introduce here a comparative statics result related to key functions $\alpha_j$ that will be useful in the discussion.
\begin{Proposition}\label{prop:rhodelta}
	$\alpha_j$ is nonincreasing with respect to space-homogeneous increments of $\rho_j+\delta(\cdot)$. and nondecreasing with respect to space-homogeneous increments of $\widehat w_j(\cdot)$.
\end{Proposition}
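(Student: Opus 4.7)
The plan is to reduce both claims to a single application of the comparison principle for the linear elliptic operator $L_j \varphi := -\sigma \varphi'' + (\rho_j + \delta(x))\varphi$ acting on functions on $S^1$. Because $\rho_j > 0$ and $\delta \geq 0$, the zeroth-order coefficient is bounded below by the positive constant $\rho_j$, uniformly in $x$. I would first observe that any sufficiently regular $\varphi$ with $L_j \varphi \geq 0$ on $S^1$ must be nonnegative: at a minimum point $x_0 \in S^1$ one has $\varphi''(x_0) \geq 0$, so $L_j\varphi(x_0) \leq (\rho_j + \delta(x_0))\varphi(x_0)$, which forces $\varphi(x_0) \geq 0$. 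This comparison principle is precisely of the kind already exploited in Appendix \ref{app:alpha} and underlying Corollary \ref{prop:max}, which also guarantees $\alpha_j \geq 0$ whenever the right-hand side of \eqref{ODEalpha2-sect2bis} is nonnegative.

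For the first claim, I would fix a constant $c \geq 0$ and denote by $\alpha_j^{(1)}$ the solution of \eqref{ODEalpha2-sect2bis} corresponding to the coefficient $\rho_j + \delta(\cdot)$, and by $\alpha_j^{(2)}$ the solution of the analogous equation with coefficient $\rho_j + \delta(\cdot) + c$, the source $\widehat{w_j}$ being unchanged. Since $\widehat{w_j} \geq 0$, Corollary \ref{prop:max} applied to the operator associated with $\alpha_j^{(2)}$ yields $\alpha_j^{(2)} \geq 0$. Subtracting the two equations gives
\[
L_j\bigl(\alpha_j^{(1)} - \alpha_j^{(2)}\bigr) \;=\; c\,\alpha_j^{(2)} \;\geq\; 0,
\]
and the comparison principle then forces $\alpha_j^{(1)} \geq \alpha_j^{(2)}$ pointwise on $S^1$, which is the stated monotonicity.

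For the second claim, I would again take $c \geq 0$ and now denote by $\alpha_j^{(1)}, \alpha_j^{(2)}$ the solutions of \eqref{ODEalpha2-sect2bis} with sources $\widehat{w_j}$ and $\widehat{w_j} + c$ respectively, keeping $\rho_j + \delta(\cdot)$ unchanged. Their difference satisfies
\[
L_j\bigl(\alpha_j^{(2)} - \alpha_j^{(1)}\bigr) \;=\; c \;\geq\; 0,
\]
so the same principle yields $\alpha_j^{(2)} \geq \alpha_j^{(1)}$ pointwise on $S^1$.

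The only potential obstacle is to make sure that the maximum principle is genuinely applicable in the function space in which $\alpha_j$ lives. Since $S^1$ is compact without boundary and the coefficient is bounded below by $\rho_j > 0$, no boundary analysis is needed, and the required comparison principle is the standard one already invoked in Appendix \ref{app:alpha} to derive the positivity and other qualitative features of $\alpha_j$; once that tool is in hand, both statements fall out of the two-line subtraction arguments above.
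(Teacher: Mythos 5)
Your argument is correct, but it takes a genuinely different route from the paper's. The paper (Proposition \ref{prop:rhodelta-app}) proves the first claim by writing $\alpha_j=(\rho_j-\L^{*})^{-1}\widehat{w_j}$ via the resolvent formula \eqref{risol}--\eqref{defalpha} and computing the difference explicitly as $\int_0^{\infty}\bigl(1-e^{-t(h+k)}\bigr)e^{-\rho_j t}e^{t\L^{*}}\widehat{w_j}\,dt$, after which positivity of the semigroup finishes the job in one line; the second claim is delegated to Corollary \ref{cor:pp}. You instead subtract the two ODEs and apply the elliptic comparison principle to the difference, which is the same tool the paper uses one level down (Proposition \ref{prop:mp} is exactly such a minimum-point argument), so nothing new needs to be established. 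Each approach has its advantages: the paper's semigroup identity is an exact formula, so it delivers the two-sided equivalence stated in the appendix version and carries over verbatim to the advection case; your subtraction argument is more elementary and self-contained, needing only the sign of the right-hand side rather than the spectral representation. Two small points worth making explicit in your write-up: (i) the differences $\alpha_j^{(1)}-\alpha_j^{(2)}$ are genuinely $C^2$ on all of $S^1$ (their equations have continuous right-hand sides $c\,\alpha_j^{(2)}$ and $c$ respectively), so the pointwise minimum-principle step is legitimate even though $\alpha_j$ itself is only $W^{2,2}$ and fails to be $C^2$ at $\partial M_j$ --- this is precisely why Proposition \ref{prop:mp} needs its careful case analysis for $\alpha_j$ itself but you do not for the differences; (ii) your argument as written covers only the no-advection operator of \eqref{ODEalpha2-sect2bis}, whereas the appendix proposition includes the drift term; the extension is immediate since $\varphi'(x_0)=0$ at a minimum, but the zeroth-order coefficient then becomes $\rho_j+v'(x)+\delta(x)$ and its positivity must be invoked.
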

\begin{proof}
	See Proposition \ref{prop:rhodelta-app} in the Appendix.
\end{proof}

\color{black}	
\medskip

{We will now state our main theoretical results. We start with existence and uniqueness of the open loop equilibrium of the game.}

{\begin{Theorem}
\label{th:oo-sect2}
There exists a unique open loop equilibrium for the described game. It is time-independent\footnote{Since the equilibrium values $i_j^*$ and $b_j^*$ (and consequently of the variables $c_j^*$, $n_j^*$ and $y^{*}_j$ defined below) are time-independent, we will avoid from now to write the time variable in their expressions.}  and it is given, for $j=1,...,N$, by
the expressions 
\begin{align}\label{iopt-sect2}
 i_j^*(x) =   \alpha_j(x)^{-\frac{1}{\gamma_j}}(A_j(x)-1)
^{\frac{1-\gamma_j}{\gamma_j}}
+\left(\eta\theta\right)^{\frac{1}
{1-\theta}}(A_j(x)-1)^{\frac{\theta}
{1-\theta}},
\end{align}
	\begin{equation}\label{bopt-sect2}
b_j^*(x)= \left[(A_j(x)-1)\eta\theta\right]^{\frac{1}{1-\theta}},
\end{equation}
where $\alpha_j$ is the solution of the ODE \eqref{ODEalpha2-sect2bis}.
The associated welfare of player $j$ at the equilibrium is affine in $p_0$:
$$
v_j(p_0)=-\int_{S^1} \alpha_j(x)p_0(x)dx + q_j,
$$
where \begin{align*}
	q_j&:=\frac{1}{\rho_{j}} \int_{M_j}
	\frac{\big((A_j(x)-1) i^*_j(x)-b^*_j(x)\big)^{1-\gamma_j}}{1-\gamma_j}dx
	-\frac{1}{\rho_{j}}\int_{M_j}\alpha_j(x) (i^*_j(x)-{{\eta}} b^*_j(x)^{\theta})dx \\&
	-\frac{1}{\rho_j}\sum_{k=1, \, k\neq j}^N  \int_{M_k}\alpha_j(x)(i^*_k(x)-{{\eta}} b^*_k(x)^{\theta})dx.
	\end{align*} 
	\end{Theorem}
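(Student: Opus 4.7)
The plan is to exploit the rewriting of the functional given in Proposition \ref{pr:oo} (formula \eqref{pppquaterbis}). Fixing the strategies $(i_{-j}^*,b_{-j}^*)$ of the other players, the last two terms of that rewriting, namely $-\int_{S^{1}}p_0(x)\alpha_j(x)dx$ and $-\sum_{k\neq j}\int_0^\infty e^{-\rho_j t}\int_{M_k}\alpha_j(x)(i_k-\eta b_k^\theta)\,dx\,dt$, do not depend on $(i_j,b_j)$. Hence, maximizing $J_j^{(i^*_{-j},b^*_{-j})}(p_0;\cdot)$ reduces to maximizing
\begin{equation*}
\int_0^\infty e^{-\rho_j s}\int_{M_j} f_j\bigl(x,i_j(s,x),b_j(s,x)\bigr)\,dx\,ds
\end{equation*}
over $(i_j,b_j)\in \mathcal{A}_j$, where
\begin{equation*}
f_j(x,i,b):=\frac{\bigl((A_j(x)-1)i-b\bigr)^{1-\gamma_j}}{1-\gamma_j}-\alpha_j(x)\bigl(i-\eta b^\theta\bigr).
\end{equation*}
Because the discount factor $e^{-\rho_j s}$ is strictly positive and the integrand does not depend on $s$, the problem decouples into a family of \emph{static, pointwise} concave optimizations indexed by $x\in M_j$. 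This immediately explains the time-independence of the equilibrium.

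Next, I would show that for each fixed $x$ the map $(i,b)\mapsto f_j(x,i,b)$ is strictly concave on the relevant admissible set (where $i\geq 0$, $b\geq 0$, $(A_j(x)-1)i-b\geq 0$). A direct computation of the Hessian of $f_j$ gives
\begin{equation*}
\det H=(A_j(x)-1)^2 \cdot g''(c)\cdot \alpha_j(x)\eta\theta(\theta-1)b^{\theta-2}>0,\quad \text{tr}\,H<0,
\end{equation*}
where $c=(A_j(x)-1)i-b$ and $g(c)=c^{1-\gamma_j}/(1-\gamma_j)$; here positivity of $\alpha_j$, ensured by Corollary \ref{prop:max}, is essential. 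Writing the first-order conditions
\begin{equation*}
(A_j(x)-1)\,c^{-\gamma_j}=\alpha_j(x),\qquad c^{-\gamma_j}=\alpha_j(x)\eta\theta\,b^{\theta-1},
\end{equation*}
I would solve: the first equation yields $c_j^*(x)=\alpha_j(x)^{-1/\gamma_j}(A_j(x)-1)^{1/\gamma_j}$; dividing the two equations gives $b_j^*(x)=[(A_j(x)-1)\eta\theta]^{1/(1-\theta)}$; reconstructing $i_j^*=(c_j^*+b_j^*)/(A_j(x)-1)$ recovers \eqref{iopt-sect2}.

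Having produced the candidate optimizer $(i_j^*(x),b_j^*(x))$, I would check admissibility: both quantities are strictly positive (using $A_j(x)\geq l>1$ and the positivity of $\alpha_j$), the consumption $c_j^*(x)$ is strictly positive by construction, and the boundedness of $A_j$, $\alpha_j$ (from its regularity properties proved in Appendix \ref{app:alpha}) yields the square-integrability condition defining $\mathcal{A}_j$. Since the candidate is time-independent, the objective for player $j$ is $\frac{1}{\rho_j}\int_{M_j} f_j(x,i_j^*,b_j^*)\,dx$ plus the two exogenous terms, which gives exactly the formula for $v_j(p_0)$ stated in the theorem. The fact that this construction produces a \emph{best response} to any opponent profile, and in particular to $(i^*_{-j},b^*_{-j})$, shows that the $N$-tuple $(i_j^*,b_j^*)_{j=1,\dots,N}$ is an open loop Nash equilibrium.

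Uniqueness follows from the strict concavity established above: any admissible $(i_j,b_j)$ differing from $(i_j^*,b_j^*)$ on a set of positive $ds\otimes dx$-measure in $\mathbb{R}_+\times M_j$ strictly decreases the objective, so the best response is unique for each player; as the best responses do not depend on the opponents' choices (the decoupling is genuine, because the $\alpha_j$ encodes all cross-player information through the ODE \eqref{ODEalpha2-sect2bis}), the equilibrium itself is unique. The main technical obstacle is the interchange of integration/maximization that underlies the reduction to the pointwise problem; I would justify it by the standard argument that an admissible $(i_j,b_j)$ gives a pointwise value of $f_j$ bounded above by the pointwise maximum $f_j(x,i_j^*(x),b_j^*(x))$, which is itself integrable against $e^{-\rho_j s}\mathbf{1}_{M_j}(x)\,ds\,dx$ thanks to the regularity of $A_j$, $\alpha_j$, together with the condition $A_j\geq l>1$ keeping all denominators uniformly away from zero.
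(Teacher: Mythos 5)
Your proposal is correct and follows essentially the same route as the paper's proof: it relies on the rewriting of the functional in Proposition \ref{pr:oo}, observes that the terms involving $p_0$ and the opponents' strategies are irrelevant to player $j$'s optimization, and reduces the problem to a pointwise strictly concave maximization solved by the first-order conditions \eqref{pppquinter}. The only difference is that you spell out details the paper leaves implicit (the Hessian computation, admissibility of the candidate, and the justification of the pointwise reduction), which is a welcome but not substantively different elaboration.
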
}
\begin{proof}
	{See Appendix \ref{app:proofsect3}.}
\end{proof}


In Section \ref{se:geohetero} we will discuss in more detail the dependence of the strategies chosen by the players on the parameters of the model, looking at the effect of different sorts of spatial heterogeneity.
Here we comment briefly on the shape of the players' welfare. One can see (see Remark \ref{rm:deponwanddelta}) that the welfare of player $j$ is an increasing function of pollution rate of decay, $\delta$, and a decreasing function of the pollution disutility parameter, $w_j$. These properties seem obvious if one looks at the objective functionals (for a given set of strategies, it is clearly true that $J_{j}$ has this kind of behavior) but they are \emph{a priori} not straightforward in the context of a Markov (Nash) equilibrium (see again Remark \ref{rm:deponwanddelta}). For instance the effect of increasing $\delta$ is not obvious at first glance because on the one hand it pushes for a direct quicker decay of local and global pollution but, on the other hand, it may lead all players to pollute more (see Proposition \ref{prop:rhodelta}). As for  the effect of $w_j$ on players different from $j$, we observe that decreasing $w_j$ pushes player $j$ to produce and pollute more. Indeed, one can observe that {$i^{*}$} is a decreasing function of $\alpha_j$, which is itself nondecreasing in $w_j$ (see Proposition C.6 in the Appendix for a formal proof). As a result, the initial drop of awareness in territory $j$ can  eventually cause the welfare of other territories to decrease via diffusive pollution.

The equilibrium found has some more specific features, which will be clear in the application sections of this paper. It is important to already single out the role of functions $\alpha_j(x)$. As in the non-game theoretic model in Boucekkine et al. (2021), these functions have a precise meaning. Notice that in the absence of diffusion, one has simply $\alpha_j(x)=\frac{w_j(x)}{\rho_j+\delta(x)}$. In such a case, $\alpha_j(x)$ corresponds simply to the discounted sum of pollution disutilities generated by a unit of pollutant initially located at location $x$. It can be readily shown, as in Boucekkine et al. (2021), that this interpretation is still correct when allowing for diffusion. In sum, $\alpha_j(x)$ represents the total disutility cost over time and space of a unit of pollutant initially located at $x$. This allows to interpret the ME investment decision in a much more appealing way, and will drive many of the findings we will highlight in the application sections.

As a consequence of our theorem, the equilibrium net pollution flow, the production and the consumption in the region $j$ are given respectively by
\begin{equation}\label{netemission}
n_j^*(x) = i_j^*(x) - \eta (b_j^*(x))^\theta,
\end{equation}
\begin{equation}\label{production}
y_j^*(x) = A(x) \; i_j^*(x),
\end{equation}
\begin{equation}\label{consumption}
c_j^*(x)= y_j^*(x) - i_j^*(x)  - b_j^*(x).
\end{equation}


Once we have the set of strategies chosen by the players, we can use them in the state equation (\ref{eq:stateequation-sect2}) to get the corresponding dynamics of the pollution space distribution, ultimately inferring the following asymptotic result.
\begin{Proposition}
\label{pr:pinfty-Nash}
In the context described by Theorem \ref{th:oo-sect2} the spatial  optimal pollution density $p^{*}(t,\cdot)$ converges to the long-run pollution profile $p^{*}_\infty$ which is given by the unique solution of the following ODE:
\[
\sigma p'' (x)  = \delta(x) p(x) - n^*(x), \ \ \ \ x\in S^{1}.
\]
\end{Proposition}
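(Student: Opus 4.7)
The plan is to exploit the time-independence of the equilibrium net-emission profile $n^{*}$ established in Theorem~\ref{th:oo-sect2}, which reduces the closed-loop state equation \eqref{eq:stateequation-sect2} to an autonomous linear parabolic problem with a time-independent source. Convergence to $p^{*}_{\infty}$ then follows from the spectral properties of the associated generator on $L^{2}(S^{1})$.

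First, substituting $(i^{*},b^{*})$ in the dynamics yields
$$
\partial_{t}p(t,x)=\sigma\partial_{xx}p(t,x)-\delta(x)p(t,x)+n^{*}(x),\qquad p(0,\cdot)=p_{0},
$$
which, in the mild-solution framework used in the Appendix, reads $p(t)=e^{-tA}p_{0}+\int_{0}^{t}e^{-(t-s)A}n^{*}\,ds$ with $A:=-\sigma\partial_{xx}+\delta I$ on its natural domain in $L^{2}(S^{1})$. Next I would establish existence and uniqueness of the stationary profile $p^{*}_{\infty}$ solving $-\sigma p''+\delta p=n^{*}$. This is precisely the type of elliptic ODE already analysed for the functions $\alpha_{j}$ in \eqref{ODEalpha2-sect2bis} (with $\rho_{j}=0$ and source $n^{*}$ in place of $\widehat{w_{j}}$), so the same machinery of Appendix~\ref{app:alpha} applies: $A$ is self-adjoint on $L^{2}(S^{1})$, has compact resolvent, and its Dirichlet form $\langle Au,u\rangle=\sigma\|u'\|^{2}_{L^{2}}+\int_{S^{1}}\delta u^{2}\,dx$ is strictly positive for $u\neq 0$ as soon as $\delta\not\equiv 0$ (if $u'\equiv 0$ then $u$ is constant and $\int_{S^{1}}\delta u^{2}=u^{2}\int_{S^{1}}\delta>0$). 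Hence $A$ is invertible with a positive spectral gap and $p^{*}_{\infty}:=A^{-1}n^{*}$ is the unique stationary solution, classical by elliptic regularity.

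Decomposing $p(t,x)=p^{*}_{\infty}(x)+q(t,x)$, the residual $q$ satisfies the homogeneous problem $\partial_{t}q=-Aq$ with initial datum $q(0,\cdot)=p_{0}-p^{*}_{\infty}$. The semigroup $e^{-tA}$ generated by $-A$ is a strongly continuous contraction on $L^{2}(S^{1})$, and its discrete spectrum $0<\lambda_{1}\le\lambda_{2}\le\cdots$ (compact resolvent plus strict positivity) furnishes the exponential decay
$$
\|q(t,\cdot)\|_{L^{2}(S^{1})}\le e^{-\lambda_{1}t}\,\|p_{0}-p^{*}_{\infty}\|_{L^{2}(S^{1})},
$$
so that $p(t,\cdot)\to p^{*}_{\infty}$ in $L^{2}(S^{1})$ as $t\to\infty$, which is the assertion. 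A standard parabolic bootstrap (using Hilbert--Schmidt kernel estimates on $e^{-tA}$ for $t>0$) upgrades this to uniform convergence on $S^{1}$ if desired.

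The main obstacle I anticipate is the borderline case $\delta\equiv 0$ permitted by the hypothesis $\delta\in C(S^{1};\R_{+})$: there $A$ has a one-dimensional kernel made of constants and no spectral gap, so the convergence argument must be recast by projecting onto the orthogonal complement of the kernel and checking the Fredholm compatibility $\int_{S^{1}}n^{*}\,dx=0$ (which generically fails in our model, in which case the spatial mean of pollution grows linearly and no stationary profile exists). Under the natural economic assumption that pollution decay is active somewhere on $S^{1}$, i.e.\ $\delta\not\equiv 0$, coercivity of $A$ and the semigroup argument above close the proof with no further difficulty.
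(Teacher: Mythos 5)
Your argument is correct and reaches the stated conclusion, but it takes a genuinely different route from the paper's proof (Corollary \ref{cor:Pinfty} in the Appendix). The paper also starts from the time-independence of $n^{*}$ and the mild-solution formula, but it then imposes the pointwise non-degeneracy condition \eqref{assf} (which for $v\equiv 0$ reads $\delta(x)\geq\bar\delta>0$ everywhere), splits off the factor $e^{-\bar\delta t}$ by writing $\mathcal{L}=\bar{\mathcal{L}}-\bar\delta$ with $\bar{\mathcal{L}}$ dissipative, and concludes from contractivity of $e^{t\bar{\mathcal{L}}}$ that $P^{*}(t)\to(\bar\delta-\bar{\mathcal{L}})^{-1}N^{*}$. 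You instead obtain the exponential rate from the spectral gap of the self-adjoint operator $A=-\sigma\partial_{xx}+\delta$, using coercivity of its Dirichlet form together with compactness of the resolvent. What your approach buys is a strictly weaker hypothesis in the diffusion-only case: you need only $\delta\not\equiv 0$ rather than $\inf_{S^{1}}\delta>0$, and your discussion of the degenerate case $\delta\equiv 0$ is a genuinely useful observation, since the Proposition as stated in the main text only assumes $\delta\in C(S^{1};\R_{+})$ and would indeed fail there (the spatial mean of pollution grows linearly because $\int_{S^{1}}n^{*}\,dx>0$). What the paper's approach buys is robustness to non-self-adjoint perturbations: the dissipativity splitting carries over verbatim to the advection operator $\mathcal{L}=\sigma\partial_{xx}+v\partial_{x}-\delta$ under \eqref{assf}, whereas your variational argument relies on self-adjointness and would need to be replaced by, e.g., a Lumer--Phillips or numerical-range estimate in that setting. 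One cosmetic point: to conclude that the spectral gap $\lambda_{1}$ is strictly positive you should say explicitly that the infimum of the Rayleigh quotient is attained (by Rellich compactness) at an eigenfunction, so that positive definiteness of the form upgrades to a uniform lower bound; you gesture at this with ``compact resolvent plus strict positivity,'' and it is worth one more sentence.
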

\begin{proof}
See Corollary \ref{cor:Pinfty}.
\end{proof}

We shall use the expressions given above in our numerical exercises in Sections \ref{se:border} and \ref{se:geohetero} where we focus on the spatial distributions of the relevant variable (long-term distribution in the case of pollution).

\subsection{{Closed loop Markov equilibria}} Let
$$\mathcal{A}^{M}_j:=\Big\{(\phi_j,\psi_j):\R_+\times M_j\times L^2(S^1;\R)\to \R_+^2 \ \ \mbox{measurable}\Big\}, \ \ \ \forall j=1,..,N,$$
where $L^2(S^1;\R)$ denotes the space of real valued square integrable functions defined on $S^{1}$,
and let $\mathcal{A}^{M}:=\mathcal{A}^{M}_1\times...\times \mathcal{A}^{M}_N.$
Given $(\phi,\psi)\in\mathcal{A}^{M}$ and $j=1,...,N,$ we denote
$$
(\phi_{-j},\psi_{-j}):= \big((\phi_1,\psi_1), ..., (\phi_{j-1},\psi_{j-1}), (\phi_{j+1},\psi_{j+1}), ..., (\phi_N,\psi_N)\big).
$$

\begin{Definition}\label{adm:markov}
		Let $p_0: S^{1}\to \R_+$ be square integrable.
		An  \emph{admissible Markovian strategy} starting at $p_0$ is an $N$-ple of couples  $(\phi,\psi)=\big((\phi_j,\psi_j)\big)_{j=1,...,N}\in\mathcal{A}^{M}$
		such that the equation\footnote{Notice that (\ref{SEmarkov}) is a PDE with nonlocal terms in the $x$ variable, as $\phi(t,x,\cdot)$ and  $\psi(t,x,\cdot)$ depend, in general, on the structure of $p(t,\cdot)$ on $S^1$, not only on its value at $(t,x)$. The notion of solution we employ is the same as before, i.e., mild solution of the abstract ODE in infinite dimension.}
		\begin{equation}\label{SEmarkov}
		\begin{cases}
		\displaystyle{ p_t(t,x) =  \sigma  p_{xx}(t,x) - \delta(x) p(t, x) +  \phi(t,
			x,p(t,\cdot))-{{\eta}} \psi(t,x,p(t,\cdot))^{\theta},
		}\\\\
		p(0,x)=p_0(x), 
		\end{cases}
		\end{equation}
		where
		$$
		\left(\phi(t,x,p(t,\cdot)), \psi(t,x,p(t,\cdot))\right):=(0,0)\mathbf{1}_{M_0}(x)+\sum_{j=1}^N \mathbf{1}_{M_j}(x) \left(\phi_j(t,
		x,p(t,\cdot)), \psi_j(t,x,p(t,\cdot))\right),$$
		admits a unique solution $p^{(\phi,\psi)}$, and the functional
		$J_j(p_0; (\phi,\psi))$ defined by substituting
		in \eqref{F-sect2}
		$$
		(i_j(t,x),b_j(t,x)):=\Big(\phi_j(t,x,p^{(\phi,\psi)}(t,\cdot), \psi_j(t,x,p^{(\phi,\psi)}(t,\cdot)\Big), \ \ \ p(t,x)=p^{(\phi,\psi)}(t,x),
		$$
		is well defined\footnote{{In particular, for every $(t,x)\in M_j$ and  $j=1,...,N$, the function $$(t,x)\mapsto
		(A_j(x)-1) \phi_j(t,x,p^{(\phi,\psi)}(t,\cdot))-\psi_j(t,x,p^{(\phi,\psi)}(t,\cdot))$$ is nonnegative.}}
 for each $j=1,...,N$.		We denote the class of admissible  Markovian strategies  starting at $p_0$ by $\mathcal{A}^{M}(p_0)$.
\end{Definition}
{We further define the concept of Markovian equilibrium in our context.}
{\begin{Definition}[Markovian equilibrium]\label{def:CLE}
Let  $p_0:S^1\to\R_+$ be square integrable.
		An admissible Markovian strategy $(\phi^*,\psi^*)\in \mathcal{A}^{M}(p_{0})$ is called a \emph{Markovian (Nash) equilibrium} starting at $p_0$ if, for each $j=1,..,N$, it holds
		$$
		J_j(p_0,(\phi^*,\psi^*))\geq J_j\big(p_0; (i_j,b_j), (\phi^*_{-j},\psi^*_{-j})\big),
		$$
		for every $(i_j,b_j)\in\mathcal{A}_j^{\phi^{*}}(p_{0})$, where
		$$
		\mathcal{A}_j^{\phi^{*}}(p_{0})= \{(i_{j},b_{j})\in\mathcal{A}_{j}: \ ((i_j,b_j), (\phi^*_{-j},\psi^*_{-j}))\in\mathcal{A}^{M}(p_0)\},
		$$
		 i.e., if the couple $(i^*_{j},b^*_{j})$ defined by 
		$$i_{j}^{*}(t,x):=\phi^*_{j}(t,x,p^{(\phi^{*},\psi^*)}(t,\cdot)), \ \ \ b_{j}^{*}(t,x):=\psi^*_{j}(t,x,p^{(\phi^{*},\psi^*)}(t,\cdot))$$ 
		is an optimal control for the optimal  control problem
		\begin{small}
	\begin{align*}
	&\sup_{(i_j,b_j)\in\mathcal{A}_j^{\phi^{*}}(p_{0})}\Bigg\{\int_0^{\infty}e^{-\rho_j s}\left(\int_{M_j}
	\left[\frac{\big((A_j(x)-1) i_j(t,x)-b_j(t,x)\big)^{1- \gamma_j}}{1-\gamma_j} - \alpha_j(x) \left(i_j(t,x)-{\eta}b_j(t,x)^{\theta}\right)
	\right]dx\right) ds\\
	&-\int_{S^{1}}p_0(x) \alpha_j(x)dx-
	\sum_{k=1, \, k\neq j}^N \int_0^{\infty}e^{-\rho_j  t}
	\left( \int_{M_{k}}\alpha_j(x) (\phi^{*}_k(t,x,p^{(i_{j}, b_{j})}(t,\cdot))-\eta \psi^{*}_k(t,x,p^{(i_{j}, b_{j})}(t,\cdot))^{\theta})dx \right)dt\Bigg\},
	\end{align*}
		\end{small}
		where $p^{(i_{j}, b_{j})}$ is the solution to 
		\begin{equation*}\label{eq:stateequation-sect2bis}
\begin{cases}
\displaystyle{\frac{\partial p}{\partial t}(t,x) = \sigma \frac{\partial^2 p}{\partial x^2} (t,x)  - \delta(x) p(t, x) +(i_{j}(t,x)-\eta b_{j}(t,x)^{\theta})\mathbf{1}_{M_{j}}(x)}\\ \ \ \ \ \ \ \ \ \ \ \ \ \ \ \displaystyle{+\sum_{k=1, \, k\neq j}^N  (\phi^{*}_k(t,x,p(t,\cdot))-\eta \psi^{*}_k(t,x,p(t,\cdot))^{\theta})\mathbf{1}_{M_{k}}(x),}\medskip\\
p(0,x)=p_0(x),  \ \ \ x\in S^1.
\end{cases}
\end{equation*}
{A Markovian equilibrium is called \emph{stationary} if  the strategies are time independent.}
\end{Definition}}

Since the open loop equilibrium characterized in Theorem \ref{th:oo-sect2} is time-independent, it is also a (trivial) stationary Markovian equilibrium. Given that it does not depend neither on the initial condition $p_0$, it is subgame perfect (see Dockner et al., p. 105). Therefore, it is a Markov Perfect Equilibrium (MPE). We underline this property in the following proposition.
\begin{Proposition}
Suppose that hypotheses of  Theorem \ref{th:oo-sect2} are satisfied. Then the unique open-loop equilibrium characterized in Theorem \ref{th:oo-sect2} is also a stationary MPE.
\end{Proposition}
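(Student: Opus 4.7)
The plan is to lift the open-loop equilibrium of Theorem \ref{th:oo-sect2} to a trivial Markovian strategy, and then exploit the fact that a best response to a constant-in-$p$ Markov profile is an open-loop best response. Concretely, I would define, for every $j=1,\dots,N$,
\begin{equation*}
\phi_j^*(t,x,p(\cdot)) := i_j^*(x), \qquad \psi_j^*(t,x,p(\cdot)) := b_j^*(x),
\end{equation*}
where $(i_j^*,b_j^*)$ are the functions given by \eqref{iopt-sect2}--\eqref{bopt-sect2}. These are manifestly measurable and independent of $t$ and $p$. Admissibility in the sense of Definition \ref{adm:markov} follows because plugging $(\phi^*,\psi^*)$ into \eqref{SEmarkov} yields exactly the linear parabolic equation driven by the time-independent forcing $i^*(x)-\eta b^*(x)^\theta$, which is precisely the closed-loop state equation already shown to admit a unique mild solution for any square integrable $p_0$ (it is the same equation that underlies Proposition \ref{pr:pinfty-Nash}); the constraint $(A_j(x)-1)i_j^*(x)-b_j^*(x)\geq 0$ was verified in Theorem \ref{th:oo-sect2}, so $(\phi^*,\psi^*)\in \mathcal{A}^M(p_0)$.

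Next I would check the Nash property from Definition \ref{def:CLE}. Fix $j$ and let player $j$ deviate to an arbitrary $(i_j,b_j)\in \mathcal{A}_j^{\phi^*}(p_0)$. Because $\phi_k^*,\psi_k^*$ do not depend on $p$ for $k\neq j$, the trajectory $p^{(i_j,b_j)}$ is the mild solution of \eqref{eq:stateequation-sect2} with the other players' controls frozen at $(i_{-j}^*,b_{-j}^*)$. Hence the best-response optimization in Definition \ref{def:CLE} coincides verbatim with the one solved in Proposition \ref{pr:oo} along the rewriting \eqref{pppquaterbis}, so maximizing over $\mathcal{A}_j^{\phi^*}(p_0)$ is the same as maximizing $J_j^{(i^*_{-j},b^*_{-j})}(p_0;\cdot)$ over $\mathcal{A}_j$. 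By Theorem \ref{th:oo-sect2} the unique maximizer is $(i_j^*,b_j^*)$, which yields the required equilibrium inequality.

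Finally, to conclude that $(\phi^*,\psi^*)$ is subgame perfect, I would invoke the fact that the expressions \eqref{iopt-sect2}--\eqref{bopt-sect2} do not involve the initial datum: the function $\alpha_j$ is determined by the elliptic ODE \eqref{ODEalpha2-sect2bis}, whose data $\widehat{w_j},\delta,\rho_j,\sigma$ are independent of $p_0$. Therefore the argument above applies verbatim with any square integrable starting profile $\tilde p_0$ replacing $p_0$, which is the definition of a subgame perfect (Markov Perfect) equilibrium in the sense recalled from Dockner et al.\ (2000, p.\ 105).

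The only delicate step is the reduction of the Markov best-response problem to the open-loop one: one must verify that when the opponents' feedbacks are constant in the state, the class $\mathcal{A}_j^{\phi^*}(p_0)$ really coincides with $\mathcal{A}_j$ (up to the well-posedness of \eqref{SEmarkov} along the deviation), so that Theorem \ref{th:oo-sect2} applies without any additional Markovian restriction. This is essentially bookkeeping given the linearity of the dynamics in the controls and the mild-solution framework, but it is where one must be careful not to over- or under-constrain the admissible deviations.
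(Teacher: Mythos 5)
Your proposal is correct and follows essentially the same route as the paper: the paper's own (very terse) justification is precisely that the time-independence of the open-loop equilibrium makes it a trivial stationary Markovian strategy, and its independence from $p_0$ gives subgame perfection via Dockner et al. (2000, p.\ 105). You have simply spelled out the reduction of the Markov best-response problem to the open-loop one (including the identification $\mathcal{A}_j^{\phi^*}(p_0)=\mathcal{A}_j$ when opponents play state-independent feedbacks), which the paper leaves implicit.
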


{{Finally, we study uniqueness of this   equilibrium in the class of affine strategies.
We consider the  case with $\eta =0$ (no depollution) as in de Frutos et al (2021), and we prove uniqueness in the latter class. Accordingly, we  drop the variable $b$ in the functional, as well as in the definition of $\mathcal{A}_j$,  of Markovian strategies, and of Markovian equilibrium.  Let us introduce the spaces
$H=L^{2}(S^{1};\R)$, $H_{j}=L^{2}(M_j;\R)$ and denote by  $H^{+}$ and $H^{+}_{j}$ their positive cones. Moreover, let us denote by 
 $L_{+}(H,H_{j})$ the space of linear bounded positivity preserving --- i.e., mapping nonnegative functions of $H^{+}$ into nonnegative functions of $H^{+}_j$ --- operators from $H$ to $H_{j}$. 
We consider  the following class of stationary affine  Markovian strategies:
\begin{align*}
\mathcal{D} = &\Big\{(\phi_j)_{j=1,...,N}\in\mathcal{A}^{M}(p_{0}) \ \forall p_{0}\in  H^{+}:\\
& \  \ \ \ \phi_{j}(t,x,p)=\phi_{j}(x,p)= (\Phi_{j}p)(x)+\iota_{j}(x),     \ \Phi_{j}\in  L_{+}(H,H_{j}), \ \iota_{j}\in H_{j}^{+}
\Big\}.
\end{align*}
\begin{Theorem}
\label{th:uniqueness}
Let $\eta=0$. The (unique) open loop equilibrium of Theorem \ref{th:oo-sect2} is also the unique {Markovian equilibrium (and then, \emph{a fortiori}, the unique MPE)}  in the class $\mathcal{D}$.
\end{Theorem}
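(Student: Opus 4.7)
The plan is to prove that any candidate affine Markov equilibrium $(\phi_j^*)_{j=1,\ldots,N} \in \mathcal{D}$ with $\phi_j^*(x,p) = (\Phi_j^* p)(x) + \iota_j^*(x)$ must satisfy $\Phi_j^* \equiv 0$ for every $j$, after which the equilibrium reduces to the open-loop one of Theorem \ref{th:oo-sect2}. Fix player $j$ and freeze the rivals' affine feedbacks $(\phi_k^*)_{k \neq j}$. Substituting them into the state equation \eqref{eq:stateequation-sect2} (with $\eta=0$) yields the linear affine evolution
\[
p_t = L_j^{cl} p + \mathbf{1}_{M_j} i_j + \iota_j^{cl}, \qquad L_j^{cl} := L + \sum_{k\neq j}\mathbf{1}_{M_k}\Phi_k^*, \quad \iota_j^{cl} := \sum_{k\neq j}\mathbf{1}_{M_k}\iota_k^*,
\]
with $L := \sigma \partial_{xx} - \delta$; since each $\Phi_k^* \in L_+(H,H_k)$ is bounded and positivity-preserving, $L_j^{cl}$ is a positive, bounded perturbation of $L$, and player $j$ faces a linear-convex control problem.

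Introduce the modified dual function $\tilde\alpha_j \in H^+$ as the unique solution of the operator equation
\[
\rho_j \tilde\alpha_j - (L_j^{cl})^* \tilde\alpha_j = \widehat{w_j},
\]
which generalizes \eqref{ODEalpha2-sect2bis} to the perturbed generator. Mimicking the Laplace-transform/adjoint-semigroup manipulations of Proposition \ref{pr:oo} that produced \eqref{pppquaterbis}, player $j$'s payoff rewrites, for any admissible open-loop deviation $i_j$, as
\[
J_j = \int_0^\infty e^{-\rho_j t}\int_{M_j}\left[\frac{\bigl((A_j(x)-1)\,i_j(t,x)\bigr)^{1-\gamma_j}}{1-\gamma_j} - \tilde\alpha_j(x)\,i_j(t,x)\right]dx\,dt + C_j(p_0),
\]
with $C_j(p_0)$ independent of $i_j$. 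The integrand is strictly concave in $i_j \geq 0$, so pointwise maximization yields the unique, time- and $p_0$-independent best response
\[
i_j^*(x) = \tilde\alpha_j(x)^{-1/\gamma_j}\,(A_j(x)-1)^{(1-\gamma_j)/\gamma_j}.
\]

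Now impose the Markovian equilibrium condition: $\phi_j^*(x, p^{(\phi^*)}(t, \cdot)) = i_j^*(x)$ must hold for every $t \geq 0$ and every $p_0 \in H^+$. Evaluating at $t = 0$ gives $\Phi_j^* p_0 + \iota_j^* = i_j^*$ for all $p_0 \in H^+$; taking $p_0 \equiv 0$ yields $\iota_j^* = i_j^*$, whence $\Phi_j^* p_0 = 0$ for every $p_0 \in H^+$. Since $H = H^+ - H^+$, linearity forces $\Phi_j^* \equiv 0$. Repeating for each $j$, all $\Phi_j^* = 0$; consequently $L_j^{cl} = L$ and $\tilde\alpha_j = \alpha_j$, so $\iota_j^* = i_j^*$ coincides with \eqref{iopt-sect2} evaluated at $\eta = 0$, i.e., the open-loop equilibrium. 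The principal technical obstacle is the rigorous construction of $\tilde\alpha_j$ for the non-self-adjoint, positively perturbed generator $L_j^{cl}$: one must verify that $\rho_j$ lies in the resolvent set of $(L_j^{cl})^*$, that $\tilde\alpha_j$ is nonnegative (exploiting positivity of $\Phi_k^*$ and of the associated semigroup), and that the Laplace-transform/adjoint manipulations underlying Proposition \ref{pr:oo} remain justified in this perturbed setting; standard Hille--Yosida theory for bounded perturbations and positive-semigroup arguments should suffice, after which the remaining steps are purely algebraic.
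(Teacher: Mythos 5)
Your reduction of player $j$'s problem --- freezing the rivals' affine feedbacks, absorbing them into a perturbed generator $L_j^{cl}$, defining the dual function $\tilde\alpha_j=(\rho_j-(L_j^{cl})^*)^{-1}\widehat{w_j}$, and concluding that the unique best response is the time- and $p_0$-independent function $\tilde\alpha_j^{-1/\gamma_j}(A_j-1)^{(1-\gamma_j)/\gamma_j}$ --- is exactly the (rather terse) ``similar arguments'' step of the paper's proof, made concrete. Where you genuinely diverge is in how you kill $\Phi_j^*$: the paper fixes $p_0=0$, observes that $t\mapsto \Phi_j^*\,p^{(\phi^*)}(t,\cdot)$ must then be constant along the trajectory $\int_0^t e^{(t-s)\L}h\,ds$, and invokes Lemma \ref{lemma:phi}, which exploits the strict positivity of the heat semigroup together with the positivity-preserving property of $\Phi_j^*$; you instead evaluate the equilibrium identity at $t=0$ and vary $p_0$ over all of $H^+$, which yields $\Phi_j^*p_0=\mathrm{const}$ on $H^+$ and hence $\Phi_j^*\equiv 0$ without any semigroup-positivity lemma. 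Your route is more elementary (it only uses continuity of $t\mapsto p^{(\phi^*)}(t,\cdot)$ to upgrade the a.e.-in-$t$ optimality to $t=0$), at the price of using the equilibrium property at \emph{every} initial datum $p_0\in H^+$ rather than only at $p_0=0$; since the paper itself invokes subgame perfection to ``choose $p_0=0$'', both arguments rest on the same reading of the equilibrium notion, so nothing is lost. The one substantive caveat --- which you flag but do not resolve, and which the paper also glosses over --- is well-posedness of $\tilde\alpha_j$: the bounded positive perturbation $\sum_{k\neq j}\mathbf{1}_{M_k}\Phi_k^*$ can shift the spectrum of $\L$ to the right, so $\rho_j$ need not lie in the resolvent set of $L_j^{cl}$ for an arbitrary candidate in $\mathcal{D}$; one must either restrict $\mathcal{D}$ accordingly or argue that admissibility (finiteness of the $J_j$'s in Definition \ref{adm:markov}) already forces the required exponential bound on the closed-loop semigroup, and also check $\tilde\alpha_j\geq\alpha_j\geq\kappa_j>0$ (via positivity of the Dyson expansion) so that $\tilde\alpha_j^{-1/\gamma_j}$ defines an admissible control. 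Modulo that shared gap, your proof is correct.
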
}}
\begin{proof}
See Appendix \ref{app:proofsect3}.
\end{proof}
{The theorem above completes our theoretical results in the non-cooperative case. Summing up and given the uniqueness of open loop equilibria established in Theorem \ref{th:oo-sect2},  our analysis provides with: (i) the equivalence between open loop and Markovian equilibria in the class $\mathcal{D}$; (ii)  the uniqueness of Markovian equilibria in the class $\mathcal{D}$. As outlined in the Introduction, these are useful clear-cut contributions to the related mathematical and economic literature, which is particularly thin. We now move to a counterpart cooperative game.}

\section{Pollution externalities: the cooperative game}
\label{se:coopgame}
{Here we investigate the cooperative equilibrium understood as the counterpart social planner problem of the setting above. This will allow us to, as we wrote in the introduction, to measure the distance to efficiency of the non-cooperative equilibrium outcomes. It's worth noting here that this section is a reminiscence of the control non-strategic problem addressed in Boucekkine et al. (2021).}
To design the cooperative solution, we suppose that all the players cooperate to maximize a social welfare function defined as the sum of the utility of all the states/territories of $S^1$:
\begin{equation}
\label{CG-sect3}
\sum_{j=1}^N \int_0^{\infty} \left(\int_{M_j} e^{-\rho_j t} \left( \frac{\big((A_j(x)-1) i_j(t, x)-b_j(t,x)\big)^{1- \gamma_j}}{1-\gamma_j} - w_j(x)p(t, x) \right) dx\right)dt.
\end{equation}
We limit our attention to the case where the preference parameters are the same for all players: $\rho_j=\rho$ and $\gamma_j=\gamma$ for every $j=1,...,N$. In this case the functional (\ref{CG-sect3}) can be rewritten as
\begin{equation}
\label{CG-sect3-bis}
\int_0^{\infty} e^{-\rho t} \left(\int_{S^1}\left( \frac{\big((A(x)-1) i(t, x)-b(t,x)\big)^{1- \gamma}}{1-\gamma} - w(x)p(t, x) \right) dx\right)dt
\end{equation}
where {$w=\sum_{j=1}^{M} \mathbf{1}_{M_{j}} w_{j}$}. It is the standard Benthamite social functional since we suppose that at each location there is exactly one inhabitant.

{The optimal control problem of maximizing (\ref{CG-sect3-bis}) subject to (\ref{eq:stateequation-sect2}) and the positivity constraints on $i$, $b$ and $c$ can be explicitly solved with the same technique; in this case, the core of the solution is represented by the function $\underline{\alpha}$ solution of the ODE
\begin{equation}
\label{ODEalphatris-sect3}
\displaystyle{\rho\underline{\alpha}(x)- \sigma  \underline{\alpha}''(x)+\delta(x)\underline{\alpha}(x)={w}(x), \ \ \ x\in S^1.}
\end{equation}  
The solution is  described in the following theorem, which proof proceeds along the same line as the one of Theorem \ref{th:oo-sect2}\footnote{{Notice that we deal in this case with just a Markovian optimal control problem. So, considering open loop or closed loop Markovian strategies is equivalent (see Dockner et al. (2000), Ch.\,4)}.}.}
\begin{Theorem}
\label{tt-sect3}
{There exists a unique  equilibrium for the described cooperation game. It is  time-independent and is  given by
\begin{align}\label{iopt3-sect3}
 \underline i^*(x)
&= \left[\underline{\alpha}(x)^{-\frac{1}{\gamma}}(A(x)-1)
^{\frac{1-\gamma}{\gamma}}
+\left(\eta\theta\right)^{\frac{1}
{1-\theta}}(A(x)-1)^{\frac{\theta}
{1-\theta}}\right]
\end{align}
\begin{equation}\label{bopt3-sect3}
 \underline b^*(x)= \big[(A(x)-1)\eta\theta\big]^{\frac{1}{1-\theta}},
\end{equation}
where
$\underline{\alpha}$ is the solution of
\eqref{ODEalphatris-sect3}.
The corresponding welfare is
$$
\underline{v}(p_0):=-\int_{S^1}\underline{\alpha}(x)p_0(x)dx + \underline{q},
$$
where
\begin{align*}
	\underline{q}&:=\frac{1}{\rho} \int_{S^1}\left[
	\frac{\big(({A}(x)-1) \underline{i}^*(x)-\underline{b}^*(x)\big)^{1-\gamma}}{1-\gamma}- \underline{\alpha}(x) (\underline{i}^*(x)-{\eta} \underline{b}^*(x)^{\theta})\right] dx.
	\end{align*}
}
\end{Theorem}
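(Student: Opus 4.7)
The plan is to mimic the proof of Theorem \ref{th:oo-sect2}, with the simplification that a single social welfare criterion replaces the system of Nash inequalities, removing the need for any fixed-point argument. The core idea is to use $\underline{\alpha}$ as a dual function, rewriting the functional \eqref{CG-sect3-bis} so that the state variable $p$ is eliminated and the optimization reduces to a pointwise static problem at each $x\in S^1$.

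Concretely, the analogue of the rewriting established in Proposition \ref{pr:oo} for the non-cooperative case is the duality identity
\begin{equation*}
\int_0^{\infty}\! e^{-\rho t}\!\int_{S^1}\! w(x)p(t,x)\,dx\,dt = \int_{S^1}\! p_0(x)\underline{\alpha}(x)\,dx + \int_0^{\infty}\! e^{-\rho t}\!\int_{S^1}\! \underline{\alpha}(x)\big(i(t,x)-\eta b(t,x)^{\theta}\big)dx\,dt,
\end{equation*}
obtained by pairing the mild solution of \eqref{eq:stateequation-sect2} with $\underline{\alpha}$ in $L^{2}(S^1)$, using the ODE \eqref{ODEalphatris-sect3} and the fact that $S^1$ has no boundary (which cancels the boundary terms from the spatial integration by parts). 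Substituting into \eqref{CG-sect3-bis}, the functional becomes
\begin{equation*}
\int_0^{\infty}\! e^{-\rho t}\!\int_{S^1}\!\left[\frac{\big((A(x)-1)i(t,x)-b(t,x)\big)^{1-\gamma}}{1-\gamma} - \underline{\alpha}(x)\big(i(t,x)-\eta b(t,x)^{\theta}\big)\right]dx\,dt - \int_{S^1}\! p_0(x)\underline{\alpha}(x)\,dx,
\end{equation*}
in which the controls enter only through a local, time-invariant integrand.

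The maximization therefore splits into a family of strictly concave pointwise problems on $\{(i,b)\geq 0:\ (A(x)-1)i-b\geq 0\}$: concavity in $(i,b)$ of the utility piece follows from its form as the composition of $c\mapsto c^{1-\gamma}/(1-\gamma)$ with the affine map $c=(A(x)-1)i-b$, while the term $\underline{\alpha}(x)\eta b^{\theta}$ is strictly concave in $b$ thanks to $\theta\in(0,1)$ and the positivity of $\underline{\alpha}$ guaranteed by Corollary \ref{prop:max}. Writing the interior first-order conditions and equating the two resulting expressions for $\big((A(x)-1)i-b\big)^{-\gamma}$ yields
\begin{equation*}
b^{\,1-\theta}=(A(x)-1)\eta\theta,\qquad (A(x)-1)i-b=\underline{\alpha}(x)^{-1/\gamma}(A(x)-1)^{1/\gamma},
\end{equation*}
which invert to the explicit formulas \eqref{iopt3-sect3}--\eqref{bopt3-sect3}; nonnegativity of $c^{*}=(A(x)-1)i^{*}-b^{*}$ is automatic from the second equation and the strict positivity of $\underline{\alpha}$.

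The main technical hurdle is the rigorous justification of the duality identity and the integrability of the rewritten functional along arbitrary admissible trajectories; these are exactly the same mild-solution estimates that support Theorem \ref{th:oo-sect2} and transfer verbatim to the present setting. Uniqueness of the maximizer is then immediate from strict pointwise concavity. Finally, inserting $(i^{*},b^{*})$ into the transformed functional, the time-independence of the controls collapses the time integral to $1/\rho$, producing the displayed formula for $\underline{q}$ and thereby the stated expression for $\underline{v}(p_0)$.
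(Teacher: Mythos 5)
Your proposal is correct and follows essentially the same route as the paper, which explicitly states that Theorem \ref{tt-sect3} is proved "along the same line as" Theorem \ref{th:oo-sect2}: rewrite the functional via the dual function $\underline{\alpha}$ solving \eqref{ODEalphatris-sect3} to eliminate the state variable, then solve the resulting strictly concave pointwise static problem by first-order conditions, recovering \eqref{iopt3-sect3}--\eqref{bopt3-sect3} and the affine welfare expression.
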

{As for the Nash equilibrium case, the variables  $\underline{n}^*$, $\underline{c}^*$, and $\underline{y}^*$ are similarly defined.}
A counterpart of the asymptotic result given in Proposition \ref{pr:pinfty-Nash} is given, in the cooperative context, by the following proposition.
\begin{Proposition}
\label{pr:pinfty-coop}
In the context described by Theorem \ref{tt-sect3} the {optimal} spatial pollution density {$\underline p^{*}(t,\cdot)$} converges to the long-run pollution profile {$\underline{p}^{*}_\infty$} which is given by the unique solution of the following elliptic equation:
\[
\sigma \underline{p}''(x)  = \delta(x) \underline{p}(x) - \underline{n}^*(x),
\]
where
$\underline{n}^*(x) = \underline{i}^*(x) - \eta (\underline{b}^*(x))^\theta $ (and $ \underline{b}^*$ and $\underline{i}^*$ are defined in Theorem \ref{tt-sect3}).
\end{Proposition}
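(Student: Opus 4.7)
The plan is to mimic the argument used for Proposition \ref{pr:pinfty-Nash} (which, per the excerpt, is obtained from Corollary \ref{cor:Pinfty} in the Appendix) with the only adaptation being that the strategy profile is now the cooperative one from Theorem \ref{tt-sect3} instead of the Nash one. Since Theorem \ref{tt-sect3} delivers \emph{time-independent} optimal controls $\underline i^*(x)$ and $\underline b^*(x)$, the whole long-run analysis collapses to the study of a linear parabolic PDE with autonomous source.

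First, I would substitute $\underline i^*$ and $\underline b^*$ into the state equation \eqref{eq:stateequation-sect2}, yielding
\[
\underline p_t(t,x)=\sigma\, \underline p_{xx}(t,x)-\delta(x)\underline p(t,x)+\underline n^*(x),\qquad \underline p(0,\cdot)=p_0,
\]
with $\underline n^*(x)=\underline i^*(x)-\eta\,\underline b^*(x)^{\theta}\in L^2(S^1)$ because $\underline i^*$ and $\underline b^*$ inherit the essential boundedness of $A$ on $S^{1}$.

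Second, I would recast this PDE as an abstract Cauchy problem $\underline p'(t)=\mathcal{A}\,\underline p(t)+\underline n^*$ in $H=L^2(S^1;\R)$, where $\mathcal{A}u:=\sigma u''-\delta(\cdot)u$ with its natural domain $H^2(S^1)$ (the circular topology encodes periodic boundary conditions). Under the standing assumptions ($\sigma>0$, $\delta\in C(S^1;\R_+)$), $\mathcal{A}$ is a self-adjoint operator with compact resolvent and strictly negative spectrum, so it generates an analytic, exponentially stable semigroup $e^{t\mathcal{A}}$ on $H$. Consequently, $\mathcal{A}^{-1}$ is a bounded operator, and the mild solution
\[
\underline p(t)=e^{t\mathcal{A}}p_0+\int_0^t e^{(t-s)\mathcal{A}}\underline n^*\,ds
\]
converges in $H$, as $t\to\infty$, to the unique fixed point $\underline p^*_\infty:=-\mathcal{A}^{-1}\underline n^*$. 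The identity $\mathcal{A}\underline p^*_\infty+\underline n^*=0$ is precisely the elliptic equation $\sigma(\underline p^*_\infty)''=\delta(\cdot)\underline p^*_\infty-\underline n^*$ stated in the proposition, and uniqueness of the elliptic solution is a direct consequence of the invertibility of $\mathcal{A}$.

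The step I anticipate requiring the most care is the exponential stability of $e^{t\mathcal{A}}$ when $\delta$ is only nonnegative and could vanish on a subset of $S^1$. This is exactly the point already settled in the appendix for the non-cooperative version (Corollary \ref{cor:Pinfty}), which uses the same operator $\mathcal{A}$. Since only the autonomous source term changes (from $n^*$ in the Nash game to $\underline n^*$ in the cooperative game), the proof is obtained by quoting Corollary \ref{cor:Pinfty} verbatim with $\underline n^*$ in place of $n^*$ and $(\underline i^*,\underline b^*)$ from Theorem \ref{tt-sect3} in place of the Nash strategies from Theorem \ref{th:oo-sect2}.
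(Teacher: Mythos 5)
Your proposal is correct and follows essentially the same route as the paper, which proves this proposition (and its Nash counterpart, Proposition \ref{pr:pinfty-Nash}) by plugging the time-independent optimal controls into the state equation and invoking the semigroup argument of Corollary \ref{cor:Pinfty}, yielding $\underline p^*_\infty$ as the solution of $\L \underline p^*_\infty + \underline n^* = 0$. One caution: your claim that $\mathcal{A}=\sigma\partial_{xx}-\delta(\cdot)$ has strictly negative spectrum is false under the standing assumption $\delta\in C(S^1;\R_+)$ alone (take $\delta\equiv 0$, for which constants lie in the kernel); Corollary \ref{cor:Pinfty} settles the exponential stability only under the additional hypothesis \eqref{assf} (here $\delta(x)\geq\bar\delta>0$), which must be tacitly carried over to this proposition as well.
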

Not surprisingly, the Nash equilibrium described in Theorem \ref{th:oo-sect2} is markedly different from the social optimum described above and it is suboptimal in terms of the chosen social target. The suboptimality is of course driven by the spatial pollution externality: the local authority does not completely internalize the damages of the emissions produced in its territory, part of it leaves
its territory and harms instead the welfare of other territories, especially the closest. Interestingly enough, our comprehensive modelling of pollution diffusion allows for an accurate appraisal of this externality. A crucial parameter in this respect is parameter $\sigma$: when $\sigma$ diminishes, the diffusion is slower. When $\sigma$ is equal to $0$, the spatial dynamics vanish. In this case the model is spatially degenerate in the sense that there is no interaction among the regional economies, resulting in the degenerate pointwise maximization of the functional
\[
\int_0^{\infty} e^{-\rho t} \left( \frac{\big((A(x)-1) i(t, x)-b(t,x)\big)^{1- \gamma}}{1-\gamma} - w(x)p(t, x) \right) dt
\]
for any fixed $x\in S^1$ subject to
\[
\displaystyle{\frac{\partial p}{\partial t}(t,x) =  - \delta(x) p(t, x) +  i(t, x)-\eta b(t,x)^{\theta}}.
\]
The solution of this maximization problem is given by the same value of $b$ given in (\ref{bopt-sect2}) and
\[
i(x)= \left(\frac{w(x)}{\rho+\delta(x)}\right)^{-\frac{1}{\gamma}}(A(x)-1) ^{\frac{1-\gamma}{\gamma}} +\left(\eta\theta\right)^{\frac{1}
{1-\theta}}(A(x)-1)^{\frac{\theta}{1-\theta}}.
\]
In such a case, since pollution remains in the territories where it is produced, no spatial externality arises. Accordingly, the cooperative  and the non-cooperative equilibria should coincide when $\sigma=0$.
Actually Theorem \ref{tt-sect3} tells us a bit more: the no-diffusion non-cooperative case is conceptually close to the cooperative case with any given diffusion speed. Indeed, it can be shown that the two coincide in the particular case where $\delta$ and $w$ are constant, in terms of chosen $i$, $b$ and $c$. This intriguing property is disclosed in the proposition below.	

\begin{Proposition}
\label{pr:twobenchmarks}
Suppose that $\delta$ and $w$ are constant in space.
Then the strategies $b_{j,0}^*(x)$ and $i_{j,0}^*(x)$ of player $j$ in the Nash equilibrium described in Theorem \ref{th:oo-sect2} when $\sigma=0$ are the same as the equilibrium strategies $\underline{b}_j^*(x)$ and $\underline{i}_j^*(x)$ of the cooperative case described in Theorem \ref{tt-sect3}, for every $\sigma\ge 0$.
\end{Proposition}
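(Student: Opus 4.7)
The plan is to exploit the closed-form expressions for the equilibrium strategies given by Theorems \ref{th:oo-sect2} and \ref{tt-sect3} and to reduce the whole statement to a pointwise identity between the costate-like functions $\alpha_j$ (with $\sigma=0$) and $\underline\alpha$ (with arbitrary $\sigma \geq 0$) on $M_j$. As a first observation, the expressions \eqref{bopt-sect2} and \eqref{bopt3-sect3} for $b_{j,0}^*$ and $\underline{b}_j^*$ depend only on $A_j(x)$, $\eta$, and $\theta$ and not on the respective costates, so the equality of the $b$-components is immediate. The comparison of \eqref{iopt-sect2} and \eqref{iopt3-sect3}, recalling that the cooperative setting forces $\rho_j=\rho$ and $\gamma_j=\gamma$, therefore reduces to proving that $\alpha_j(x)=\underline\alpha(x)$ for every $x\in M_j$.

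Next I would handle the cooperative side. With $\delta$ and $w$ constant on $S^1$, the elliptic ODE \eqref{ODEalphatris-sect3} reads $(\rho+\delta)\underline\alpha(x)-\sigma\underline\alpha''(x)=w$, which plainly admits the constant $\underline\alpha(x)\equiv w/(\rho+\delta)$ as a solution. By the uniqueness of solutions of this linear elliptic equation on the circle --- invoking the invertibility of the operator $\rho+\delta-\sigma\partial_{xx}$ on periodic functions, which underlies the well-posedness results already used in the paper --- this constant is in fact the unique solution, \emph{independently of the value of} $\sigma\ge 0$.

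Then I would handle the non-cooperative side with $\sigma=0$. In that limit equation \eqref{ODEalpha2-sect2bis} degenerates into the pointwise algebraic identity $(\rho+\delta)\alpha_j(x)=\widehat{w}_j(x)$, so $\alpha_j(x)=w/(\rho+\delta)$ for $x\in M_j$ (and $\alpha_j(x)=0$ for $x\notin M_j$, but only the values on $M_j$ enter the strategy of player $j$). Combining this with the previous step yields $\alpha_j\equiv\underline\alpha$ on $M_j$. Plugging this common value into \eqref{iopt-sect2} and \eqref{iopt3-sect3} shows that $i_{j,0}^*(x)=\underline{i}_j^*(x)$ on $M_j$, which together with the trivial identity of the $b$-components concludes the argument.

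Strictly speaking there is no hard step here: the substantive content of the proposition is the observation that spatial homogeneity of the disutility source $w$ and of the decay $\delta$ forces $\underline\alpha$ to be constant on $S^1$, so diffusion does no work on the cooperative costate; and simultaneously the $\sigma=0$ Nash costate is locally determined by the very same algebraic relation. The only point worth flagging is the uniqueness of the constant solution to the cooperative elliptic ODE on the circle, which is standard since $\rho+\delta>0$ makes $\rho+\delta-\sigma\partial_{xx}$ invertible on periodic functions.
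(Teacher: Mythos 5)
Your proposal is correct and follows essentially the same route as the paper: the appendix proof (Proposition \ref{cor:duebanchmark}) likewise reduces everything to checking that $\underline{\alpha}\equiv w/(\rho+\delta)$ solves \eqref{ODEalphatris-sect3} for every $\sigma\ge 0$ and that $\alpha_j$ takes the same value on $M_j$ when $\sigma=0$, by direct substitution into the two ODEs. Your write-up is in fact slightly more careful than the paper's one-line verification (you make explicit the reduction from strategies to costates, the uniqueness argument, and the degenerate algebraic nature of the $\sigma=0$ equation, which the paper only flags in a footnote), and you avoid the sign typo $w^o/(\rho-\delta^o)$ appearing in the paper's statement.
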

\begin{proof}
See Proposition \ref{cor:duebanchmark} in the Appendix.
\end{proof}

\section{Border Effects}
\label{se:border}

We now perform a series of numerical exercises to uncover the main features of the long-term equilibrium spatial distributions. This will allow us to visualize at glance the shape of the border effects associated with the free riding problem under decentralization. Essentially, we will make clear that the shapes generated are quite consistent with the predictions of the basic (static) theory of spatial externalities, particularly in what concerns the specific equilibrium behavior near the borders of the states: net pollution flows will be shown to be larger as the borders are approached, and  structural breaks emerge at the borders. In this section, we consider that all the states are geographically identical in terms of various parameters of the model, except possibly for the size while the next section will dig deeper in the implications of other spatial heterogeneities.

We shall report at the same time the induced distributions for the cooperative game, which will make the border effects even more striking. We start by showing the situation of two territories while, in a second step, we theoretically address another implication of the basic theory of spatial externalities, that is the evolution of the size of spatial externalities (or in other terms, the extent of inefficiency) when the number of states (or jurisdictions) rises.

\subsection{The shape of border effects}
We concentrate here our attention on cases where all the parameters are constant in space. Accordingly, the unique geographic discrepancies result from the partition in decentralized territories (in the Nash case), that is in the existence of borders and possible differences in territory size.


We will compare the Nash equilibrium described in Theorem \ref{th:oo-sect2} with the benchmark described in Section \ref{se:coopgame}.
In particular, since, as announced above,  $\delta$ and $w$ are constant in space, we are always in the context of Proposition \ref{pr:twobenchmarks} so that the cooperative benchmark with diffusion of Theorem \ref{tt-sect3} and the non-diffusion benchmark are always equivalent. To lighten the notation we will omit the index $j$.


\smallskip

\subsubsection{Symmetric two-region cases}
We start with the case where territories have equal size. In the common economic language, this amounts to studying the symmetric two-region case.

\begin{MyFigure}[H]
\centering
\includegraphics[width=\linewidth]{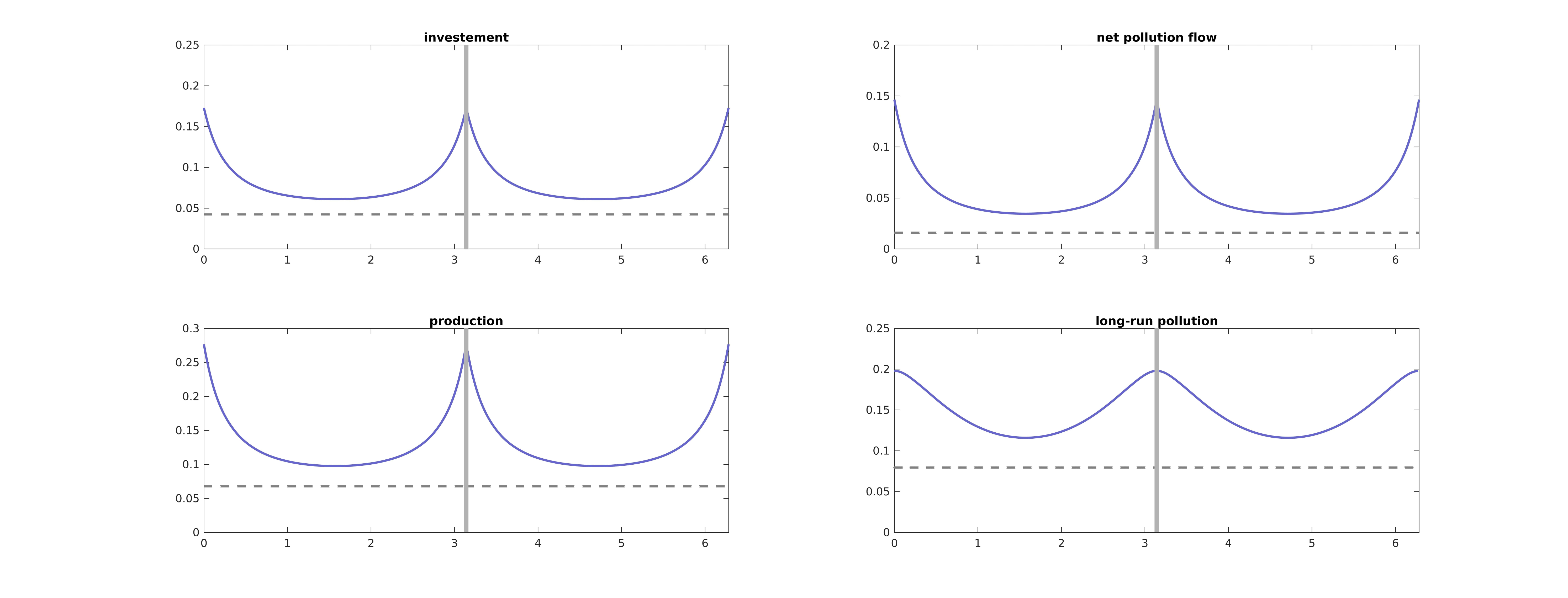}
\caption{\footnotesize
The case of two players each controlling one half of the circle. All the parameters are space independent: the productivity factor $A$ is $1.6$, the discount $\rho$ is $0.03$, the diffusivity  $\sigma$ is $0.5$, the natural decay of the pollution $\delta$ is $0.2$, the weight of pollution in the utility parameter $w$ is $1$, the intertemporal substitution parameter $\gamma$ is $0.5$, the efficiency of depollution expenditure factor $\phi$ is $0.2$ while the scale returns of depollution activity $\theta$ is $0.4$. The continuous (and blue) lines represent the Nash equilibrium while the dashed (and gray) lines reproduce the cooperative benchmark or, equivalently, the non-diffusion benchmark.}
\label{fig:2pl-tuttoconst-sect4}
\end{MyFigure}

In Figure \ref{fig:2pl-tuttoconst-sect4} each local authority controls one half of the circle and then it is interested only in utility of its region (parameters' values are given in the caption of the figure). The continuous (and blue) lines represent the values of the variables in various locations at the Nash equilibrium while the dashed (and gray) lines reproduce the cooperative benchmark or equivalently the non-diffusion benchmark. We represent four variables: the optimal investment $i^*$, the optimal net pollution $n^*$, the optimal production $y^*$,  and the long-run optimal pollution profile $p^*_\infty$. The latter is obtained in particular thanks to the representation of the solution in series given in Appendix \ref{secapp:Fouries}. We do not represent in the figure the depollution effort at the equilibrium since, given the particularly simple situation (all the parameters are constant in space), it is constant over space. For the same reasons we can also observe that the qualitative behavior of net emissions is exactly the same as that of investment and of production.

The first evident element in the distribution of investment (and then in those of production and net pollution) are the big differences among locations of a same region. In particular we can observe that investment and economic activity are particularly strong near the borders of each region. This \emph{border effect} is due to the spatial structure of the externalities: the negative effects of the emissions on the utility are less and less internalized by the local authority as the location gets closer to the border. That's because a greater part of the pollutants in these locations will flow into another territory and has therefore to be managed by another local authority. In the symmetric two-region case with positive diffusivity that we have here, the emission at the boundary points are immediately equally shared by the two territories while the pollutants coming from a far interior point remain in the short run mostly in their ``native'' region. The source of the externality inefficiency can be well visualized looking at the long-run distribution of  pollution: indeed the concentration of pollution at the boundary is much less pronounced than the corresponding peak of input because a significant part of the pollutants leaves the locations where they are originally produced.

\begin{MyFigure}[H]
\centering

\includegraphics[width=\linewidth]{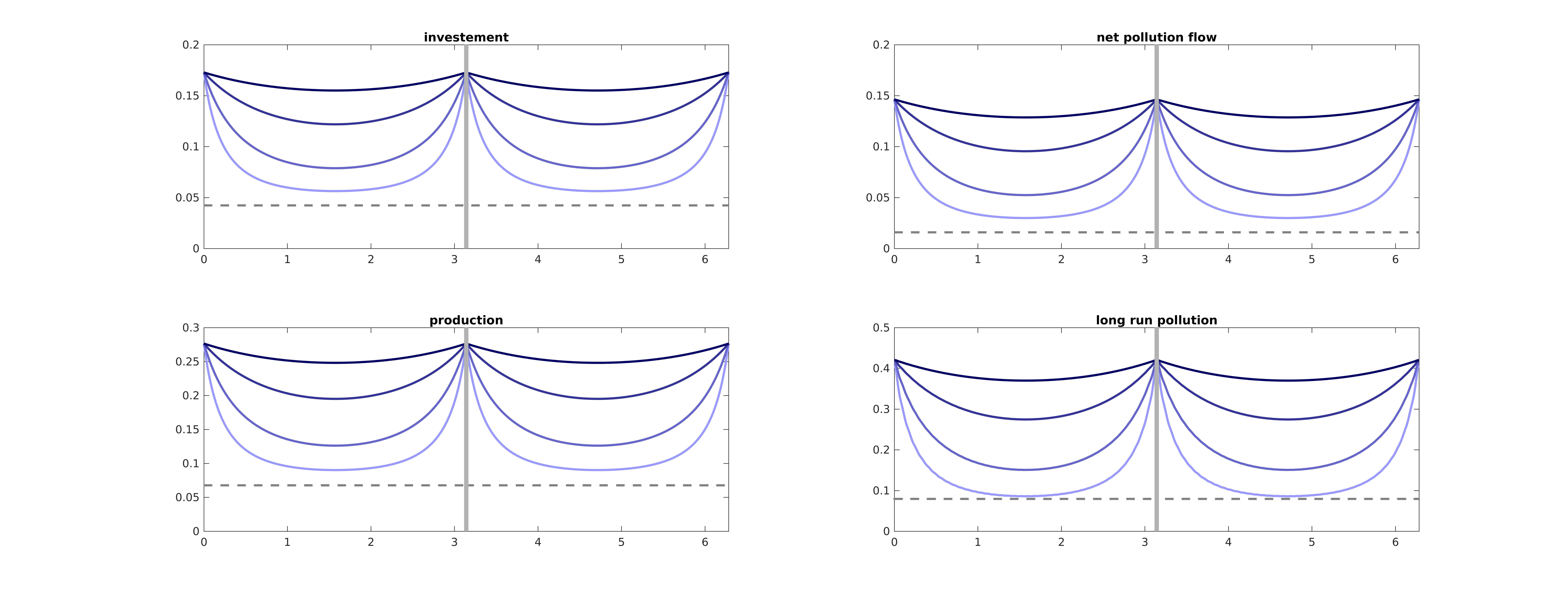}
\caption{\footnotesize
The case of two players controlling one half of the circle varying the diffusivity coefficient. The values of the parameters (all constant over the whole space) are the same as in Figure \ref{fig:2pl-tuttoconst-sect4} except the values of diffusivity $\sigma$ which takes now four values: $0.4$ (the lightest line), $0.8$, $1.6$, and $3.2$ (the darkest line). Continuous (and colored) lines represent the Nash equilibrium while the dashed (and gray) lines are the cooperative benchmark or, equivalently, the non-diffusion benchmark.}
\label{fig:2pl-variarediff-sect4}
\end{MyFigure}

In Figure \ref{fig:2pl-variarediff-sect4} we dig a little deeper in the mechanisms at work and we see what happens when we vary the coefficient $\sigma$, the pollution diffusion speed. We have again the same symmetric two-countries situation as in Figure \ref{fig:2pl-tuttoconst-sect4} and we represent with a colored continuous (respectively, gray dashed) line the variables at the Nash equilibrium for various values of the parameter $\sigma$ (respectively, at the benchmark). We choose four possible values for $\sigma$: $0.4$ (the lightest line), $0.8$, $1.6$, and $3.2$ (the darkest line), all the other parameters are the same as in Figure \ref{fig:2pl-tuttoconst-sect4} (so the variables' values at the benchmark are the same).


Not surprising, the lower $\sigma$, the more the system tends to mimic the behaviour of the $0$-diffusion benchmark.
Conversely, the higher the value of $\sigma$, the faster pollution disseminates across locations. For $\sigma$ big enough, the situation in each location is similar to the situation we have at the boundary since after a short while the generated emissions have an almost equal probability of being in the territory of both authorities. In other words, the higher the value of $\sigma$, the lesser \emph{future} negative effects on the utility are internalized by the local authority and, thus, the higher the chosen level of input, production and emissions in internal points. This mechanism highlights the intertemporal role of the parameter $\sigma$ that will be emphasized even more just below.


The two limits of the equilibrium profile of $i_j^*$ when $\sigma\to 0^+$ and $\sigma\to +\infty$ can be computed explicitly and they are (see Proposition \ref{pr:limitsigmato0toinfty} in the Appendix), for the general case specified in Theorem \ref{th:oo-sect2},
\begin{equation}\label{io}
i_j^{*,0}(x)
= \left ( \frac{w_j(x)}{\rho_j+\delta(x)} \right )^{-\frac{1}{\gamma_j}}(A_j(x)-1)
^{\frac{1-\gamma_j}{\gamma_j}}
+\left(\eta\theta\right)^{\frac{1}
{1-\theta}}(A_j(x)-1)^{\frac{\theta}
{1-\theta}},
\end{equation}
and
\begin{equation}\label{iinfty}
i_j^{*,\infty}(x)
= \left ( \frac{\int_{S^1}w_j(x)dx}{\int_{S^1}(\rho_j+\delta(x))dx} \right )^{-\frac{1}{\gamma_j}}(A_j(x)-1)
^{\frac{1-\gamma_j}{\gamma_j}}
+\left(\eta\theta\right)^{\frac{1}
{1-\theta}}(A_j(x)-1)^{\frac{\theta}
{1-\theta}}.
\end{equation}
Notice that
the latter expression depends on the space location $x$ only through $A(x)$. In the  case of 2 symmetric agents and spatial constants parameters, one gets the two following spatial-independent expressions
\begin{equation}\label{iobis}
i^{*,0} = \left ( \frac{w}{\rho+\delta} \right )^{-\frac{1}{\gamma}}(A-1)^{\frac{1-\gamma}{\gamma}} +\left(\eta\theta\right)^{\frac{1} {1-\theta}}(A-1)^{\frac{\theta} {1-\theta}},
\end{equation}
and
\begin{equation}\label{iinftybis}
i^{*,\infty}= \left ( \frac{1}{2} \frac{w}{\rho+\delta} \right )^{-\frac{1}{\gamma}}(A-1)
^{\frac{1-\gamma}{\gamma}}
+\left(\eta\theta\right)^{\frac{1}
{1-\theta}}(A-1)^{\frac{\theta}
{1-\theta}}.
\end{equation}

One can readily check that consistently with the intepretations given above, $i^{*,\infty}> i^{*,0}$.


\smallskip
\subsubsection{Non-symmetric two-region cases}
We now move to asymmetric two-region cases. In Figure \ref{fig:2pl-unquarto3quarti-sect4} we represent a situation where the parameters are  space-independent. Thus, heterogenous behaviours across space (if any) are entirely due the existence of borders (and to the subsequent differences in the way agents value utility of people living in different locations). All the parameters are the same as in Figure \ref{fig:2pl-tuttoconst-sect4} but here the dimensions of the regions governed by the two authorities are different: the first controls three fourths of the circle while the second only governs on a fourth of the space. The effect of the new division of the territory relative to the situation of Figure \ref{fig:2pl-tuttoconst-sect4} is neat: the authority with the larger territory internalizes more the effect of its emissions because it anticipates that it will have a substantial part of them back to its territory in the future. For the very opposite reason, the authority which has a fourth of the circle is less affected by its own emissions, leading to production and pollution levels larger than in Figure \ref{fig:2pl-tuttoconst-sect4} (and \emph{a fortiori} than the player controlling three fourths of the circle). In other words, a major implication of our ``circular" and geographically homogeneous world is: the bigger the country, the cleaner! This is a quite interesting result. Of course, by construction, if intrinsic geographic heterogeneity is added (in technology or ecology for example), the latter result might be reversed. But it is important to visualize the benchmark result with only differences in size and no further geographic discrepancy.

\begin{MyFigure}[H]
\centering
\includegraphics[width=\linewidth]{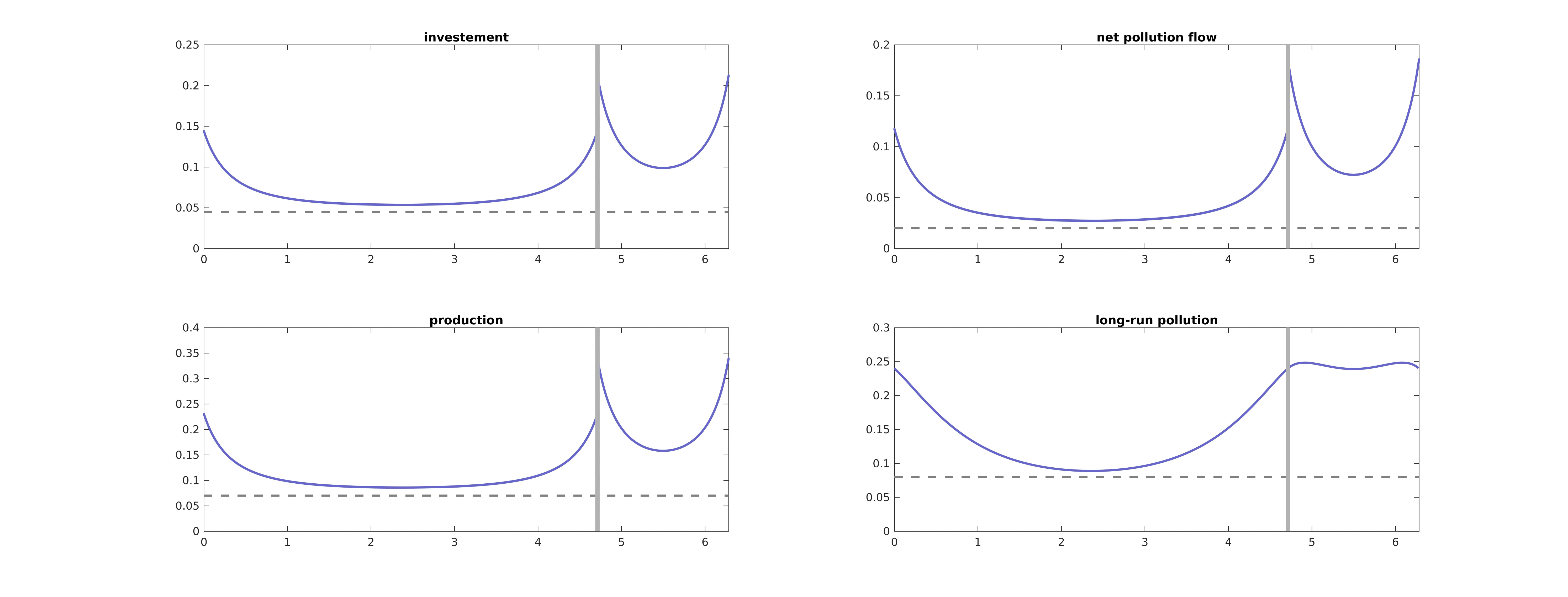}
\caption{\footnotesize
The case of two players controlling respectively one fourth and three fourths of the circle. The values of the parameters (all constant over the whole space) are the same as in Figure \ref{fig:2pl-tuttoconst-sect4}. Continuous (and colored) lines represent the Nash equilibrium while the dashed (and gray) lines are the cooperative benchmark or, equivalently, the non-diffusion benchmark.}
\label{fig:2pl-unquarto3quarti-sect4}
\end{MyFigure}


\begin{Remark}
The strength and the structure of the border effects we mentioned also depend on the model of the spatial structure we use. A structure like $S^1$ allows us to consider the global effects of pollution in the sense that, as already pointed out, the agents (especially those that govern larger areas) are led to consider the fact that part of the emissions that leave the territory, contribute to increase a spatial stock of pollution that will return to the country in the future. Given the compactness of the support it can also happen that the pollution ``crosses'' the entire territory of the other player before returning. In other types of models the situation may be different.  For example if one considers the spatial model of the segment with absorbing borders, the pollution leaving the area controlled by a player will no longer return to it. If instead one considers a model of infinite space, the effect of compactness disappears and therefore the difference of behaviors between the player controlling a large territory and the one controlling a small territory is reduced.
\end{Remark}

\subsection{Number of players and the size of externalities}

Exactly the same kind of behavior we have seen in the last numerical exercises arises from increasing the number of players: when several players 'more than $2$) with a per-capita small territory interact, each of them internalizes only a modest portion of the damages caused by their territory's emissions, which leads each of them to increases input, production and pollution with respect to the two players case. Therefore the total amount of pollution tends to increase with the number of players. We devote to this finding a deeper theoretical foundation here.

We consider again the case where the coefficients of the problem are homogeneous in space. 
Among the sets of territories configurations  described in Section \ref{se:noncoopgame} (see also Appendix \ref{appsec:formulation}), we introduce the following partial order relation: given two configurations
$$
\Pi^1=\{M_1^1,...,M_N^1\}, \ \ \ \ \Pi^2=\{M_1^2,...,M_K^2\}
$$
we say that $\Pi^1\preceq  \Pi^2$ if
$$
\forall j=1,...,N, \ \ \exists i=1,...K: \ \ \  M^1_j\subseteq M_i^2.
$$
This means that the territories configuration $\Pi^1$ is a fragmentation of the territories configuration $\Pi^2$.
\begin{Proposition}
	\label{prop:comparison}
	Let $\Pi^1, \Pi^2$ be two territories' configurations such that $\Pi^1\preceq \Pi^2$ and let $p^{1,*},p^{2,*}$ the associated optimal pollution paths. Then $p^{1,*}(t,x)\geq p^{2,*}(t,x)$ for all $(t,x)\in \R_+\times S^1$.
\end{Proposition}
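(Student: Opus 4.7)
The plan is to reduce the claim to a pointwise comparison of the equilibrium net emission profiles $n^{1,*}$ and $n^{2,*}$ in the two configurations, and then to propagate that inequality through the linear parabolic state equation \eqref{eq:stateequation-sect2} by means of a standard comparison principle.

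Since all coefficients are homogeneous in space, the forcing term $\widehat{w_j}$ in \eqref{ODEalpha2-sect2bis} associated with any territory $M_j^k$ of $\Pi^k$ is exactly $w\,\mathbf{1}_{M_j^k}$. From $\Pi^1 \preceq \Pi^2$ one can pick, for each $j\in\{1,\dots,N\}$, an index $i(j)\in\{1,\dots,K\}$ with $M_j^1 \subseteq M_{i(j)}^2$, which yields $\widehat{w_j^1} \leq \widehat{w_{i(j)}^2}$ pointwise on $S^1$. The elliptic operator $L:=(\rho+\delta)I - \sigma \partial_{xx}$ on $S^1$ admits a nonnegative Green's function (a consequence of the maximum principle on the circle when $\rho+\delta>0$, fully consistent with the Fourier representation of $\alpha$ developed in Appendix \ref{secapp:Fouries}); applying it gives $\alpha_j^1(x) \leq \alpha_{i(j)}^2(x)$ for every $x\in S^1$. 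Since $\alpha \mapsto \alpha^{-1/\gamma}$ is decreasing, the explicit form \eqref{iopt-sect2} then yields $i_j^{1,*}(x) \geq i_{i(j)}^{2,*}(x)$ on $M_j^1$. The depollution expression \eqref{bopt-sect2} depends only on the spatially homogeneous $A$, hence is unchanged across configurations. Assembling these pointwise comparisons over $j$ and noting that both strategies vanish on $M_0$, one obtains $n^{1,*}(x) \geq n^{2,*}(x)$ for every $x\in S^1$.

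Finally, the difference $z := p^{1,*} - p^{2,*}$ satisfies the linear Cauchy problem
\[
z_t = \sigma z_{xx} - \delta z + (n^{1,*}-n^{2,*}), \qquad z(0,\cdot)=0,
\]
whose source is nonnegative by the previous step. Writing $z$ via Duhamel's formula against the strongly continuous semigroup generated by $\sigma\partial_{xx}-\delta$ on $L^2(S^1)$, and using its positivity-preserving character (maximum principle for the heat equation with bounded zero-order term on the compact manifold $S^1$), one concludes $z\geq 0$ on $\R_+\times S^1$. The delicate ingredient of the plan is the monotonicity $\widehat{w}\mapsto\alpha$ invoked in the second paragraph, which extends the space-homogeneous increments treated in Proposition \ref{prop:rhodelta} to a general comparison in the forcing term; the resolution relies on the positivity of the resolvent of $L$, which is the essential technical point to be verified by maximum-principle arguments identical in spirit to those used for Proposition \ref{prop:rhodelta-app}.
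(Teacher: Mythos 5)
Your proposal is correct and follows essentially the same route as the paper: the paper likewise deduces $\widehat{w}^1_j\le\widehat{w}^2_{i(j)}$ from the fragmentation relation, invokes the positivity-preserving property of the resolvent $(\rho-\mathcal{L}^{*})^{-1}$ (its Corollary \ref{cor:pp}, proved exactly by the maximum-principle argument you anticipate) to get $\alpha^1_j\le\alpha^2_{i(j)}$, concludes $n^{1,*}\ge n^{2,*}$ from the explicit formulas \eqref{iopt-sect2}--\eqref{bopt-sect2}, and finishes via the mild-solution formula \eqref{eq:Ymild} together with the positivity of the semigroup $e^{t\mathcal{L}}$. The ``delicate ingredient'' you flag is precisely the paper's Proposition \ref{prop:mp}/Corollary \ref{cor:pp}, so no gap remains.
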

\begin{proof}
	See Appendix \ref{sub:terr}.
\end{proof}

The result displayed by the proposition is sharp: when decisions in terms of investment, production and emissions of a certain territory move from a central entity to smaller sub-entities, the pollution levels systematically increase in line with the resulting impact on the degree of internalization of externalities by each player. Again, this broadens the counterpart property in the basic static theory of spatial externalities. It is also (somehow by construction) consistent with the intuitions one can gain from the numerical exploration above. Note that the result is obtained under the assumption of geographic homogeneity. The next section is designed to highlight some the implications of adding geographic discrepancies into the theory, a feature not considered generally in the standard theory (see for example, Hutchinson and Kennedy, 2008). Our analytical approach allows for these discrepancies as our closed-form solutions do encompass them.

\section{Geography and Heterogeneity}
\label{se:geohetero}

We look now at what happens when we introduce natural, technological or preference differences among the regions. We stick to the symmetric two-region case since
the relevant mechanisms are already at work there. To be able to disentangle various effects we will look at the effect of one parameter each time. We start with technology.

\subsection{Geographic discrepancy in technology}

\begin{MyFigure}[H]
	\centering
	\includegraphics[width=\linewidth]{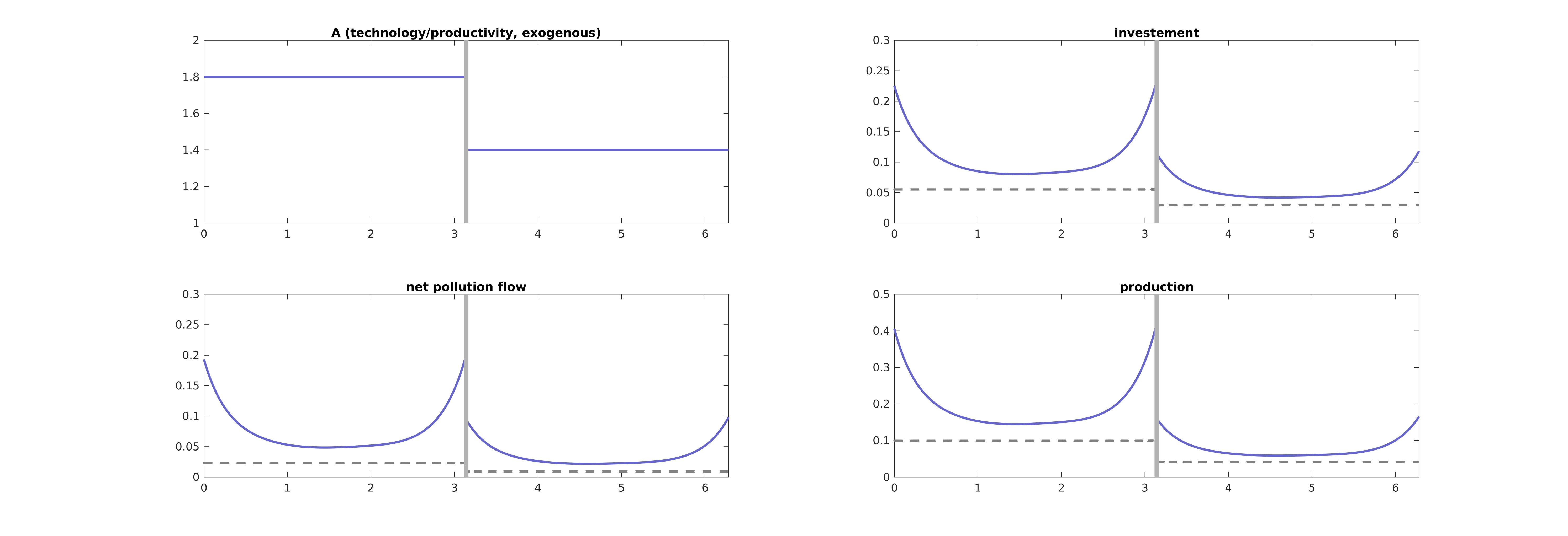}
	\caption{\footnotesize
		The case of two players controlling one half of the circle with different levels of technology $A$. The productivity of the first player (on the left) is $A_1$ (constant it her territory) equal to $1.8$ while the value of $A_2$ is $1.4$. All other parameters (are constant over the space and) are the same as in Figure \ref{fig:2pl-tuttoconst-sect4}. Continuous (and colored) lines represent the profiles of the variables at the Nash equilibrium while the dashed (and gray) lines are related to the cooperative benchmark or, equivalently, the non-diffusion benchmark.}
\label{fig:heterogeneousA}
\end{MyFigure}

In Figure \ref{fig:heterogeneousA} we consider an economy with two regions at different technological levels, that's with different productivities. As we can see from (\ref{iopt-sect2}), the effect on investment of varying $A$ depends on the value of other parameters (in particular on $\gamma$, see Remark \ref{rm:netemissions} in the Appendix for technical details). Indeed a variation of $A$ produces a typical income-substitution trade-off: on the one hand increasing $A$ makes investment more productive leading to a higher  increase in investment relative to consumption and depollution effort. On the other hand, a higher level of $A$ can guarantee a higher level of production and thus of consumption together with lower investment and subsequently lower emissions. The predominance of one of the two channels depends on the values of  various parameters. In Figure \ref{fig:heterogeneousA} the first effect is stronger. Conversely, as one can see from the expression of $b_j$ at the equilibrium given in (\ref{bopt-sect2}), the impact of increasing $A$ on the depollution expenditures is always positive: increasing productivity always gives more room for pro-environmental actions.

The effects on net pollution depend on the relative strength of the mechanisms described above. For our selected parametrization, the outcomes are represented in Figure \ref{fig:heterogeneousA}. The long-run spatial distribution of pollution is not explicitly represented in the figure but, similarly to what we have in the figures of Section \ref{se:border} it is a smoothed version of the distribution of net emission flows (see Remark \ref{rm:depPinfty} for a more technical observation on that).

\begin{MyFigure}[H]
	\centering
	\includegraphics[width=\linewidth]{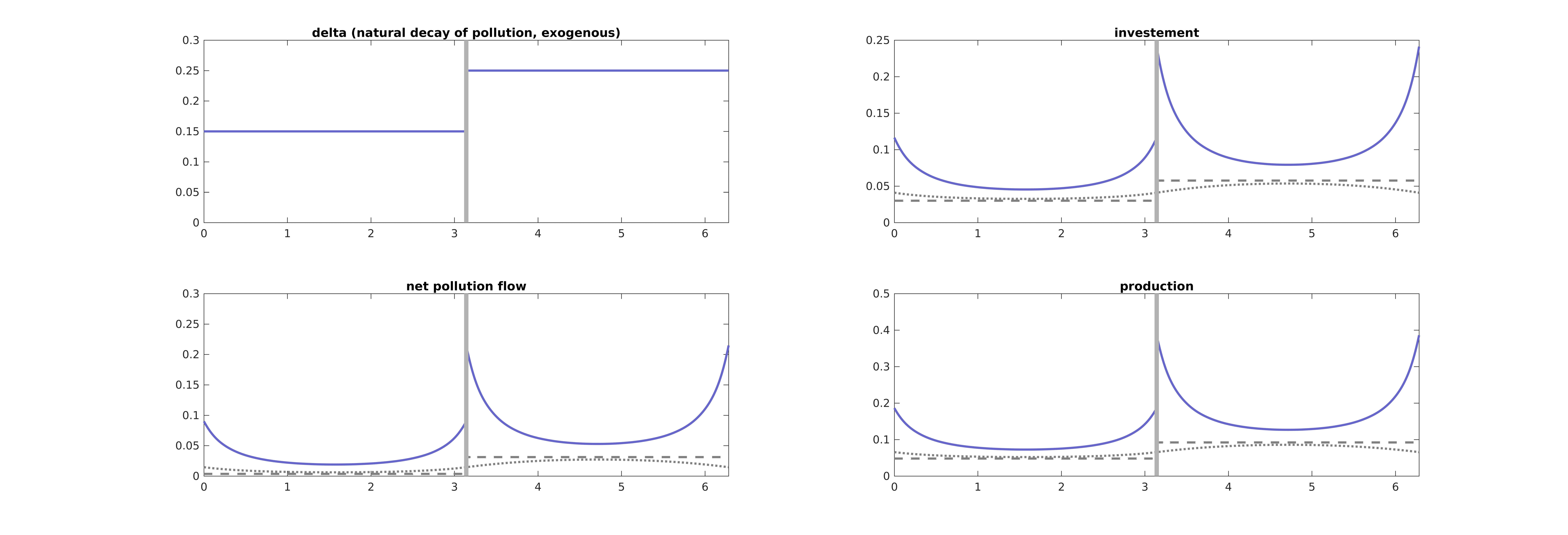}
	\caption{\footnotesize
		The case of two players controlling one half of the circle with different levels natural decay of the pollution $\delta$. The value of $\delta$ in the territory of the first player (on the left) is $\delta_1$ (constant it her territory) equal to $0.15$ while the value of $\delta_2$ is $0.25$. All other parameters (are constant over the space and) are the same as in Figure \ref{fig:2pl-tuttoconst-sect4}. Continuous (and colored) lines represent the Nash equilibrium, the dotted (and gray) lines are the cooperative benchmark (with the same $\sigma$) while the dashed (gray) lines are the zero-diffusion benchmark.}
	\label{fig:heterogeneousdelta}
\end{MyFigure}

\subsection{Geographic ecological discrepancy}
In Figure \ref{fig:heterogeneousdelta} we look at the situation where the territories of the two regions have different natural decays of  pollution. The effect of increasing $\delta$ on investment is positive (see Proposition \ref{prop:rhodelta}). The intuition is rather straightforward: a higher value of $\delta$ reduces, for a given investment strategy, the future stock of pollution and then it reduces the marginal disutility of polluting with respect to marginal utility of consumption. So it tends to increase input use and thus  production and consumption. Conversely, as it transpires from the expression of the optimal level of depollution expenditures (\ref{bopt-sect2}), the latter are not impacted by variations on $\delta$ so differences in input use drive mechanically the differences in net emissions flow and production.

We can also note that $\delta$ enters in the expression of equilibrium investment (\ref{iopt-sect2}) only through the values of $\alpha_{j}$. In this equation $\delta$ sounds as a substitute to $\rho$. This is not so surprising because both parameters have intertemporal implications: they act to discount the future effects of  present actions . Nevertheless here  substitutability is particularly large (1 to 1 at any time) because, differently from standard growth model, the decay $\delta$ directly acts on a variable (pollution) that linearly appears in the utility function. So all the previous remarks on the effects of $\delta$ on various endogenous variables, can be replicated exactly for $\rho$.

Differently from what we had for instance in the examples of Section \ref{se:border} we can observe that in Figure \ref{fig:heterogeneousdelta} the two benchmarks we use (the cooperative case with the same $\sigma$ and the no-diffusion situation) give distinct profiles. The same behavior will appear in Figure \ref{fig:heterogeneousw}. Both benchmarks are associated with less production and emissions than the Nash equilibrium (this is not surprising because in both cases the negative effects of pollution are completely internalized) but the optimal choice of the planner is to pollute more compared with the zero-diffusion benchmark in the low-delta zone. The reason behind is rather simple: the planner knows that part of the emissions will move to high-delta part and will therefore decay more quickly. This mechanism also explains why the difference between the two is higher as we get closer to the border. A symmetric argument is enough to figure out why the investment-choice of the planner tends to be lower relative to  the zero-diffusion benchmark in the high-delta zone.


\subsection{Geographic discrepancy in preferences}
In Figure \ref{fig:heterogeneousw} we represent again a symmetric two-country example with all the parameters space-independent except the unitary disutility of pollution $w$ which is $0.9$ in the region controlled by the first player and $1.1$ for the other. As one can infer from Proposition \ref{prop:rhodelta},  pollution decreases when $w$ is bigger: a larger marginal disutility from pollution leads the local planner to reduce investment and consumption to be able to reduce emissions. This fact also involves a reduction in production and in the net flow of emissions.

\begin{MyFigure}[H]
	\centering
	\includegraphics[width=\linewidth]{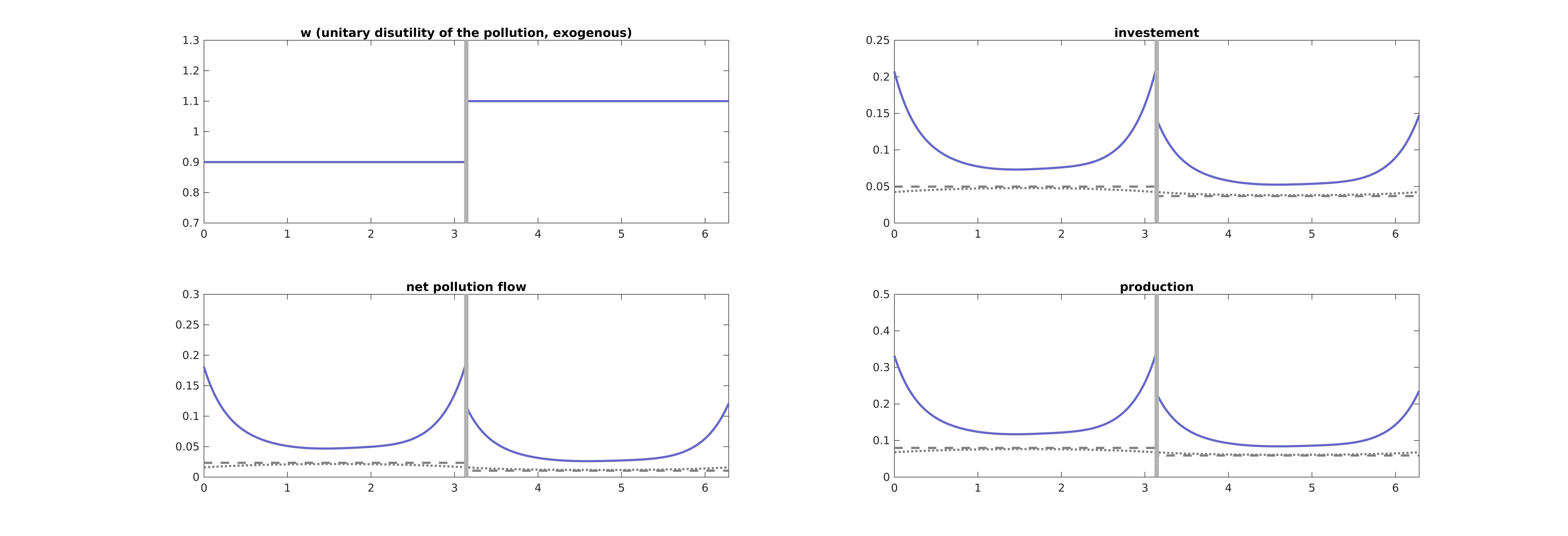}
	\caption{\footnotesize
		The case of two players controlling one half of the circle with different unitary disutility of the pollution $w$. The value of $w$ in the region controlled by the first player (on the left) is $w_1$ (constant it her territory) equal to $0.9$ while the value of $w_2$ is $1.1$. All other parameters (are constant over the space and) are the same as in Figure \ref{fig:2pl-tuttoconst-sect4}.
Continuous (and colored) lines represent the Nash equilibrium, the dotted (and gray) lines are the cooperative benchmark (with the same $\sigma$) while the dashed (gray) lines are the zero-diffusion benchmark.}
	\label{fig:heterogeneousw}
\end{MyFigure}


\section{Advection}
\label{se:advection}

So far we have considered, for the natural spatiotemporal dynamics of pollution, a completely homogeneous diffusion/spreading process. Indeed, if we abstract away from the agents' decisions, the dynamics described by (\ref{eq:stateequation-sect2}) reduces to
\[
\displaystyle{\frac{\partial p}{\partial t}(t,x) = \sigma \frac{\partial^2 p}{\partial x^2} (t,x)  - \delta(x) p(t, x)}.
\]
In the right side of this expression only the second derivative term describes the spatial dynamics of pollutants while the decay term is, essentially, purely local.

A more general and more realistic formulation is possible by adding to the right hand side above a term representing exogenous location-specific flow, i.e.
$${v}(x) \frac{\partial p}{\partial x}(t,x).$$
This new term is a vector field on $S^1$ specifying at each location a further movement term (which adds to the already described diffusion term) having speed $v(x)$ at any point $x$. It is called \emph{advection} in atmospheric physics\footnote{{A more general formulation would be to consider $v$ depending also on time: $v(t,x)$. Of course this kind of generality can be interesting for a variety of situations but unfortunately our approach cannot include a time dependent $v$ since we need the operator $\L$ defined in Appendix \ref{sub:operatos} to be time-independent.}}. Advection allows to take into account the fact that basic dispersion of pollutants need not to be ``side-homogeneous'', for instance due to winds, currents or geographic characteristics. {It has been jointly studied and modelled with diffusion in atmospheric physics since the beginning of the 20th century, see for example Roberts (1923). It is duely accounted for in the recent transboundary pollution literature either in static frameworks (like in Lipscomb and Mubarak, 2017) or dynamic ones (see de Frutos et al., 2021).}
With this new term the evolution equation \eqref{eq:stateequation-sect2} for the pollution stock becomes
\begin{small}\begin{equation}\label{SE-cest5}
\begin{cases}
\displaystyle{\frac{\partial p}{\partial t}(t,x) =\sigma \frac{\partial^2 p}{\partial x^2}
+ {v}(x) \frac{\partial p}{\partial x}(t,x)  - \delta(x) p(t, x) +  i(t,
x)-\eta b(t,x)^{\theta},
 \ \ (t,x) \in\R_+\times S^1},\\\\
p(0,x)=p_0(x),  \ \ \ x\in S^1.
\end{cases}
\end{equation}
\end{small}

All the results provided in the previous sections can be generalized to the system including a generic advection term. In particular the counterpart of the Nash equilibrium described in Theorem \ref{th:oo-sect2} reads now as
\begin{equation}\label{bopt-sect5}
b^{ad,*}_j(t,x)= \left[(A_j(x)-1)\eta\theta\right]^{\frac{1}{1-\theta}},
\end{equation}
\begin{align}\label{iopt-sect5}
i^{ad,*}_j(t,x)
= \alpha_j(x)^{-\frac{1}{\gamma_j}}(A_j(x)-1)
^{\frac{1-\gamma_j}{\gamma_j}}
+\left(\eta\theta\right)^{\frac{1}
{1-\theta}}(A_j(x)-1)^{\frac{\theta}
{1-\theta}},
\end{align}
where $\alpha_j$ is now  the unique solution to the ODE
\begin{equation}\label{ODEalpha2-sect5}
\rho_j\alpha_j(x)- \sigma \alpha_j''(x)+v(x)  \alpha_j'(x)+(v'(x)+\delta(x))\alpha_j(x)=\widehat{w_j}(x), \ \ \ x\in S^1.
\end{equation}

\begin{MyFigure}[H]
\centering
\includegraphics[width=\linewidth]
{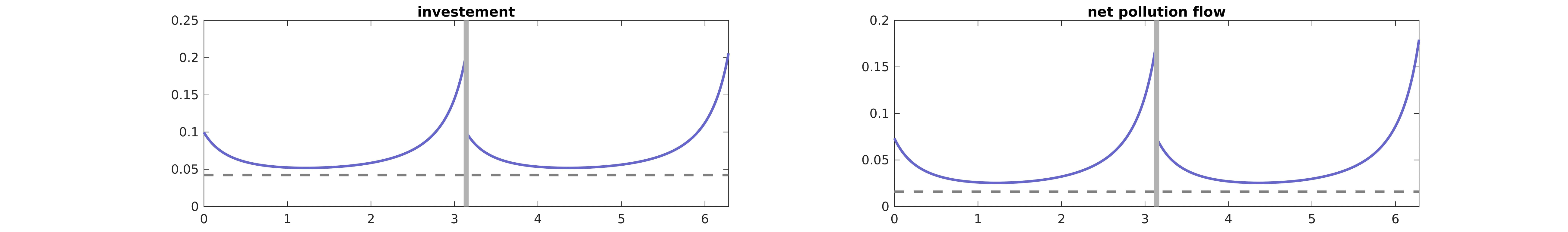}
\caption{\footnotesize
The case of two players controlling one half of the circle. All the setting and the parameters values are the same as in Figure \ref{fig:2pl-tuttoconst-sect4} but now a spatially constant advection term $v(x)=-0.08$ is introduced.
The continuous (and blue) lines represent the Nash equilibrium while the dashed (and gray) lines are the cooperative benchmark or, equivalently, the non-diffusion benchmark.}
\label{fig:dilution}
\end{MyFigure}

To understand how the results can be qualitatively affected by the presence of the advection term we see, in Figure \ref{fig:dilution}  how the spatial profiles reported in Figure \ref{fig:2pl-tuttoconst-sect4} are altered when a constant advection term with $v(x) = v<0$ is introduced (all other specifications and parameterization remaining the same). This, concretely, means introducing a persistent current which is clockwise when we look at the $S^1$ as a subset of $\mathbb{R}^2$, that is, roughly speaking a current directed to the ``right'' in the figure. As a consequence, pollutants tend to move quicker to the other region if they are produced in a location which is close to the right borders (say \emph{east}) and vice versa. For this reason the authority internalizes even less the disutility due to emissions coming from eastern locations and so she has un incentive to produce and pollute more there. A symmetric argument explains why the production and the emission in the western locations are lower than in the no-advection case of Figure \ref{fig:2pl-tuttoconst-sect4}.


\section{Conclusion}

This paper is the first outcome of a larger research program on the theoretical characterization of a wide class of problems with free riding under natural diffusion and spatial externalities. This class of problems is arguably of particular importance given the pressing ongoing environmental and epidemic crises under way. We have exposed the main principles of a modelling them as infinite-dimensional differential games in continuous time and space.
We have fully explored the case of the free riding pollution problem, a preeminent representative of the latter class. Indeed we have  revisited the foundations of the spatial externalities theory in this case, a central research line not only in the most recent literature on environmental federalism (see for example Hutchinson and Kennedy, 2008) but also in several recent much more applied works on transboundary pollution (see  Lipscomb and Mobarak, 2017, as a representative example). In particular, precisely to close the gap between the theory and the latter empirical works, we consider a spatiotemporal framework where, instead of assuming \emph{ad hoc} pollution diffusion schemes across space, we use a realistic spatiotemporal law of motion for air and water pollution (natural diffusion and advection). This has led us to what we believe to be a strong methological contribution since we have ultimately to tackle and to solve spatiotemporal non-cooperative (and cooperative) differential games, which is far more complicated than the counterpart static games in the benchmark theory. {As repeatedly outlined above, the mathematical literature is very scarce in this area, and it's even much scarcer in economics where the work of de Frutos and co-authors discussed in the introduction is a notable exception.} We also incorporate into the analysis a large set of discrepancies across state and jurisdictions, which broadens even more the scope of our theory and its practical interest. At the end of the day, we have been able not only to find out a closed-form characterization of the spatiotemporal equilibria but also to identify a rich set of spatial patterns taking advantage of the many heterogeneities accommodated by our methodology (the most elementary being the asymmetry of players). This has in turn allowed us to check that our model is consistent with the set of stylized facts put forward by the related empirical literature.

Several additional questions may be tackled. For example, our setting is general enough to accommodate the two global governance levels (federal and supranational). Furthermore, since our analytical method allows for deep geographic discrepancies, this in principle enables us to address some of the hottest questions in the international agenda, in particular those related to the North/South environmental divide. Beyond the fundamental pollution free riding problem studied in this paper, we are in the course of applying our methodology to other problems in the more general class described above, in particular to epidemic spatiotemporal dynamics. Of course, this cannot be in principle done without adaptation, and full closed-form characterization may not be possible, making the use of complementary numerical methods indispensable. We are in this process.


\bigskip



\bigskip

\bigskip

\bigskip

\begin{footnotesize}

\bigskip

\bigskip

\bigskip

\bigskip

\bigskip


\appendix

\section*{Appendices}
In the following appendices we describe and we study, in a rigorous mathematical way, the generalized problem with advection presented in Section \ref{se:advection}, which includes also the other cases presented in the paper.

\section{Formulation of the problem and standing assumptions}
\label{appsec:formulation}

Let $S^1$ be the unitary circle in $\R^2$:
$$S^1:=\big\{x\in\R^{2}: \ |x|_{\R^{2}}=1\big\}.$$
Hereafter, we often identify $S^1\cong 2\pi\R/\Z$. Accordingly,  we identify functions $\varphi:S^1\to\R$ with $2\pi$-periodic function $\varphi: \R\to \R$. Moreover, given a function $\psi:\R_+\times S^1\to\R$, we denote by $\psi_t$ and $\psi_x$, respectively, the derivative with respect to the first (time) and the second (space) variable.

The following useful objects with the associated assumptions are given\footnote{For the definition of  the Lebesgue spaces $L^q$ we refer the reader e.g. to Brezis (2011), Chapter 4. We recall that these spaces are done by equivalence classes of functions according to the equivalence relation which identifies functions which are equal  \emph{almost everywhere}.}.
\begin{Assumption}\label{ass:data}
	\begin{enumerate}
		\item[]
		\item
		$p_0\in L^2(S^1;\R_+)$;
		\item $\delta\in C(S^1;\R_+)$;
		\item $v\in C^1(S^1;\R)$;
		\item $A_j\in L^\infty(M_j;\R_+)$ and there exists a constant $l$  such that
		$1<l\leq A_j(x)$ for all $j=1,\dots,N$;
		\item for each $j=1,...,N$, one has  $w_j\in C(M_j;\R_+)$ and $w_j$ be extended to a function $\overline{w_j}\in C(\overline{M_j};\R)$ such that $\overline{w_j}(x)>0$ for each $x\in \overline{M_j}$.
	\end{enumerate}
\end{Assumption}

We consider the following {multiagent problem} in $S^1$. We fix a positive integer number $N\geq 1$ representing the number of players. Each player $j=1,...,N$ is endowed with her/his own part of territory $M_j\subset S^1$, assuming that 
$$
M_j\cap M_h=\emptyset \ \ \ \ \mbox{for} \  h\neq j, \ \ h,j=1,...,N.
$$
{Notice that there may be parts of $S^1$ that do not belong to any player.}
{We assume that $M_j$ is connected and relatively open {(hence an open interval in the circle)} for every $j =1,\dots, N$.}
{We set
	$$M_0:=S^1\setminus\bigcup_{j=1}^N M_j.$$ Note that $M_0$ is closed and contains at least the boundary points of $M_j$, for $j=1,\dots,N$. From now on, when we take an index $j$ without mentioning explicitly where it lies, we mean that $j\in\{1,\dots,N\}$.}

Player $j$ chooses the investment policy $i_j(t,x)$ and the depollution policy $b_j(t,x)$ at time $t\in\R_+$ and location $x\in M_j$.
These functions belong, for $j=1,\dots,N$, to the class of functions 
\begin{footnotesize}
\begin{equation}
\label{eq:Cj}
\begin{split}
\mathcal{A}_j:=&\Bigg\{(i_j,b_j): 	\R_+\times M_j\to \R_+\times \R_+ \ s.t.\\& \ \ \ \ \ \ \ \ \ \  \ t\mapsto (i_j(t,\cdot),b_j(t,\cdot)) \in L^1((\R_+,e^{-\rho_j t} dt) ;L^2(M_j,\R_+))\times L^1((\R_+, e^{-\rho_j t} dt);L^2(M_j,\R_+))\\  & \ \ \ \ \ \ \ \ \ \ \ \ \ \ \ \ \ \ \ \ \ \ \  \mbox{and} \ \  (A_j(x)-1) i_j(t, x)-b_j(t,x)\geq 0 \ \ \ \ \mbox{for a.e.}  \  (t,x)\in \R_+\times M_j\Bigg\}.
\end{split}
\end{equation}
\end{footnotesize}
Given, for every $j=1,\dots,N$, an element $(i_j,b_j) \in \mathcal{A}_j$, we obtain
a couple $(i,b):\R_+\times S^1 \to\R^2$ defined as in \eqref{def:ib} and
$$(i,b)\in \mathcal{A}:= \mathcal{A}_1\times ... \times \mathcal{A}_N.$$
{Let $\sigma>0$,  $\theta\in(0,1)$, $\eta\geq 0$ be given constants}.
The evolution of the state variable $p(t,x)$, representing pollution, is formally given by
the following parabolic PDE:
\begin{equation}\label{SE}
\begin{cases}
\displaystyle{p_t(t,x) = \sigma p_{xx}(t,x)+  {v}(x) p_x(t,x)  - \delta(x) p(t, x) +  i(t,
	x)-\eta b(t,x)^{\theta},
	\ \ (t,x) \in\R_+\times S^1},\\\\
p(0,x)=p_0(x),  \ \ \ x\in S^1,
\end{cases}
\end{equation}
where
{\begin{equation}\label{def:ib}
	\big(i(t,x),b(t,x)\big):=\begin{cases}
	\left(i_j(t,
	x), b_j(t,x)\right), \ \ \ \mbox{if} \ x\in M_j,\, j=1,\dots,N\\
	0, \ \ \ \ \ \ \ \ \ \ \ \ \ \ \, \ \ \ \ \ \ \ \ \ \mbox{if} \ x\in M_0=S^1\setminus\bigcup_{j=1}^N M_j.
	\end{cases}\end{equation}}
	{Existence and uniqueness of solutions to this PDE when the couple of functions $(i,b)$ is defined as above can be proved with classical arguments for parabolic equations, see e.g. Lions and Magenes (1972), Theorem 6.1, p.\,33.} {However, we will look at it as an evolution equation in a suitable Hilbert space (see Appendix \ref{appsec:reformulation}) and will define the concept of mild solution, which will be our  solution concept.}

%
%
%
%
%
Next, given $j=1,...,N$, we denote
$$
(i_{-j},b_{-j})= \left((i_1,b_1), ..., (i_{j-1},b_{j-1}), (i_{j+1},b_{j+1}),...,(i_N,b_N)\right).
$$
The payoff functional of player $j$ associated to the choice $(i_{j},b_{j)})$ when the other players undertake the strategies $(i_{-j},b_{j})$ is
\begin{equation}\label{F}
J_j^{(i_{-j},b_{-j})}\big(p_0;(i_j,b_j)):=\int_0^{\infty} e^{-\rho_j t} \left(\int_{M_j}\left(
\frac{\big((A_j(x)-1) i_j(t, x)-b_j(t,x)\big)^{1- \gamma_j}}{1-\gamma_j}
- w_j(x)p(t, x) \right) dx\right)dt,
\end{equation}
where $\rho_j>0$, $\gamma_j\in(0,1)\cup(1,\infty)$.

%


\begin{Remark}
	In our model the space variable lies in the one dimensional circle $S^1$. 
	Extensions to different space structures are possible but the one dimensional case allows to compute more easily the solution and to perform a more precise analysis of its behavior.
\end{Remark}

\section{Reformulation of the problem in infinite dimension}
\label{appsec:reformulation}

We are going to rigorously rewrite the above problem in a suitable infinite dimensional space. {The reformulation in itself has not a specific economic interpretation but it is merely a technical tool in order to be able to apply to our problem the theory of evolution differential equation in infinite dimensional spaces (see, as a general reference, Bensoussan et al., 2006).}
First, we  provide the various ingredients of this reformulation: the spaces, the operators,
and the reformulation itself.

\subsection{The spaces}

Consider the spaces
$$H:=L^2(S^1):=\left\{f:S^1\to\R \ \mbox{measurable}: \ \int_ {S^1} |f(x)|^2dx<\infty\right\}$$
{and, for $j=0,\dots,N$,}
$$H_j:=L^2(M_j):=\left\{f:M_j\to\R \ \mbox{measurable}: \ \int_ {M_j} |f(x)|^2dx<\infty\right\},$$ endowed with the with inner products
$$
\langle f,g\rangle_H :=\int_{S^1} f(x)g(x) dx, \ \ \ f,g\in H,$$
$$
\langle f,g\rangle_{H_j} :=\int_{M_j} f(x)g(x) dx, \ \ \ f,g\in H,$$
which render them separable Hilbert spaces.
Denote by $|\cdot|_H$, $|\cdot|_{H_j}$ the associated norm, i.e.
$$|f|_H^2:=\int_{S^1} |f(x)|^2dx, \ \ \  \ f\in H,$$
$$
|f|_{H_j}^2:=\int_{M_j} |f(x)|^2dx, \ \ \  \ f\in H_j.$$
Finally denote by $H^+,H_j^+$ the positive cones of $H$ and $H_j$, respectively.
{
	Now, for every $f \in H$ we can write
	\begin{equation}\label{eq:directdsumibnew}
	f(x):=\sum_{{j=0}}^N f(x){\bf 1}_{M_j}(x), \quad x \in S^1.
	\end{equation}
	Since the restriction of $f{\bf 1}_{M_j}$ to $H_j$ is an element of $H_j$
	we  write, again with a slight abuse of notation
	$$
	H=\bigoplus_{{j=0}}^N H_j, \ \ \ \ H^+=\bigoplus_{{j=0}}^N H_j^+.
	$$}

\subsection{The operators}
\label{sub:operatos}

Denote by
$L(H)$ the space of bounded linear operators on $H$. Consider the differential operator
$(\L,D(\L))$ in $H$, where
$$D(\mathcal{L})=W^{2,2}(S^1;\R);$$
$$[\mathcal{L} \varphi](x)=\sigma \varphi''(x)+v(x)\varphi'(x)-\delta(x)\varphi(x),
\ \ \  \varphi\in D(\L).$$
The latter is a closed, densely defined, unbounded linear operator on the space $H$ (see, e.g. Lunardi, 1995, page 72, Section 3.1.1). A core\footnote{{A core is a subspace the domain $D(\mathcal{L})$ which is is dense in $D(\mathcal{L})$ for the graph norm $|x|_{\mathcal{L}} := |x| + |\mathcal{L}x|$.}} for it is the space $C^\infty(S^1;\R)$
(see, e.g., Engel and Nagel, 1995, pages\,69-70).
Integration by parts shows that 
\begin{equation}\label{qq}\langle \mathcal{L}\varphi,\psi\rangle_H =\langle \varphi,\mathcal{L^\star}\psi\rangle_H, \ \ \ \forall \varphi,\psi\in C^\infty(S^1;\R)
\end{equation}
where
$D(\mathcal{L^\star})=D(\mathcal{L})=W^{2,2}(S^1;\R)$
and
\begin{equation}\label{eq:L*}
[\mathcal{L}^\star\psi](x)= \sigma  \psi''(x)-v(x)\psi'(x)-(v'(x)+\delta(x))\psi(x),
\ \ \  \psi\in C^\infty(S^1;\R).
\end{equation}
Since $C^\infty(S^1;\R)$ is a core for $\mathcal{L}$ and $\mathcal{L}^\star$, \eqref{eq:L*} extends to all couples of functions in $D(\mathcal{L})$. This shows that $\mathcal{L}$ is self-adjoint and dissipative if $v\equiv 0$.

Integration by parts also shows
\begin{equation}\label{eq:intpartv}
\int_{S^1} v(x) \varphi(x)\varphi'(x) \d x=-\int_{S^1} v(x) \varphi(x)\varphi'(x) \d x- \int_{S^1} v'(x) |\varphi(x)|^2 \d x,
\end{equation}
hence,
\begin{equation}\label{eq:intpartvbis}
\int_{S^1} v(x) \varphi(x)\varphi'(x) \d x=- \frac{1}{2} \int_{S^1} v'(x) |\varphi(x)|^2 \d x
\end{equation}
{Next, we compute, again using integration by parts (in particular, \eqref{eq:intpartv} and \eqref{eq:intpartvbis}),
	\begin{eqnarray*}
		\langle \mathcal{L}\varphi,\varphi\rangle_H &=&\int_{S^1} \left([\mathcal{L}\varphi](x)\right)\varphi(x)\d x\\
		&=&
		-\int_{S^1} \sigma(x)|\varphi'(x)|^2 \d x -\int_{S^1} v(x) \varphi(x)\varphi'(x) \d x-\int_{S^1}(v'(x)+\delta(x))|\varphi(x)|^2\d x\\
		&=&
		-\int_{S^1} \sigma(x)|\varphi'(x)|^2 \d x -\int_{S^1}\left(\frac12v'(x)+\delta(x)\right)|\varphi(x)|^2\d x\\
		&\leq &\left|\left(\frac{1}{2}v'+\delta\right)\wedge 0\right|_\infty \,|\varphi|_H^2.
	\end{eqnarray*}
	Hence, the operator $\mathcal{L}$ is pseudo-dissipative, and so is $\mathcal{L}^\star$.
	Therefore,
	by Engel and Nagel (1995) (see, in particular, Chapter\,II), we
	see that $\L$ generates a strongly continuous  semigroup $(e^{t\mathcal{L}})_{t\geq 0}\subset L(H)$.}
\subsection{State equation}
\label{subapp:refstateequation}

Setting
$$I_j(t):=i_j(t,\cdot), \ \ \  \ B_j(t):=b_j(t,\cdot), \ \ \ j=1,...,N,$$
we see that
$(I_j,B_j):\R_+ \to H_j^+\times H_j^+$ and rewrite the set $\mathcal{A}_j$, given in \eqref{eq:Cj} (and then, consequently, $\mathcal{A}$), as follows:
\begin{equation*}
\begin{split}
\mathcal{A}_j=\,  &
\Bigg\{(I_j,B_j):\R_+ \to H_j^+\times H_j^+ \ \mbox{measurable}: \
\int_0^\infty e^{-\rho_j t}  (|I_j(t)|^2_{H_j}+|B_j(t)|^2_{H_j})dt<\infty,
\\ & \ \ \ \ \ \ \ \ \ \ \ \ \ \  \ \ \ \
(A_j(\cdot)-1) I_j(t)(\cdot)-B_j(t)(\cdot)\geq 0 \ \ \ \
\mbox{for a.e.} \ t\in\R_+\Bigg\}.
\end{split}
\end{equation*}
The  last inequality above means that, for a.e. $t\ge 0$, the function in $H_j^+$ given by
$$x \mapsto (A_j(x)-1) i_j(t,x)-b_j(t,x)$$ is nonnegative for a.e. $x\in M_j$. 

Given  $\big((I_j,B_j)\big)_{j=1,...,N}\in\mathcal{A},$ we set
$$
I(t):=0\cdot \mathbf{1}_{M_0}(x)+\sum_{j=1}^N\mathbf{1}_{M_j} I_{j}(t), \ \ \ B(t):=0\cdot \mathbf{1}_{M_0}(x)+\sum_{j=1}^N \mathbf{1}_{M_j}B_j(t), $$
and
$$
[B(t)^{\theta}](x):=(B(t)(x))^{\theta}, \ \ \ x\in S^1.
$$
{Then, with the formal identification}
$$[P(t)](x):=p(t,x), \ \ \ x\in S^1,$$
we reformulate \eqref{SE} in $H$ as
\begin{equation}\label{SEi}
\begin{cases}
P'(t)= \L P(t) + I(t)-\eta B(t)^{\theta},
\ \ \ \ t\geq 0,\\
P(0)=p_0\in H.
\end{cases}
\end{equation}
According to Bensoussan et al. (2007) (Part\,II,\,Chapter\,1.\,Definition\,3.1(v)), we define the \emph{mild solution} to \eqref{SEi} as
\begin{equation}\label{eq:Ymild}
P(t)=e^{t\L} p_0+\int_0^t e^{(t-s)\L}
\left[I(s)-\eta B(s)^{\theta}\right]ds, \ \ \ t\geq 0.
\end{equation}
This will be our concept of solution to the state equation.

{{We present the following lemma, which is needed to prove uniqueness of affine Markovian equilibria. 
\begin{Lemma}\label{lemma:phi}
Let $\Phi\in L(H;H_{j})$ be positivity preserving and let $h\in H^{+}\setminus\{0\}$. 
If the function $$ \R_{+}\to H_{j}, \ \ t\mapsto \Phi \int_0^t e^{(t-s)\L}h\, ds$$  is constant, then $\Phi=0$.
\end{Lemma}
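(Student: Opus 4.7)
The plan is to reduce the constancy of $t\mapsto \Phi F(t)$, where $F(t):=\int_0^t e^{(t-s)\L}h\,ds$, to the pointwise identity $\Phi e^{t\L}h=0$ for every $t\geq 0$, and then to exploit the strict positivity of $e^{t\L}h$ (for $t>0$) together with the positivity preserving character of $\Phi$ to kill $\Phi$ on a dense subset of $H$.

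First, I would change variables $r=t-s$ to rewrite $F(t)=\int_0^t e^{r\L}h\,dr$. Since the semigroup orbit $r\mapsto e^{r\L}h$ is continuous in $H$, $F$ is of class $C^1$ with $F'(t)=e^{t\L}h$. Because $\Phi\in L(H;H_j)$ is bounded linear, $t\mapsto \Phi F(t)$ is likewise $C^1$ with derivative $\Phi e^{t\L}h$. By hypothesis this derivative is identically zero, so $\Phi e^{t\L}h=0$ for every $t\geq 0$ (for $t=0$ by strong continuity of the semigroup).

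Next, I would invoke the strict positivity of the semigroup $e^{t\L}$ for $t>0$. Because $\L=\sigma\partial_{xx}+v\partial_x-\delta$ is a second order uniformly elliptic operator with continuous coefficients on the compact $1$-manifold $S^1$, the parabolic strong maximum principle (equivalently, the strict positivity of the associated heat kernel) implies $e^{t\L}h(x)>0$ a.e.\ on $S^1$ for every $t>0$ and every $h\in H^+\setminus\{0\}$. Combined with parabolic regularity, $e^{t\L}h$ is continuous on the compact $S^1$, hence admits a uniform lower bound $e^{t\L}h\geq c_t>0$ for each fixed $t>0$. Setting $g:=e^{t\L}h$, any $f\in L^\infty(S^1;\R)^+$ satisfies $0\leq f\leq (\|f\|_\infty/c_t)g$ a.e., so positivity preservation of $\Phi$ yields $0\leq \Phi f\leq (\|f\|_\infty/c_t)\,\Phi g=0$, i.e.\ $\Phi f=0$. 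Truncating $f\wedge n\to f$ in $H$ for $f\in H^+$ and using continuity of $\Phi$ extends this to all of $H^+$; splitting $f=f^+-f^-$ then gives $\Phi f=0$ for every $f\in H$, hence $\Phi=0$.

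The main technical hurdle is the strict pointwise positivity of $e^{t\L}h$ (with a uniform lower bound) for $t>0$, which rests on the parabolic strong maximum principle for $\L$ on the compact manifold $S^1$; once this qualitative fact is established, the domination argument that squeezes $\Phi$ down to the zero operator is soft.
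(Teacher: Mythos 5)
Your proof is correct, and it reaches the conclusion by a route that diverges from the paper's in its second half. The first step is essentially the one the paper uses: both arguments differentiate the constant map built from $F(t)=\int_0^t e^{s\L}h\,ds$ to extract that $\Phi$ annihilates the strictly positive vector $e^{t\L}h$ (the paper does this scalar-wise, pairing with a test vector $v\in H_j^+$ and differentiating only at $t=1$; you do it directly for the $H_j$-valued function, which is equally valid). Where you part ways is in deducing $\Phi=0$ from $\Phi g=0$ for a single strictly positive $g$. The paper dualizes: it passes to the adjoint $\Phi^{\star}$, which is also positivity preserving, and observes that $\langle g,\Phi^{\star}v\rangle_H=0$ with $g>0$ a.e.\ and $\Phi^{\star}v\geq 0$ forces $\Phi^{\star}v=0$; this needs only a.e.\ strict positivity of $g$ and no regularity. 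You instead stay on the primal side and sandwich every bounded nonnegative $f$ between $0$ and $(\|f\|_\infty/c_t)g$, which requires the stronger input that $g$ is bounded below by a positive constant; you correctly supply this via parabolic smoothing and compactness of $S^1$, though note that ``$e^{t\L}h(x)>0$ a.e.'' alone would not yield $c_t>0$ for a continuous function --- you need strict positivity at \emph{every} point, which is exactly what the strict positivity of the heat kernel (your own parenthetical) delivers. In short: the paper's adjoint argument is softer and needs less of the semigroup, while your domination argument is more self-contained on the primal side at the cost of an extra (standard) regularity step; both are sound.
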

\begin{proof}
First, notice that $ \int_0^t e^{(t-s)\L}h\,ds= \int_0^t e^{s\L}h\,ds$. 
Let $\Phi^{\star}:H_{j}\to H$ be the adjoint of $\Phi$; clearly it is positive preserving too. Given $v\in H_{j}$, the function 
$$t\mapsto\Big\langle\Phi \int_0^t e^{s\L}h \,ds, v\Big\rangle_{H_{j}}= \Big\langle \int_0^t e^{s\L}h\, ds, \Phi^{\star}v\Big\rangle_{H}$$ is constant. Therefore, in particular, \begin{align*}
0&=\frac{d}{d t} \ \Big\langle \int_0^t e^{s\L}h\, ds, \Phi^{\star}v\Big\rangle_{H}\Big|_{t=1}= \Big\langle \frac{d}{dt} \int_0^t e^{s\L}h \,ds\Big|_{t=1}, \Phi^{\star}v\Big\rangle_{H} 
= \Big\langle e^\L h,  \Phi^{\star}v\Big\rangle_{H} 
\end{align*}
The semigroup $\{e^{t\L}\}_{t\geq 0}$ maps nonnegative not identically null functions into strictly positive ones, hence $e^{\L}h$ is strictly positive.
 Hence, since  $\Phi^{\star}$ is positivity preserving, from the latter  equality we see that 
$\Phi^{\star}v=0$ for every $v\in H^{+}_{j}$. Using the decomposition $v=v^{+}- v^{-}$,  we also get  $\Phi^{\star}v=0$ for every $v\in H_{j}$, i.e. that $\Phi^{\star}=0$, from which the claim follows.  
\end{proof}}}

\subsection{Objective functionals}
Now we rewrite the objective functionals \eqref{F} of the players.
Define
$$
\left[\frac{\big((A_j-1) I_j(t)-B_j(t)\big)^{1- \gamma_j}}{1-\gamma_j}\right](x):=\frac{\big((A_j(x)-1) i_j(t, x)-b_j(t,x)\big)^{1- \gamma_j}}{1-\gamma_j}, \ \ \forall (t,x)\in \R_+\times M_j.
$$
Then, 
\begin{equation}
\label{eq:utnew}
\int_0^{\infty} e^{-\rho_j t} \left(\int_{M_j}
\frac{\big((A_j(x)-1) i_j(t,x)-b_j(t,x)\big)^{1-\gamma_j}}{1-\gamma_j}
dx\right) dt=
\int_0^{\infty} e^{-\rho_j t}
\left\langle\frac{\big((A_j-1) I_j(t)-B_j(t)\big)^{1- \gamma_j}}{1-\gamma_j},\mathbf{1}_{M_j}\right\rangle_{H_j}dt.
\end{equation}
On the other hand, 
$$
\int_0^{\infty} e^{-\rho_j t} \left(\int_{M_j}w_j(x)p(t, x) dx\right)dt=  \int_0^{\infty} e^{-\rho_j t}
\big\langle\widehat{w_j},P(t)\big\rangle_H dt,
$$
where
$\widehat{w_j}:S^1\to \R$ is defined as
\begin{equation}
\label{eq:defwhatj}
\widehat{w_j}(x):=\begin{cases} w_j(x), \ \ \ \mbox{if} \ x\in M_j,\\
0, \ \ \ \ \ \ \ \ \ \mbox{if} \ x\notin M_j.
\end{cases}
\end{equation}
Set
$$
e^{-(\rho_j-\L)t}:=e^{-\rho_j t} e^{t\L}, \ \ \ t\geq 0.
$$
By \eqref{eq:Ymild}
we have
\begin{equation}
\label{Finew}
\begin{split}
&
= \int_0^{\infty} e^{-\rho_j t}
\big\langle\widehat{w_j},P(t)\big\rangle_H dt
\\&= \int_0^{\infty} e^{-\rho_j t}
\left\langle \widehat{w_j},e^{t\L} p_0+\int_0^t e^{(t-s)\L}
\left[I(s)-\eta B(s)^{\theta}\right]ds\right\rangle_H dt
\\&
= \left\langle\widehat{w_j}, \int_0^{\infty} e^{-(\rho_j -\L) t}p_0\, dt\right\rangle_H+
\int_0^{\infty} e^{-\rho_j  t} \left\langle \widehat{w_j},\int_0^t e^{(t-s)\L}
\left[I(s)-\eta B(s)^{\theta}\right]ds\right\rangle_H dt
\end{split}
\end{equation}
Hence, the original functional $J_j$ of agent $j$ rewrites as  
 \begin{eqnarray}\label{eq:Finfdim}
J_j^{(I_{-j},B_{-j})}\big(p_0;(I_j,B_j))&=&
\left\langle\widehat{w_j}, \int_0^{\infty} e^{-(\rho_j -\L) t}p_0\, dt\right\rangle_H + \int_0^{\infty} e^{-\rho_j t}
\left\langle\frac{\big((A_j-1) I_j(t)-B_j(t)\big)^{1- \gamma_j}}{1-\gamma_j},\mathbf{1}_{M_j}\right\rangle_{H_j}dt\nonumber 
\\[3mm]
&&+
\int_0^{\infty} e^{-\rho_j  t} \left\langle \widehat{w_j},
\int_0^t e^{(t-s)\L}\left[I(s)-\eta B(s)^{\theta}\right]ds\right\rangle_H.
\end{eqnarray}
The reformulated problem for the agent $j$
consists then in maximizing the functional $J_j$ in \eqref{eq:Finfdim}, over the set $\mathcal{A}_j$ and
under the state equation \eqref{SEi}.
Note  that, in this reformulation, the first term of the functional $J_j$
is the only one which depends on the initial datum.

{\section{Reformulation of the functionals and analysis of the solution}}
\label{appsec:Nash}

\subsection{The functions $\alpha_j$ and their properties}\label{app:alpha} 
The following assumption, which will be a standing assumption from now on, ensures that $\rho_j$ belongs to the resolvent set of the operator $\L$ (see Engel and Nagel, 1995).
\begin{Assumption} \label{hp:rhoj}
	For all $j=1,\dots,N$ we have
	$$
	\rho_j > \left|\left(\frac{1}{2}v'+\delta\right)\wedge 0\right|_\infty.
	$$
\end{Assumption}
By Assumption \ref{hp:rhoj}, the operator $$\rho_j-\mathcal{L}:\ D(\mathcal{L})\,\longrightarrow\, H$$ is invertible with bounded inverse $(\rho_j-\L)^{-1}:H\to H$ and
\begin{equation}\label{risol}
\displaystyle{{(\rho_j -\L)^{-1}}h=\int_0^{\infty} e^{-(\rho_j-\L)t}h\,dt}\ \ \  \ \ \ \forall h\in H.
\end{equation}
We define
\begin{equation}\label{defalpha}
\alpha_j:=(\rho_j-\L^{*})^{-1}\, \widehat{w_j}\in D(\L)= W^{2,2}(S^1;\R).
\end{equation}
By definition $\alpha_j$ is therefore the unique  solution in ${D(\L^{*})}=W^{2,2}(S^1;\R)$ of the abstract ODE
\begin{equation}\label{eq:alpha}
{\left(\rho_j-\mathcal{L}^{*}\right)}\alpha_j=\widehat{w_j}.
\end{equation}
More explicitly, $\alpha_j$, as defined in \eqref{defalpha}, is the unique solution in the class $W^{2,2}(S^1;\R)$ to
\begin{equation}\label{ODEalpha}
{\displaystyle{\rho_j\alpha_j(x)- \sigma \alpha_j''(x)+v(x)  \alpha_j'(x)+(v'(x)+\delta(x))\alpha_j(x)=\widehat{w_j}(x),} \ \ \ x\in S^1,}
\end{equation}
meaning that it verifies \eqref{ODEalpha} pointwise almost everywhere in $S^1$ (this will be, from now on, the meaning of solution to such equation).\footnote{The latter ODE can be also viewed as on ODE on the interval $[0,2\pi]$ with periodic boundary conditions:
	$$\begin{cases}
	{\displaystyle{\rho_j\alpha_j(x)- \sigma \alpha_j''(x)+v(x)  \alpha_j'(x)+(v'(x)+\delta(x))\alpha_j(x)=\widehat{w_j}(x),} \ \ \ x\in(0,2\pi),}\\\\
	\alpha_j(0)=\alpha_j(2\pi), \ \ \alpha_j'(0)=\alpha_j'(2\pi),
	\end{cases}
	$$
	falling into the Sturm-Liouville theory with periodic boundary conditions (see Coddington and Levinson, 2013).}  By Sobolev embedding $W^{2,2}(S^1;\R)\subset C^{1}(S^1;\R)$, so $\alpha_j\in C^1(S^1;\R).$
\smallskip

We state an a priori estimate for the solution of the equation $(\lambda-{\L^{*}})g=f$, when ${\inf_{{x\in S^{1}}}(\lambda +\delta(x)+v'(x))>0}$ and $f$ has a suitable regularity.
\begin{Proposition}
	\label{prop:mp}
	Let $M\subset S^1$ be an open nonempty interval. Let $f\in  L^2(S^1;\R)$ be such that $f|_{\overline{M}}$ and $f|_{S^1\setminus\overline{M}}$ are continuous, $f>0$ on $\overline{M}$, and $f=0$ on $S^1\setminus\overline{M}$. Finally, let $\lambda>0$ be such that ${\inf_{{x\in S^{1}}}(\lambda +\delta(x)+v'(x))>0}$ and  $g\in W^{2,2}(S^1;\R)$   such that $(\lambda-{\L^{*}})g=f$.
	Then there exists $\kappa>0$ such that
	$$\kappa\leq g\leq \max_{\overline{M}}f.$$
\end{Proposition}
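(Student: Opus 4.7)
The equation $(\lambda-\L^{*})g=f$ reads, pointwise,
\[
-\sigma g''(x) + v(x) g'(x) + c(x) g(x) = f(x), \qquad x\in S^{1},
\]
where $c(x):=\lambda+v'(x)+\delta(x)$ is continuous on $S^{1}$ with $c_{0}:=\min_{S^{1}} c>0$ by hypothesis, and $f\ge 0$ on $S^{1}$ with $f>0$ on $\overline{M}$. Since $g\in W^{2,2}(S^{1})$, the one-dimensional Sobolev embedding gives $g\in C^{1}(S^{1})$; rewriting $g''=\sigma^{-1}(v g'+c g-f)$ further shows that $g''$ inherits the continuity of $f$, i.e.\ is continuous on $\overline{M}$ and on $\overline{S^{1}\setminus M}$ separately, with possible jumps only at the two endpoints of $\partial M$. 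This regularity is enough to apply pointwise maximum-principle arguments at any extremum lying off $\partial M$; extrema located exactly at $\partial M$ are handled by a small perturbation or, equivalently, by taking the weak formulation.

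For the upper bound, I would use the weak maximum principle. Let $x_{1}\in S^{1}$ be a maximum point of $g$ on the compact (boundaryless) manifold $S^{1}$; then $g'(x_{1})=0$ and $g''(x_{1})\le 0$ at a regular point, so substituting into the equation gives
\[
c(x_{1})\,g(x_{1})=f(x_{1})+\sigma g''(x_{1})\le f(x_{1})\le \max_{\overline{M}} f.
\]
Equivalently, one compares $g$ with a constant supersolution $F$ built from the right-hand side: the difference $u:=F-g$ satisfies $(\lambda-\L^{*})u\ge 0$, and applying the same pointwise test at a minimum of $u$ yields $u\ge 0$, hence the stated upper bound.

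For the strict lower bound, which is the essential content needed in Corollary \ref{prop:max}, I would invoke the strong maximum principle. Let $x_{0}$ be a minimum point of $g$ and suppose, for contradiction, that $g(x_{0})\le 0$. Setting $w:=-g$, the equation becomes $Lw:=\sigma w''-v w'-c w=f\ge 0$, for an operator $L$ uniformly elliptic with continuous coefficients and zero-order term $-c\le 0$, and $w$ attains a non-negative maximum at the interior (boundaryless) point $x_{0}$. The strong maximum principle then forces $w$ to be constant, and plugging back into $Lw=f$ yields $f\equiv -c w$; since $f$ vanishes on the non-empty open set $S^{1}\setminus\overline{M}$ while $c>0$ there, we deduce $w\equiv 0$, so $g\equiv 0$ and then $f\equiv 0$, contradicting $f>0$ on $\overline{M}$. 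Hence $g(x_{0})>0$, and by continuity on the compact $S^{1}$ the constant $\kappa:=\min_{S^{1}} g$ is strictly positive.

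The main obstacle is the lack of pointwise $C^{2}$-regularity at the two points of $\partial M$, which a priori blocks a naive pointwise test at an extremum located there. This is circumvented routinely, either by mollifying $f$ and exploiting continuous dependence of $g$ on $f$ (so that the sign information passes to the limit), or by applying the strong maximum principle in its $W^{2,p}$-form, which is available here since $f\in L^{\infty}(S^{1})$ and the coefficients of $\L^{*}$ are continuous.
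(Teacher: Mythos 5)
Your proof is correct in substance but reaches the crucial lower bound by a genuinely different route from the paper's. The paper argues entirely by hand: it locates a minimum point $x_*$ of $g$, splits into the three cases $x_*\in M$, $x_*\in\partial M$, $x_*\in S^1\setminus\overline M$, uses the existence of the one-sided limits $g''(a^+)$, $g''(b^-)$ to run the pointwise test even at the two points where $g$ fails to be $C^2$, and disposes of the delicate subcase ($\kappa=0$ with $x_*\notin\overline M$) by noting that $g(x_*)=g'(x_*)=0$ forces $g\equiv 0$ on $S^1\setminus\overline M$ by uniqueness for the linear homogeneous ODE, whence $g(a)=g'(a)=0$ and $g''(a^+)\ge 0$ contradict $f(a)>0$. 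You instead invoke the strong maximum principle in its $W^{2,p}$ form applied to $w=-g$ (legitimate here: $n=1$, $W^{2,2}(S^1)\subset W^{2,1}_{\mathrm{loc}}$, continuous coefficients, zero-order term $-c\le 0$), and derive the same contradiction from constancy of $w$ together with $f=0$ off $\overline M$. Your route is shorter and delegates the regularity issue at $\partial M$ to a citable theorem; it is essentially the packaged version of what the paper does explicitly, while the paper's version buys self-containedness. Two caveats. First, your final display for the upper bound establishes only $c(x_1)g(x_1)\le\max_{\overline M}f$, hence $g\le \max_{\overline M}f/\inf_{S^1}c$; to land exactly on $g\le\max_{\overline M}f$ you would need $\inf c\ge 1$, which the hypotheses do not give. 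This imprecision is inherited from the statement itself (Corollary \ref{prop:max} in fact uses the bound $\tfrac{1}{\rho_j}\max\overline{w_j}$, consistent with dividing by $c\ge\lambda$), so it is not a defect specific to your argument, but you should record the constant you actually obtain. Second, your contradiction step relies on $S^1\setminus\overline M$ being a nonempty open set; in the degenerate case $\overline M=S^1$ the identity $f\equiv -cw_0\le 0$ already contradicts $f>0$ on $\overline M$ directly, so the argument survives, but it is worth saying so.
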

\begin{proof}
	First we notice that, since $g\in W^{2,2}(S^1;\R)\subset C^1(S^1;R)$ and $S^1$ is compact,
	the function $g$ admits maximum and minimum over  $S^1$.
	
	\emph{Estimate from below.}  We identify $M$ with an open interval $(a,b)$, so $\{a,b\}=\partial M$.
	The fact that $(\lambda -\L)g=f$ and the assumption $\sigma>0$ yield
	$$ g''(x)={\frac{1}{\sigma}\left[(\lambda  +\delta (x)+v'(x))g(x) +v(x)g'(x)-f(x)\right]}, \ \ \ \mbox{for a.e.} \ x\in S^1.$$
	Since $g\in C^1(S^1;\R)$, it follows that
	$g\in C^2(S^1\setminus \partial M;\R)$
	and
	\begin{equation}\label{pp2}
	g''(x)={\frac{1}{\sigma}\left[(\lambda  +\delta (x)+v'(x))g(x) +v(x)g'(x)-f(x)\right]}, \ \ \ \forall   x\in S^1\setminus\partial M.
	\end{equation}
	Then, From \eqref{pp2} and Assumption \ref{ass:data}, we see that
	there exist finite   $g''(a^+):=\lim_{x\to a^+}g''(x)$ and  $g''(b^-):=\lim_{x\to b^-}g''(x)$ and their value are
	\begin{equation}\label{ppnew}
	g''(a^+)={\frac{1}{\sigma}\left[(\lambda  +\delta (a)+v'(a))g(a) +v(a)g'(a)-f(a)\right]},
	\end{equation}
	$$
	g''(b^-)={\frac{1}{\sigma}\left[(\lambda  +\delta (b)+v'(b))g(b) +v(b)g'(b)-f(b)\right]}.
	$$
	
	\noindent Let $x_*\in S^1$ be a minimum point of $g$ over $S^1$ and set $\kappa:=g(x_*)$.
	Clearly, since $g\in C^1(S^1,\R)$ it must be $g'(x_*)=0$.
	We distinguish three cases.
	
	\emph{Case 1: $x_*\in {M}$.}
	We have $g'(x_*)=0$ and $g''(x_*)\geq 0$.
	Plugging this into \eqref{pp2} we get
	$$
	(\lambda  +{\delta (x_*)+v'(x_{*}))}\kappa =\sigma g''(x_*)+f(x_*)> 0,
	$$
	hence, we conclude $\kappa>0$.
	
	\emph{Case 2: $x_*\in \{a,b\}$.} Assume, without loss of generality, that $x_*=a$.
	One has  $g'(a)=0$ and $\alpha_j''(a^+)\geq 0$.
	Plugging this into \eqref{ppnew} we get
	\begin{equation}\label{pp}
	0\leq {g}''(a^+)=\frac{1}{\sigma} \left[(\lambda  +\delta (a){+v'(a)})\kappa -f(a)\right],
	\end{equation}
	Since $f(a)>0$, we get $\kappa>0$.
	
	\emph{Case 3: $x_*\in S^1\setminus \overline{M}$.}
	In this case, as $f(x_*)=0$, arguing as before  we get
	$$
	{(\lambda  +\delta (x_*)+v'(x_*))}\kappa =\sigma g''(x_*)\geq 0,
	$$
	hence $\kappa\geq 0$. If $\kappa>0$, we have concluded. If $\kappa=0$, then the fact that $g(x_*)=g'_j(x_*)=0$ and \eqref{pp2} yield $g\equiv 0$ on $S^1\setminus {\overline{M}}$. By continuity of $g, g'$, we have also ${g}(a)={g}'(a)=0$. Moreover, it must be ${g}\geq \kappa=0$ on $\overline{M}$, as $x_*$ is a minimum point over $S^1$. Hence, it must be ${g}''(a^+)\geq 0$. Hence we agein get \eqref{pp}, a contradiction if $k=0$ as $f(a)>0$ by assumption.
	
	\emph{Estimate from above.} This part follows by arguments similar to the ones used for the estimate from below.
\end{proof}
Proposition \ref{prop:mp} immediatly yields the following two corollaries.
\begin{Corollary}\label{cor:pp}
	Let $\lambda>0$. The operator $(\lambda-{\L^{*}})^{-1}:H\to D({\L^{*}})\subset H$ is positivity preserving, i.e.
	$$
	f\in H, \ f\geq 0 \ \  \mbox{a.e.}  \ \ \ \Longrightarrow \ \ {(\lambda-\L^{*})^{-1}}f\geq 0 \ \mbox{a.e.}.
	$$
\end{Corollary}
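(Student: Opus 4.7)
The plan is to deduce the result from Proposition \ref{prop:mp} by a linearity-and-density argument. The key observation is that, for any open interval $M\subset S^1$, the function $f=\mathbf{1}_{\overline{M}}$ satisfies the hypotheses of Proposition \ref{prop:mp}: it equals $1$ on $\overline{M}$ and $0$ on $S^1\setminus\overline{M}$, so each of the two restrictions is trivially continuous, and $f>0$ on $\overline{M}$. Consequently, the unique $g\in W^{2,2}(S^1;\R)$ solving $(\lambda-\L^{*})g=\mathbf{1}_{\overline{M}}$ satisfies $g\geq \kappa>0$, in particular $g\geq 0$ almost everywhere.

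By linearity of $(\lambda-\L^{*})^{-1}$, for any finite combination $f=\sum_{k=1}^K c_k\mathbf{1}_{\overline{M_k}}$ with $c_k\geq 0$ and $M_k\subset S^1$ open intervals, one has
$$(\lambda-\L^{*})^{-1}f=\sum_{k=1}^K c_k\,(\lambda-\L^{*})^{-1}\mathbf{1}_{\overline{M_k}}\geq 0 \quad \text{a.e.}$$
Denote by $\mathcal{F}$ the cone of all such finite combinations. By standard measure-theoretic approximation---every nonnegative element of $L^2(S^1)$ is approximated in $L^2$ by nonnegative simple functions, and every measurable subset of $S^1$ is approximated, up to sets of arbitrarily small measure, by finite disjoint unions of arcs---the cone $\mathcal{F}$ is dense in $H^+$ with respect to the norm of $H$.

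To conclude, fix $f\in H^+$ and pick a sequence $f_n\in\mathcal{F}$ with $f_n\to f$ in $H$. Since $\lambda$ lies in the resolvent set of $\L^{*}$ (Assumption \ref{hp:rhoj} guarantees this for the values of $\lambda$ relevant to Corollary \ref{cor:pp}), the operator $(\lambda-\L^{*})^{-1}$ is bounded on $H$, so $g_n:=(\lambda-\L^{*})^{-1}f_n$ converges to $g:=(\lambda-\L^{*})^{-1}f$ in $H$. Extracting an almost-everywhere convergent subsequence from $\{g_n\}$ and using $g_n\geq 0$ a.e. for each $n$, we deduce $g\geq 0$ a.e., which is the desired positivity preservation.

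The argument is entirely routine once Proposition \ref{prop:mp} is in hand; the only step requiring a bit of care is the density of $\mathcal{F}$ in $H^+$, but this poses no substantial obstacle, as it reduces to standard approximation of Borel sets on the circle by finite unions of arcs.
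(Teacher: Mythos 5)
Your argument is correct and follows essentially the same route as the paper: the paper's proof is likewise a density argument built on Proposition \ref{prop:mp} together with the boundedness of the resolvent, the only difference being that it approximates $f\in H^+$ by continuous functions rather than by nonnegative combinations of indicators of arcs. If anything, your choice of approximating class fits the hypotheses of Proposition \ref{prop:mp} (support in a closed arc, strict positivity there) more literally, at the modest cost of the extra linearity and arc-approximation steps.
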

\begin{proof}
	The claim immediately follows by Proposition \ref{prop:mp} due to density of $C(S^1;\R)$ in $H$.
\end{proof}

\begin{Corollary}\label{prop:max}
	Let Assumption \ref{ass:data} hold. Let $\alpha_j\in W^{2,2}(S^1;\R)\subset C^{1}(S^1;\R)$ be the solution to \eqref{ODEalpha}.
	There exists  $\kappa_j>0$ such that, for every $j=1,...,N$, we have   $$\kappa_j \leq \alpha_j\leq \mathcal{K}_j,$$ where
	$
	\mathcal{K}_j:=\max_{\overline{M_j}} \frac{1}{\rho_j} \overline{w_j}.
	$
\end{Corollary}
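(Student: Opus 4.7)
The statement is a direct specialization of Proposition \ref{prop:mp} to the defining equation of $\alpha_j$. The plan is to apply that proposition with the choices $g=\alpha_j$, $f=\widehat{w_j}$, $M=M_j$ and $\lambda=\rho_j$, after verifying each of its hypotheses.

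First I would check the data. By Assumption \ref{ass:data}, the continuous extension $\overline{w_j}$ of $w_j$ is strictly positive on the compact set $\overline{M_j}$, so $\widehat{w_j}$ is bounded and lies in $L^2(S^1;\R)$, its restriction to $\overline{M_j}$ coincides with the continuous strictly positive function $\overline{w_j}$, and its restriction to $S^1\setminus\overline{M_j}$ is identically (hence continuously) zero. Assumption \ref{hp:rhoj} supplies the coercivity $\inf_{x\in S^1}(\rho_j+\delta(x)+v'(x))>0$ needed to invoke Proposition \ref{prop:mp}. Moreover, by its very construction in \eqref{defalpha}--\eqref{ODEalpha}, the function $\alpha_j$ belongs to $W^{2,2}(S^1;\R)$ and satisfies $(\rho_j-\L^\star)\alpha_j=\widehat{w_j}$ pointwise almost everywhere.

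Proposition \ref{prop:mp} then immediately yields a constant $\kappa_j>0$ with $\alpha_j\geq \kappa_j$ on $S^1$. To obtain the upper bound in the sharper form $\mathcal{K}_j=\rho_j^{-1}\max_{\overline{M_j}}\overline{w_j}$, I would revisit the maximum-point argument from the proof of Proposition \ref{prop:mp}. At a global maximum $x^\star$ of $\alpha_j$ on $S^1$ one has $\alpha_j'(x^\star)=0$ and $\alpha_j''(x^\star)\leq 0$ (interpreted as one-sided second derivatives if $x^\star\in\partial M_j$, exactly as in the three-case analysis of Proposition \ref{prop:mp}). Substituting into \eqref{ODEalpha} yields
$$(\rho_j+v'(x^\star)+\delta(x^\star))\,\alpha_j(x^\star)\;\leq\;\widehat{w_j}(x^\star)\;\leq\;\max_{\overline{M_j}}\overline{w_j},$$
from which the bound $\alpha_j(x^\star)\leq \mathcal{K}_j$ is read off in the cases $x^\star\in M_j$ and $x^\star\in\partial M_j$; in the remaining case $x^\star\in S^1\setminus\overline{M_j}$ one has $\widehat{w_j}(x^\star)=0$, so the same inequality forces $\alpha_j(x^\star)\leq 0\leq\mathcal{K}_j$.

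The only mild technical point is the handling of the discontinuity of $\widehat{w_j}$ across $\partial M_j$, which requires the use of one-sided limits of $\alpha_j''$ at the endpoints; this is already dealt with in the proof of Proposition \ref{prop:mp} and needs only to be reproduced verbatim. No further ingredient beyond Proposition \ref{prop:mp} and the definition of $\alpha_j$ is required.
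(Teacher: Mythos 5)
Your proposal is correct and follows essentially the same route as the paper, whose entire proof is the sentence ``this is a direct application of Proposition \ref{prop:mp}.'' You are in fact slightly more careful than the paper: you noticed that the upper bound $\mathcal{K}_j=\rho_j^{-1}\max_{\overline{M_j}}\overline{w_j}$ does not literally coincide with the bound $\max_{\overline{M}}f$ stated in Proposition \ref{prop:mp}, and you patched this by re-running the maximum-point argument. One caveat: to read off $\alpha_j(x^\star)\leq\mathcal{K}_j$ from $(\rho_j+v'(x^\star)+\delta(x^\star))\alpha_j(x^\star)\leq\max_{\overline{M_j}}\overline{w_j}$ you implicitly need $v'(x^\star)+\delta(x^\star)\geq 0$ (together with $\alpha_j(x^\star)\geq 0$); this holds in the no-advection case \eqref{ODEalpha2-sect2bis} of the main text, where $v\equiv 0$ and $\delta\geq 0$, but is not guaranteed under the general Assumption \ref{ass:data} since $v'$ must change sign on $S^1$ unless $v$ is constant. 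This is a gap shared with the paper's own statement of Proposition \ref{prop:mp}, so your write-up is at least as rigorous as the original.
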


\begin{proof}
	Due to Assumption \ref{ass:data}, this is a direct application of Proposition \ref{prop:mp}.
\end{proof}

In the next results, we investigate the dependence of $\alpha_j$ on the data. We start with a convergence result on the diffusion coefficient $\sigma$.
\begin{Proposition}
	\label{pr:limitsigmato0toinfty}
	Let Assumption \ref{ass:data} hold.
	Denote by $\alpha_{j,{\sigma}}$ the unique solution to \eqref{ODEalpha} when $v(\cdot)\equiv 0$. We have
	$$
	\lim_{{\sigma}\to 0^+} \alpha_{j,{\sigma}}(x)=\frac{w_j(x)}{\rho_j+\delta(x)}, \ \ \ \ \lim_{{\sigma}\to +\infty} \alpha_{j,{\sigma}}(x)=\frac{\int_{S^1}w_j(x)dx}{\int_{S^1}(\rho_j+\delta(x))dx}, \ \ \ \forall x\in S^1.
	$$
\end{Proposition}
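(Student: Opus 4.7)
The proof strategy rests on the uniform bound $0 \le \alpha_{j,\sigma} \le \mathcal{K}_j$ provided by Corollary \ref{prop:max}, combined with the specific structure of the ODE (with $v\equiv 0$)
$$-\sigma\alpha_{j,\sigma}''(x) + (\rho_j+\delta(x))\alpha_{j,\sigma}(x) = \widehat{w_j}(x), \qquad x\in S^1,$$
which I will treat as a fast-averaging problem for $\sigma\to +\infty$ and as a singular perturbation problem for $\sigma\to 0^+$.

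For $\sigma\to+\infty$, I would rewrite the ODE as
$$\alpha_{j,\sigma}''(x) = \sigma^{-1}\bigl[(\rho_j+\delta(x))\alpha_{j,\sigma}(x)-\widehat{w_j}(x)\bigr],$$
whose right-hand side is bounded in $L^\infty(S^1)$ by $C/\sigma$ thanks to the uniform bound on $\alpha_{j,\sigma}$. Hence $\|\alpha_{j,\sigma}''\|_\infty\to 0$, and since $\alpha_{j,\sigma}'$ is periodic with $\int_{S^1}\alpha_{j,\sigma}'\,dx=0$, integrating $\alpha_{j,\sigma}''$ along $S^1$ forces $\|\alpha_{j,\sigma}'\|_\infty\to 0$ as well. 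Writing $\alpha_{j,\sigma}(x)=c_\sigma+r_\sigma(x)$ with $c_\sigma$ the spatial mean and $\|r_\sigma\|_\infty\to 0$, I would then integrate the original ODE over $S^1$ (where the $\alpha_{j,\sigma}''$ term vanishes by periodicity) to obtain $c_\sigma\int_{S^1}(\rho_j+\delta)\,dx+o(1)=\int_{S^1}\widehat{w_j}\,dx$, which pins down the limit uniformly in $x$.

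For $\sigma\to 0^+$, I would first derive energy estimates by multiplying the ODE by $\alpha_{j,\sigma}$ and integrating by parts (with no boundary terms, by periodicity):
$$\sigma\int_{S^1}(\alpha_{j,\sigma}')^2\,dx+\int_{S^1}(\rho_j+\delta)\alpha_{j,\sigma}^2\,dx=\int_{S^1}\widehat{w_j}\alpha_{j,\sigma}\,dx,$$
yielding uniform $L^2$ bounds on $\alpha_{j,\sigma}$ and on $\sqrt\sigma\,\alpha_{j,\sigma}'$. Extracting a weakly convergent subsequence $\alpha_{j,\sigma}\rightharpoonup\alpha^\star$ in $L^2$ and testing the weak formulation against $\varphi\in C^\infty(S^1)$, the perturbative term $\sigma\int_{S^1}\alpha_{j,\sigma}'\varphi'\,dx$ is controlled by $\sqrt\sigma\,\|\sqrt\sigma\,\alpha_{j,\sigma}'\|_{L^2}\|\varphi'\|_{L^2}\to 0$, so $(\rho_j+\delta)\alpha^\star=\widehat{w_j}$ in $L^2$, forcing $\alpha^\star=\widehat{w_j}/(\rho_j+\delta)$ and convergence of the whole family. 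To upgrade to pointwise convergence at an interior point $x_0\in S^1\setminus\partial M_j$, I would deploy a local barrier argument on a short interval $U$ around $x_0$ on which $\widehat{w_j}$ is continuous, comparing $\alpha_{j,\sigma}$ with explicit super- and sub-solutions of the form
$$\frac{\widehat{w_j}(x_0)}{\rho_j+\delta(x_0)}\pm\varepsilon\pm C\exp(-d(x)/\sqrt\sigma),$$
where $d(x)$ denotes the distance to $\partial U$, using the $L^2$ bound to control the boundary values.

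The main obstacle is the $\sigma\to 0^+$ case, specifically the passage from the soft $L^2$ identification of $\alpha^\star$ to the pointwise statement, because $\widehat{w_j}$ is discontinuous across $\partial M_j$ and a boundary layer of width $O(\sqrt\sigma)$ forms there; the barriers have to be calibrated so that their values at $x_0$ collapse to $\widehat{w_j}(x_0)/(\rho_j+\delta(x_0))$ while still dominating (resp. being dominated by) $\alpha_{j,\sigma}$ on $\partial U$. By contrast, the $\sigma\to+\infty$ limit reduces to a direct oscillation estimate combined with an elementary averaged identity.
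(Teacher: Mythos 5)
Your proof is correct in substance but follows a genuinely different route from the paper's, in both limits. For $\sigma\to+\infty$ the paper also starts from $\alpha_{j,\sigma}''=\sigma^{-1}[(\rho_j+\delta)\alpha_{j,\sigma}-\widehat{w_j}]$, but then invokes Ascoli--Arzel\`a to extract uniformly convergent subsequences, identifies the limit as a viscosity solution of $\alpha''=0$ (hence a constant), and pins down the constant by integrating the ODE over $S^1$; your oscillation estimate ($\|\alpha_{j,\sigma}''\|_\infty=O(1/\sigma)$, hence $\|\alpha_{j,\sigma}'\|_\infty\to 0$ since $\alpha_{j,\sigma}'$ has zero mean, hence $\alpha_{j,\sigma}$ is uniformly close to its mean) reaches the same averaged identity without any compactness or stability machinery, and is arguably cleaner and more quantitative. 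For $\sigma\to 0^+$ the paper uses the half-relaxed upper and lower limits $(\alpha_j)^*,(\alpha_j)_*$, the stability of viscosity solutions under the vanishing-viscosity limit, and comparison for the zeroth-order limit equation $(\rho_j+\delta)\alpha=\widehat{w_j}$; you instead do a classical singular-perturbation analysis: energy estimates, weak $L^2$ identification of the limit, and explicit exponential barriers near an interior point. Your route is more self-contained but puts the burden on you to calibrate the barrier correctly: with $d(x)$ the distance to $\partial U$, the profile $C e^{-d(x)/\sqrt{\sigma}}$ contributes $-\sigma\beta''=-Ce^{-d/\sqrt{\sigma}}$ on each smooth piece, so the supersolution inequality $-\sigma\beta''+(\rho_j+\delta)\beta\ge\widehat{w_j}$ only holds if you replace the exponent by $\lambda d(x)/\sqrt{\sigma}$ with $\lambda^2\le\rho_j+\inf_{S^1}\delta$ (the kink of $d$ at the midpoint of $U$ is harmless, being a concave corner of a supersolution, or can be avoided by summing two exponentials anchored at the two endpoints of $U$); also the boundary values on $\partial U$ are already controlled by the uniform bound $\mathcal{K}_j$ of Corollary \ref{prop:max}, so the $L^2$ bound is not needed there. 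Both your argument and the paper's establish the pointwise limit only where $\widehat{w_j}/(\rho_j+\delta)$ is continuous, i.e.\ off the finitely many points of $\partial M_j$ where a boundary layer forms; you are in fact more explicit than the paper about this restriction.
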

\begin{proof}

	\emph{Case $\sigma\to 0^+$.}
	First, notice that under our assumptions, \eqref{ODEalpha} reads as
	\begin{equation}\label{ODEalpha4}
	\displaystyle{\rho_j\alpha_{j,{\sigma}}(x)- {\sigma}\alpha_{j,{\sigma}}''(x)+\delta(x)\alpha_{j,{\sigma}}(x)=\widehat{w_j}(x), \ \ \ x\in S^1,}
	\end{equation}
	By Proposition \ref{prop:max} we have
	$$
	(\alpha_{j})_*(x):=\liminf_{\overline{{\sigma}}\to 0^+}  \big\{\alpha_{j,{\sigma}}(z): \ {\sigma}\leq \overline{{\sigma}}, \ z\in S^1, \ |z-x|\leq 1/\overline{{\sigma}}\big\}\geq 0, $$
	$$(\alpha_{j})^*(x):=\limsup_{\overline{{\sigma}}\to 0^+}  \big\{\alpha_{j,{\sigma}}(z): \ {\sigma}\leq \overline{{\sigma}}, \ z\in S^1, \ |z-x|\leq 1/\overline{{\sigma}}\big\}\leq \mathcal{K}_{j}.
	$$
	Clearly $(\alpha_{j})_*\leq (\alpha_{j})^*$.
	By stability of viscosity solutions (see e.g. Crandal-Ishii-Lions, 1992), the latter functions are, respectively, (viscosity) super- and sub-solution to the limit equation
	$$
	\displaystyle{\rho_j\alpha_{j,0}(x)+\delta(x)\alpha_{j,0}(x)=\widehat{w_j}(x)}
	$$
	whose unique solution is
	$$
	\alpha_{j,0}(x)=\frac{\widehat{w_j}(x)}{\rho+\delta(x)}.
	$$
	By standard comparison of viscosity solutions one has $(\alpha_{j})_*\geq \alpha_{j,0}\geq (\alpha_{j})^*$. It follows that
	$$\exists \lim_{{\sigma}\to 0^+} \alpha_{j,{\sigma}}(x)= (\alpha_{j})_*(x)= (\alpha_{j})^*(x)= \alpha_{j,0}(x) \ \ \ \forall x\in S^1.$$
	
	\emph{Case $\sigma\to +\infty$.}
	First, we rewrite \eqref{ODEalpha4} as
	\begin{equation}\label{ODEalpha5}
	\displaystyle{\alpha_j''(x)=\frac{1}{{\sigma}}\left[\rho_j\alpha_j(x)+\delta(x)\alpha_j(x)-\widehat{w_j}(x)\right], \ \ \ x\in S^1,}
	\end{equation}
	Notice now that $\alpha''_{j,\sigma}$ is equi-bounded and equi-uniformly continuous withe respect to $\sigma\geq 1$. Hence, by Ascoli-Arzel\`a Theorem we have that, from each sequence  $\sigma_n\to+\infty$ we can extract a subsequence $\sigma_{n_k}$ such that
	$$
	\lim_{k\to+\infty} \alpha_{j,\sigma_{n_k}}= \alpha_{j,\infty} \ \ \ \mbox{uniformly on}  \ x\in S^1,
	$$
	for some  $\alpha_{j,\infty}\in C(S^1;\R)$.
	Again by the stability of viscosity solutions we see that
	$\alpha_{j,\infty}$ must solve the limit equation
	$$
	\alpha_{j,\infty}''(x)=0,  \ \ \ \ x\in S^1,
	$$
	hence, it must be $\alpha_{j,\infty}\equiv c_0$ for some $c_0\geq 0$. to find the value of $c_0$ we may integrate  \eqref{ODEalpha4} over $S^1$ getting
	$$
	\displaystyle{\int_{S^1} (\rho_j+\delta(x))\alpha_{j,{\sigma}}(x)dx=\int_{S^1}\widehat{w_j}(x)dx.}
	$$
	Letting $\sigma\to +\infty$ above we get
	$$
	c_0= \frac{\int_{S^1}\widehat{w_j}(x)dx}{\int_{S^1} (\rho_j+\delta(x))}.
	$$
	As this value does not depend on the sequence $\sigma_n$ chosen,
	the claim follows.
\end{proof}

\begin{Proposition}\label{prop:rhodelta-app}
	Let Assumption \ref{ass:data} hold and let $\alpha_{j}^{\rho_j,\delta(\cdot), \widehat w_j(\cdot)}$ be the unique solution to \eqref{ODEalpha} for given $\rho_j,\delta(\cdot), \widehat w_j(\cdot)$. Then $\alpha_j$ is nonincreasing with respect to space homogeneous increments of $\rho_j+\delta(\cdot)$ meaning that
	$$
	h,k\in \R, \ h+k\geq 0   \  \ \Longleftrightarrow \ \ \alpha_{j}^{\rho_j+h,\delta(\cdot)+k,w_j(\cdot)}(x)\leq \alpha_{j}^{\rho_j,\delta(\cdot)w_j(\cdot)}(x) \ \ \ \forall x\in S^1.
	$$
	Moreover, with the same meaning, $\alpha_j$ is nondecreasing with respect to space homogeneous
	increments of $\widehat w_j(\cdot)$.
\end{Proposition}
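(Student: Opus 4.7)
The plan is to reduce both monotonicity statements to the positivity of the resolvent $(\rho_j - \mathcal{L}^\star)^{-1}$ established in Corollary \ref{cor:pp}. For the first statement, set $m := h + k \geq 0$. Since the increment of $\rho_j + \delta(\cdot)$ is space-homogeneous, the operator obtained on the left-hand side of \eqref{ODEalpha} under the perturbation $(\rho_j, \delta) \mapsto (\rho_j + h, \delta + k)$ is exactly $\rho_j + m - \mathcal{L}^\star$. Assuming Assumption \ref{hp:rhoj} still holds at the perturbed parameters (so that the perturbed resolvent is well defined and positivity preserving), the perturbed solution admits the representation $\alpha_j^{\rho_j+h,\delta(\cdot)+k,\widehat w_j} = (\rho_j + m - \mathcal{L}^\star)^{-1}\widehat w_j$.

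Using the resolvent identity
\[
(\rho_j - \mathcal{L}^\star)^{-1} - (\rho_j + m - \mathcal{L}^\star)^{-1} = m\,(\rho_j - \mathcal{L}^\star)^{-1}(\rho_j + m - \mathcal{L}^\star)^{-1},
\]
applied to $\widehat w_j$, I obtain
\[
\alpha_j^{\rho_j,\delta(\cdot),\widehat w_j} - \alpha_j^{\rho_j+h,\delta(\cdot)+k,\widehat w_j} = m\,(\rho_j - \mathcal{L}^\star)^{-1}\alpha_j^{\rho_j+h,\delta(\cdot)+k,\widehat w_j}.
\]
Since $\widehat w_j \geq 0$ and the perturbed resolvent preserves positivity, the function $\alpha_j^{\rho_j+h,\delta(\cdot)+k,\widehat w_j}$ is itself nonnegative. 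Combined with $m \geq 0$ and one more application of Corollary \ref{cor:pp} to $(\rho_j - \mathcal{L}^\star)^{-1}$, this yields the desired inequality $\alpha_j^{\rho_j+h,\delta(\cdot)+k,\widehat w_j} \leq \alpha_j^{\rho_j,\delta(\cdot),\widehat w_j}$ pointwise on $S^1$.

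The second statement, concerning space-homogeneous increments of $\widehat w_j$, is immediate from linearity of \eqref{eq:alpha}: if $\widehat w_j$ is replaced by $\widehat w_j + g$ with $g \geq 0$ constant in space, then
\[
\alpha_j^{\rho_j,\delta(\cdot),\widehat w_j + g} - \alpha_j^{\rho_j,\delta(\cdot),\widehat w_j} = (\rho_j - \mathcal{L}^\star)^{-1} g,
\]
which is nonnegative by Corollary \ref{cor:pp}.

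The only delicate point in this proof is verifying that the perturbed parameters $(\rho_j + h, \delta + k)$ still satisfy Assumption \ref{hp:rhoj}, so that $(\rho_j + m - \mathcal{L}^\star)^{-1}$ is well defined and Corollary \ref{cor:pp} applies to it. This restricts the admissible range of $(h,k)$ but causes no conceptual obstacle, and in particular poses no difficulty when $h,k \geq 0$ separately; the general case $h + k \geq 0$ with possibly negative components is handled analogously under this standing proviso.
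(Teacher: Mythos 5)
Your proof is correct, but it routes the first claim through a different identity and a different positivity input than the paper does. The paper writes both $\alpha_j^{\rho_j,\delta,\widehat w_j}$ and $\alpha_j^{\rho_j+h,\delta+k,\widehat w_j}$ via the Laplace-transform representation \eqref{risol} of the resolvent, observes that the space-homogeneous perturbation only inserts a factor $e^{-t(h+k)}$ under the integral, and concludes from
$\alpha_{j}^{\rho_j,\delta,\widehat w_j}- \alpha_{j}^{\rho_j+h,\delta+k,\widehat w_j}= \int_0^{\infty} \bigl(1- e^{-t(h+k)}\bigr) e^{-\rho_j t}e^{t \L^{*}}\widehat{w_j}\,dt$
together with positivity of the semigroup $e^{t\L^{*}}$ (cited from Ma and R\"ockner). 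You instead use the first resolvent identity and the positivity of the resolvents themselves, which the paper proves independently by a maximum-principle argument (Proposition \ref{prop:mp} and Corollary \ref{cor:pp}). Both arguments are sound; yours has the mild advantage of relying only on results already established in the paper rather than importing semigroup positivity from the literature, while the paper's version makes the monotonicity mechanism more transparent (the discount factor $1-e^{-t(h+k)}$ is visibly nonnegative exactly when $h+k\geq 0$). Your observation that only the sum $m=h+k$ enters the perturbed operator is the same key structural point the paper exploits, and it also disposes of your worry about negative components: the perturbed equation is $(\rho_j+m-\L^{*})\alpha=\widehat w_j$ regardless of how $m$ splits, and $\rho_j+m\geq\rho_j$ keeps you inside the resolvent set, so the ``standing proviso'' in your last paragraph is automatically satisfied. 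For the second claim your linearity argument coincides with the paper's one-line appeal to Corollary \ref{cor:pp}.
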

\begin{proof}
	We start proving the first claim. Considering \eqref{risol}--\eqref{defalpha} we have
	$$\alpha_{j}^{\rho_j,\delta(\cdot),\widehat w_j(\cdot)}(x)- \alpha_{j}^{{\rho_j}+h,{\delta}(\cdot)+k, \widehat w_j(\cdot)}(x)= \int_0^{\infty} \left(1- e^{-t(h+k)}\right) e^{-t \L}\widehat{w_j}\,dt,
	$$
	and the claim follows since $e^{-t \L}\widehat{w_j}$ is a positive operator, i.e. it maps nonnegative functions into nonnegative functions (see, e.g., Section 2 in Chapter II, of Ma and  R\"ockner (1992)).
	
	The second claim follows
	by Corollary \ref{cor:pp}.
\end{proof}
The following proposition establishes the dependence of $\alpha_{j}$ on the territory $M_j$.
\begin{Proposition}
	Let $M_j\subset \widetilde{M_j}\subset S^1$, let $w_j, \widetilde{w_j}$ coefficients associated to $M_j,\widetilde{M_j}$, respectively, and let $\alpha_j$, $\widetilde{\alpha_j}$ be the associated solutions to \eqref{ODEalpha}. Assume that $\widetilde{w}|_{M_j}=w_j$. Then $\alpha_j\leq \widetilde{\alpha_j}$.
\end{Proposition}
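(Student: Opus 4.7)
The plan is to reduce the comparison $\alpha_j\le\widetilde{\alpha_j}$ to a pointwise comparison of the source terms $\widehat{w_j}$ and $\widehat{\widetilde{w_j}}$, and then invoke the positivity-preserving property of the resolvent $(\rho_j-\L^{*})^{-1}$ already established in Corollary \ref{cor:pp}.

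First I would compare the extended coefficients. By the definition \eqref{eq:defwhatj} and the hypothesis $\widetilde{w_j}|_{M_j}=w_j$, I would check the three cases: on $M_j$ both extensions agree; on $\widetilde{M_j}\setminus M_j$ one has $\widehat{\widetilde{w_j}}=\widetilde{w_j}\ge 0=\widehat{w_j}$ by Assumption \ref{ass:data}(5); and on $S^1\setminus\widetilde{M_j}$ both extensions vanish. Hence $\widehat{\widetilde{w_j}}-\widehat{w_j}\ge 0$ a.e. in $S^1$, and this difference lies in $H^{+}$.

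Next, by the definition \eqref{defalpha}, $\alpha_j=(\rho_j-\L^{*})^{-1}\widehat{w_j}$ and $\widetilde{\alpha_j}=(\rho_j-\L^{*})^{-1}\widehat{\widetilde{w_j}}$; linearity of the resolvent gives
\begin{equation*}
\widetilde{\alpha_j}-\alpha_j=(\rho_j-\L^{*})^{-1}\bigl(\widehat{\widetilde{w_j}}-\widehat{w_j}\bigr).
\end{equation*}
Assumption \ref{hp:rhoj} ensures that $\rho_j$ is in the resolvent set of $\L$ (and hence of $\L^{*}$), so this formula is meaningful. By Corollary \ref{cor:pp}, the operator $(\rho_j-\L^{*})^{-1}$ is positivity preserving, so the right-hand side is nonnegative a.e., which gives $\alpha_j\le\widetilde{\alpha_j}$ a.e.; by continuity of both functions (they belong to $W^{2,2}(S^1;\R)\subset C^1(S^1;\R)$), the inequality holds everywhere on $S^1$.

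There is essentially no obstacle here: the statement is a monotonicity property of the resolvent, and both ingredients (the comparison of the source terms, and the positivity of the resolvent) have already been recorded. The only minor point to verify carefully is that enlarging the territory only \emph{adds} mass to $\widehat{w_j}$ (rather than replacing $w_j$ by a smaller $\widetilde{w_j}|_{M_j}$), which is exactly what the hypothesis $\widetilde{w}|_{M_j}=w_j$ guarantees; the second part of Proposition \ref{prop:rhodelta-app} could alternatively be invoked for the same conclusion.
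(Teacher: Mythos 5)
Your proposal is correct and follows exactly the route the paper takes: its proof of this proposition is the one-line "It follows from Corollary \ref{cor:pp}," i.e.\ precisely the positivity-preserving resolvent argument you spell out, with the comparison $\widehat{\widetilde{w_j}}\ge\widehat{w_j}$ and linearity of $(\rho_j-\L^{*})^{-1}$ filling in the details the authors left implicit.
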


\begin{proof}
	It follows from Corollary \ref{cor:pp}.
\end{proof}

\subsection{{Reformulation of the objective functionals}} With the help of the functions $\alpha_{j}$, 
we are able to rewrite the functionals $J_j$ in such a way to get rid of the state equation $P(t)$ in them. This enable us to solve in a simple way for the open loop Nash equilibrium. The result is provided below.
\begin{Proposition}
	\label{pr:oo}
	We have
	\begin{align}\label{eq:Finfdimequiv}
	J_j^{ (I_{-j},B_{-j})}\big(p_0;(I_j,B_j)\big)
	&=\int_0^{\infty} e^{-\rho_j t} \left\langle
	\frac{\big((A_j-1) I_j(t)-B_j(t)\big)^{1-\gamma_j}}{1-\gamma_j},\mathbf{1}_{M_j}\right\rangle_{H_j} dt\nonumber
	\\[3mm]
	&-\int_0^{\infty}e^{-\rho_j  t} \left\langle\alpha_j|_{M_j}, I_j(t)-\eta B_j(t)^{\theta}\right\rangle_{H_j} dt
	\\[3mm]
	&-\langle \alpha_j, p_0\rangle_H
	-\sum_{k=1, \, k\neq j}^N \int_0^{\infty}e^{-\rho_j  t} \left\langle\alpha_j|_{M_k}, I_k(t)-\eta B_k(t)^{\theta}\right\rangle_{H_k} dt.\nonumber
	\end{align}
\end{Proposition}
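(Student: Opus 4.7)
My plan is to start from the reformulation~\eqref{eq:Finfdim}, whose only state-dependent piece (hidden inside the first term) is
$$\int_0^{\infty} e^{-\rho_j t}\langle\widehat{w_j},P(t)\rangle_H\,dt,$$
and to rewrite this quantity explicitly in terms of $p_0$, $\alpha_j$ and the controls, using the mild formula~\eqref{eq:Ymild}. Inserting $P(t)=e^{t\mathcal{L}}p_0+\int_0^t e^{(t-s)\mathcal{L}}[I(s)-\eta B(s)^\theta]\,ds$ splits this into a datum term and a control term, which I will handle separately.

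For the datum term I would write
$$\int_0^\infty e^{-\rho_j t}\langle \widehat{w_j},e^{t\mathcal{L}}p_0\rangle_H\,dt=\Bigl\langle \widehat{w_j},\int_0^\infty e^{-(\rho_j-\mathcal{L})t}p_0\,dt\Bigr\rangle_H=\langle \widehat{w_j},(\rho_j-\mathcal{L})^{-1}p_0\rangle_H,$$
where the second equality uses~\eqref{risol} (justified by Assumption~\ref{hp:rhoj}). Passing to the adjoint and invoking the defining identity $(\rho_j-\mathcal{L}^*)\alpha_j=\widehat{w_j}$ from~\eqref{eq:alpha} gives $\langle \widehat{w_j},(\rho_j-\mathcal{L})^{-1}p_0\rangle_H=\langle \alpha_j,p_0\rangle_H$.

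For the control term I would apply Fubini (justified by the integrability built into $\mathcal{A}$ and the bound $\|e^{t\mathcal{L}}\|_{L(H)}\lesssim e^{Ct}$ with $C<\rho_j$) and the substitution $\tau=t-s$:
\begin{align*}
&\int_0^\infty e^{-\rho_j t}\Bigl\langle\widehat{w_j},\int_0^t e^{(t-s)\mathcal{L}}[I(s)-\eta B(s)^\theta]\,ds\Bigr\rangle_H dt\\
&\qquad=\int_0^\infty e^{-\rho_j s}\Bigl\langle\int_0^\infty e^{-(\rho_j-\mathcal{L}^*)\tau}\widehat{w_j}\,d\tau,\,I(s)-\eta B(s)^\theta\Bigr\rangle_H ds\\
&\qquad=\int_0^\infty e^{-\rho_j s}\langle \alpha_j,I(s)-\eta B(s)^\theta\rangle_H\,ds,
\end{align*}
using again~\eqref{risol} for $\mathcal{L}^*$ and the definition~\eqref{defalpha} of $\alpha_j$.

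Finally, since $I(s)-\eta B(s)^\theta$ vanishes on $M_0$ and coincides with $I_k(s)-\eta B_k(s)^\theta$ on each $M_k$, the direct-sum decomposition $H=\bigoplus_k H_k$ gives
$$\langle \alpha_j,I(s)-\eta B(s)^\theta\rangle_H=\langle \alpha_j|_{M_j},I_j(s)-\eta B_j(s)^\theta\rangle_{H_j}+\sum_{k\neq j}\langle \alpha_j|_{M_k},I_k(s)-\eta B_k(s)^\theta\rangle_{H_k}.$$
Plugging these identities into~\eqref{eq:Finfdim} (with the correct $-\int e^{-\rho_j t}\langle\widehat{w_j},P(t)\rangle_H\,dt$ sign coming from the disutility term in~\eqref{F}) yields exactly~\eqref{eq:Finfdimequiv}. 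The main technical point is the justification of Fubini and of the commutation of the resolvent with the inner product, which is guaranteed by Assumption~\ref{hp:rhoj} and the admissibility condition in $\mathcal{A}$; everything else is algebra.
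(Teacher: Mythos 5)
Your proposal is correct and follows essentially the same route as the paper's proof: rewrite the pollution term via the mild formula \eqref{eq:Ymild}, use the resolvent identity \eqref{risol} and the adjoint relation \eqref{defalpha}--\eqref{eq:alpha} to produce $\langle\alpha_j,p_0\rangle_H$ and $\int_0^\infty e^{-\rho_j s}\langle\alpha_j, I(s)-\eta B(s)^\theta\rangle_H\,ds$, then split the latter over the territories. Your explicit attention to the sign of the disutility term and to the justification of Fubini is a slight improvement in care over the paper's exposition, but the argument is the same.
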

\begin{proof}
	We only have to rewrite the first and the third term of the expression of 
	$J_j$ provided in \eqref{eq:Finfdim}.
	
	Using \eqref{risol} and \eqref{defalpha}, the first term of
	$J_j$ in \eqref{eq:Finfdim} can be rewritten as follows (recall the definition of $\widehat{w_j}$ in \eqref{eq:defwhatj})
	\begin{align*}
	\left\langle \widehat{w_j}, \int_0^{\infty} e^{-(\rho_j -\L) t}p_0\, dt\right\rangle_H
	=\left\langle\widehat{w_j},(\rho_j -\L)^{-1} p_0\right\rangle_H=\left\langle {(\rho_j -\L^{*})}^{-1}\widehat{w_j}, p_0\right\rangle_H =\left\langle \alpha_j, p_0\right\rangle_H.
	\end{align*}
	In the remainder of the proof, for simplicity of notation, we define the  net emissions
	\begin{equation}\label{eq:Chat}
	N(t):= I(t)-\eta B(t)^{\theta}.
	\end{equation}
	Now, using again \eqref{risol} and \eqref{defalpha}, the third term
	of $J_j$ in \eqref{eq:Finfdim}, can be rewritten by exchanging
	the integrals as follows:
	\begin{equation}\label{ppp}\begin{split}
	&\int_0^{\infty}\left(\int_0^t e^{-\rho_j  t} \left\langle \widehat{w_j}, e^{(t-s)\L} N(s)\right\rangle_Hds\right) dt\\&=
	\int_0^{\infty}\left(\int_0^t e^{-\rho_j s} \left\langle \widehat{w_j}, e^{-(\rho_j-\L)(t-s)} N(s)\right\rangle_Hds\right) dt\\&=
	\int_0^{\infty}e^{-\rho_j  s}
	\left\langle\widehat{w_j},\int_s^{\infty} e^{-(\rho_j-\L) (t-s)} N(s) dt\right\rangle_H ds\\
	=&\int_0^{\infty}e^{-\rho_j  s}
	\left\langle\widehat{w_j},(\rho_j-\L)^{-1} N(s)\right\rangle_H ds\\&
	=\int_0^{\infty}e^{-\rho_j  s} \left\langle{[(\rho_j-\L)^{-1}]^{*}}\widehat{w_j}, N(s)\right\rangle_H ds
	\\&
	=\int_0^{\infty}e^{-\rho_j  s} \left\langle{(\rho_j-\L^{*})^{-1}}\widehat{w_j}, N(s)\right\rangle_H ds
	=\int_0^{\infty}e^{-\rho_j  s} \left\langle\alpha_j, N(s)\right\rangle_H ds.
	\end{split}
	\end{equation}
	Now, using \eqref{eq:Chat}, we get
	\begin{equation}\label{pppbis}\begin{split}
	&\int_0^{\infty}e^{-\rho_j  s} \left\langle\alpha_j, N(s)\right\rangle_H ds
	=\int_0^{\infty}e^{-\rho_j  s} \left\langle\alpha_j, \sum_{k=1}^N (I_k(s)-\eta B_k(s)^{\theta})\right\rangle_H ds
	\\
	&= \int_0^{\infty}e^{-\rho_j  s} \left\langle\alpha_j\mathbf{1}_{M_j}, (I_j(s)-\eta B_j(s)^{\theta})\mathbf{1}_{M_j}\right\rangle_{H_j} ds\\&\ - \sum_{k=1, \, k\neq j}^N \int_0^{\infty}e^{-\rho_j  s} \left\langle\alpha_j\mathbf{1}_{M_k}, (I_k(s)-\eta B_k(s)^{\theta})\mathbf{1}_{M_k}\right\rangle_{H_k} ds.
	\end{split}
	\end{equation}
	The claim easily follows by rearranging the terms.
\end{proof}

\subsection{{Characterization of the solutions: the proof of the existence and uniqueness results of Section \ref{se:noncoopgame} }}
\label{app:proofsect3}

\begin{proof}[Proof of Theorem \ref{th:oo-sect2}]
	{
		Given the additive nature of the expression \eqref{pppquaterbis}, it is clear that the optimization of  it over $(i_{j},b_{j})$ is independent of $(i_{-j},b_{-j})$.
		In this way the original game was decoupled into $N$ independent optimal control problems where the coupling of the original system only enters through the role played by the functions 
		$\alpha_j$'s. Moreover,  the optimization to be performed by agent $j$
		turns out to be just a pointwise optimization of the integrand of the first term in \eqref{pppquaterbis}, i.e. 
		of 
		$$
		\frac{\big((A_j(x)-1) i_j(t,x)-b_j(t,x)\big)^{1- \gamma_j}}{1-\gamma_j} - \alpha_j(x) \left(i_j(t,x)-{\eta}b_j(t,x)^{\theta}\right).
		$$ 
		Fix $(t,x)\in \R_+\times M_j$.  By strict concavity of the integrand function with respect to $i_j(t,x)$ and $b_j(t,x)$, the {unique} maximum point can be found just by
		first order optimality conditions.
		The resulting system is
		\begin{equation}\label{pppquinter}
		\begin{cases}
		\big((A_j(x)-1) i_j(t,x)-b_j(t,x)\big)
		^{- \gamma_j}(A_j(x)-1)- \alpha_j(x)=0,
		\\\\
		-\big((A_j(x)-1) i_j(t,x)-b_j(t,x)\big)
		^{- \gamma_j}
		+\alpha_j(x){{\eta}}\theta
		b_j(t,x)^{\theta-1}=0.
		\end{cases}
		\end{equation}
		Then straightforward computations imply that there exists exactly one open loop equilibrium, whose expression is provided by \eqref{bopt-sect2}-\eqref{iopt-sect2}.	The expressions of the welfare functions follows.
	}
\end{proof}

%
%
%

\begin{proof}[Proof of Theorem \ref{th:uniqueness}]
{We need to show uniqueness.
		Let  $(\phi^{*}_{k})_{k=1,...N}$ be a Markovian equilibrium in the class $\mathcal{D}$. We can identify it with  $(\phi^{*}_{k})_{k=1,...,N}= (\Phi^{*}_{k},\iota^{*}_{k})_{{k=1,...,N}}$.
		This means that, for each $j=1,...,N$, the function  $i_{j}^*(t,x)=( \Phi^{*}_{j}p^{i^{*}_{j}}(t,\cdot))(x)+\iota^{*}_{j}(x)$ is an optimal control of Player $j$ for the  maximization problem 
		\begin{small}
			\begin{align}\label{pppquaterbis2}
			\sup_{i_{j}\in \mathcal{A}_{j}}&\Bigg\{\int_0^{\infty}e^{-\rho_j s}\left(\int_{M_j}
			\left[\frac{\big((A_j(x)-1) i_j(t,x)\big)^{1- \gamma_j}}{1-\gamma_j} - \alpha_j(x) i_j(t,x)
			\right]dx\right) ds\nonumber\\
			&-\int_{S^{1}}p_0(x) \alpha_j(x)dx
			-\sum_{k=1, \, k\neq j}^N \int_0^{\infty}e^{-\rho_j  t}\left( \int_{M_{k}}\alpha_j(x) (\Phi_k^{*}p^{{i_{j}}}(t,\cdot))(x)+\iota^{*}_k(x)) dx \right)dt\Bigg\},\nonumber
			\end{align}
		\end{small}
		where	 $p^{i_{j}}$ solves
		\begin{equation*}\label{eq:stateequation-sect2bis}
		\begin{cases}
		\displaystyle{\frac{\partial p}{\partial t}(t,x) = \sigma \frac{\partial^2 p}{\partial x^2} (t,x)  - \delta(x) p(t, x) +i_{j}(t,x)\mathbf{1}_{M_{j}}(x)}\\ \ \ \ \ \ \ \ \ \ \ \ \ \ \ \displaystyle{ +\sum_{k=1, \, k\neq j}^N  (\Phi^{*}_kp(t,\cdot))(x)+\iota_{k}^*(x))\mathbf{1}_{M_{k}}(x),}\medskip\\
		p(0,x)=p_0(x),  \ \ \ x\in S^1.
		\end{cases}
		\end{equation*}
		With similar arguments as the ones used employed to rewrite the functional of the game (the structure is the same: linear state equation and linearity of the objective functional with respect to the state variable), 
		one rewrites the functional of this optimal control problem and  finds that there exists a unique optimal control for such a problem and that it  is a function constant in time.  By symmetry, this holds for each $k=1,...,N$. Hence, the affine Markovian equilibrium needs to be an open loop one. By uniqueness of the open loop equilibrium, we conclude that  
		$$(\Phi_{k}^{*}p^{\phi^*}(t,\cdot))(x)+\iota^{*}_{k}(x)=\phi_{k}^{*}(x,p^{\phi^*}(t,\cdot)))=   \alpha_k(x)^{-\frac{1}{\gamma_k}}(A_k(x)-1)
		^{\frac{1-\gamma_k}{\gamma_k}}.$$
		In particular $t\mapsto \Phi_{k}^{*}p^{\phi^*}(t,\cdot)(x) $  has to be constant. Since the equilibrium is subgame perfect, we can choose $p_{0}=0$. Then, using Lemma \ref{lemma:phi} with $\Phi=\Phi_{k}^{*}$ and $h(x)=\sum_{k=1}^{N}\alpha_k(x)^{-\frac{1}{\gamma_k}}(A_k(x)-1)^{\frac{1-\gamma_k}{\gamma_k}} \mathbf{1}_{M_{k}}(x)$  enables us to conclude that 
		$\Phi^{*}_{k}=0$, and, consequently, 
		$$\iota^{*}_k(x)=  \alpha_k(x)^{-\frac{1}{\gamma_k}}(A_k(x)-1)
		^{\frac{1-\gamma_k}{\gamma_k}}.$$ 
		The claim follows.
		%
		%
	}
\end{proof}

\subsection{{Some features  of the solution}}
{Setting
	$$
	I^*:=0\cdot \mathbf{1}_{M_0}+\sum_{j=1}^N\mathbf{1}_{M_j} I^*_{j}, \ \ \ B^*:=0\cdot \mathbf{1}_{M_0}+\sum_{j=1}^N \mathbf{1}_{M_j}B^*_j,$$
	where 
	$$
	I_{j}^{*}=i^{*}_{j}(\cdot), \ \ \ B_{j}^{*}=b_{j}^*(\cdot),
	$$
	with $i^*_{j},b^{*}_{j}$ given by \eqref{iopt-sect2}-\eqref{bopt-sect2},
	and  defining the \emph{optimal net emission} as:
	$$N^*:= I^*-(\eta B^*)^{\theta},$$
	the equilibrium state  is
	\begin{equation}\label{eq:p*}
	P^*(t)= e^{t\L}p_0+\int_0^t e^{(t-s)\L} N^*ds.
	\end{equation}
\begin{Corollary}
	\label{cor:Pinfty}
	Let Assumption \ref{ass:data} hold.  Furthermore, assume that  there exists  $\bar{\delta}>0$ such that
	\begin{equation}\label{assf}
	\frac12 v'(x)+\delta(x)\geq \bar{\delta} \ \ \ \forall x\in S^1.\end{equation}
	Then
	$
	\lim_{t\to\infty} P^*(t)= P^*_\infty$ in $H$, where
	$P^*_\infty$ is the unique solution in $H$ to the abstract ODE \ $\L P^*_\infty+N^*=0$.
\end{Corollary}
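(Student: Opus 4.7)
The plan is to exploit the exponential decay of the semigroup $\{e^{t\L}\}_{t\geq 0}$ induced by the coercivity condition \eqref{assf}. First, I would recall the integration-by-parts computation performed in Appendix~B.2: for every $\varphi\in D(\L)$,
\[
\langle \L\varphi,\varphi\rangle_H=-\int_{S^1}\sigma|\varphi'(x)|^2\,dx-\int_{S^1}\left(\tfrac{1}{2}v'(x)+\delta(x)\right)|\varphi(x)|^2\,dx.
\]
Under \eqref{assf} this gives $\langle\L\varphi,\varphi\rangle_H\le -\bar\delta|\varphi|_H^2$, hence $\L+\bar\delta I$ is dissipative and standard semigroup theory (Engel and Nagel, 1995, Chap.~II) yields the uniform contraction estimate $\|e^{t\L}\|_{L(H)}\le e^{-\bar\delta t}$ for every $t\ge 0$. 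In particular, $0$ lies in the resolvent set of $\L$, so $\L^{-1}:H\to D(\L)$ is well defined and bounded with $\|\L^{-1}\|\le 1/\bar\delta$.

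Next, I would split the representation \eqref{eq:p*} of $P^{*}(t)$ into its homogeneous and convolution parts. The homogeneous part vanishes in $H$, since $|e^{t\L}p_0|_H\le e^{-\bar\delta t}|p_0|_H\to 0$ as $t\to\infty$. For the convolution part, $N^{*}\in H$ (the explicit expressions \eqref{iopt-sect2}--\eqref{bopt-sect2} together with Assumption \ref{ass:data} and the lower bound $\alpha_j\ge\kappa_j>0$ from Corollary \ref{prop:max} ensure that $I^{*},B^{*}\in L^\infty(S^1)\subset H$), so after the change of variable $r=t-s$,
\[
\int_0^t e^{(t-s)\L}N^{*}\,ds=\int_0^t e^{r\L}N^{*}\,dr,
\]
and $|e^{r\L}N^{*}|_H\le e^{-\bar\delta r}|N^{*}|_H$ makes the improper integral absolutely convergent in $H$. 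Setting $P^{*}_{\infty}:=\int_0^\infty e^{r\L}N^{*}\,dr$, we obtain $P^{*}(t)\to P^{*}_{\infty}$ in $H$.

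To identify the limit, I would use the resolvent identity \eqref{risol} applied with $\lambda=0$, which yields $P^{*}_{\infty}=(-\L)^{-1}N^{*}$; equivalently, $P^{*}_{\infty}\in D(\L)$ and $\L P^{*}_{\infty}+N^{*}=0$. Uniqueness of the solution in $H$ is immediate from the invertibility of $\L:D(\L)\to H$. There is no real obstacle: the only point requiring care is the $H$-integrability of $N^{*}$, which reduces to checking the explicit formulas for $i_j^{*}$ and $b_j^{*}$ on each bounded interval $M_j$ and follows from the uniform two-sided bounds on $\alpha_j$.
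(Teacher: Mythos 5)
Your proposal is correct and follows essentially the same route as the paper: both arguments extract the exponential decay $\|e^{t\L}\|_{L(H)}\le e^{-\bar\delta t}$ from the dissipativity of $\L+\bar\delta I$ under \eqref{assf} (the paper phrases this by splitting $\L$ into a dissipative part plus the multiplication by $-\bar\delta$), let the homogeneous term vanish, and identify the limit of the convolution term with $(-\L)^{-1}N^*$ via the Laplace-transform representation of the resolvent. Your explicit check that $N^*\in H$ using the two-sided bounds on $\alpha_j$ and $1<l\le A_j\in L^\infty$ is a point the paper leaves implicit, but it does not change the argument.
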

\begin{proof}
	Let us split
	$\mathcal{L}=\bar{\mathcal{L}}+\bar{\mathcal{D}},$ where
	$$\bar{\L}\varphi:={\L}\varphi-\bar{\delta}\varphi, \ \ \ \varphi\in D(\L); \ \ \ \ \ \bar{\mathcal{D}}\varphi:=-\bar{\delta} \varphi, \ \ \ \varphi\in H.$$
	We can rewrite
	$$P^*(t)=e^{-\bar{\delta} t} e^{t\bar{\L}}p_0+\int_0^t e^{-\bar{\delta}(t-s)}e^{(t-s)\bar{\L}}  N^*ds,
	$$
	and take the limit above when $t\to\infty$. By \eqref{assf}, $\bar{\L}$ is dissipative, hence $e^{s\bar{\L}}$ is a contraction. Therefore, the first term of the right hand side converges to $0$ in $H$, whereas the second one converges in $H$ to
	$$P^*_\infty:=\int_0^\infty e^{-\bar{\delta}s}e^{s\bar{\L}}  N^*ds.$$
	Then, the limit state $P^*_\infty$ can be expressed using  again Engel and Nagel (1995), Proposition 3.14, page 82 and Chapter II, Theorem 1.10, as
$
	P^*_\infty=(\bar{\delta}-\bar{\L})^{-1}  N^*,
$
	i.e.
	$P^*_\infty$ is the solution in $H$ to
	$
	(\bar{\delta}-\bar{\L})P^*_\infty={N}^*,
	$ i.e. to $\L P^*_\infty+{N}^*=0$.
\end{proof}
}
\begin{Remark}\label{rm:deponwanddelta}
	It is clear that, for every $j=1,\dots,N$, the cost functional $J_j$ is decreasing with respect to $\overline{w}_j$; this follows from the fact that a lower cost of pollution makes the welfare bigger, hence also the corresponding welfare function $v_j$ is decreasing in
	$\overline w_j$. 	
	
	Moreover, if, for $i\ne j$, $\overline w_i$ increases, then $J_j$, and so $v_j$ also increase. This can be seen looking at the decomposition of Proposition \ref{pr:oo} or simply observing that the increase in $\overline w_i$ does not modifies the strategy for the $j$-th agent, but makes the $p$ to globally decrease since the agent $i$ will pollute less.
	
	Finally, since $p$ is decreasing with respect to $\delta$ (this comes from the fact that a higher self-cleaning capacity makes the pollution lower), for every $j=1,\dots,N$, the cost functional $J_j$ is decreasing with respect to $\delta$, hence also the corresponding welfare function $v_j$ is decreasing in $\delta$.
\end{Remark}

\begin{Remark}\label{rm:netemissions}
	We now look at the dependence of the
	optimal net emissions
	$n^{*}(t,x):=i_j^*(t,x) - {{\eta}} (b_j^*(t,x))^\theta$, on the data.
	By a simple computation we see that for every $x\in S^1$
	\begin{align}
	&
	i_j^*(x) - {{\eta}} (b_j^*(x))^\theta =
	\alpha_j(x)^{-\frac{1}{\gamma_j}}
	(A_j(x)-1)^{\frac{1-\gamma_j}{\gamma_j}}
	-
	\left({{\eta}}\theta\right)^{\frac{1}{1-\theta}}
	(A_j(x)-1)^{\frac{\theta}{1-\theta}}
	\left[\theta^{-1}-1\right].
	\end{align}
	Then, one can
	analyze the monotonicity of net emissions $n^{*}$ with respect to some parameters. For example:
	\begin{itemize}
		\item[--] when $\gamma_j>1$,  the value of $n^{*}(x)$  is decreasing with respect to
		$A_j(x)$;
		\item[--] when
		$$
		\frac{1-\gamma_j}{\gamma_j}\ge \frac{\theta}{1-\theta}
		\quad and \quad
		\alpha_j(x)^{-\frac{1}{\gamma_j}}\ge
		\left[\theta^{-1}-1\right]
		\left({{\eta}}\theta\right)^{\frac{1}{1-\theta}},
		$$
		the value of $n^{*}(x)$ is increasing with respect to
		$A_j(x)$;
		\item[--] by Proposition \ref{prop:rhodelta-app}, $n^{*}(x)$ is increasing with respect to space homogeneous increments of $\rho_j+\delta(\cdot)$ in the sense specified  in the same proposition.
	\end{itemize}
	
\end{Remark}
\begin{Remark}\label{rm:depPinfty}
	The dependence of $P^{*}_\infty$ on $A_j(\cdot)$ follows from what observed in Remark \ref{rm:netemissions}. Indeed, since $\call$ does not depend on the $A_j(\cdot)$, $P^{*}_\infty$ depends on it only through the stationary optimal net emissions,  $K^*$.
\end{Remark}

\subsubsection{Proof of Proposition \ref{prop:comparison}.}\label{sub:terr}
Just for simplicity of notation, we prove the claim in a very special case, i.e. when $\Pi^1=\{M^1_1,M^1_2\}$  and $\Pi^2=\{M^2_1\}$. The proof of the general claim is a straightforward generalization. Let $n^{1,*}, n^{2,*}$ be the  optimal net emissions associated to $\Pi^1,\Pi^2$, respactively. They differ only for the terms containing
$\alpha$. Now, with clear meaning of the symbols, we have
$$
\alpha^1_1=({\rho-\mathcal{L}^{*}})^{-1}\widehat{w}^1_1, \ \ \  \alpha^1_2=({\rho-\mathcal{L}^{*}})^{-1}\widehat{w}^1_2, \  \ \ \ \alpha^2_1={(\rho-\mathcal{L}^{*})}^{-1}\widehat{w}^2_1.
$$
Since $\widehat{w}^2_1\geq \widehat{w}^1_1$ and $\widehat{w}^2_1\geq \widehat{w}^2_1$, by Corollary \ref{cor:pp} it follows that
$$
\alpha_1^2(x)\geq  \alpha_1^1(x) \ \ \forall x\in M_1^1, \ \ \  \alpha_1^2(x)\geq  \alpha_2^1(x) \ \ \forall x\in M_2^1.
$$
Then the claim follows from \eqref{netemission}, \eqref{bopt-sect2}, \eqref{iopt-sect2}, \eqref{eq:Ymild}, and since the operator  $e^{t\mathcal{L}}$ is positive preserving, i.e. maps nonnegative functions into nonnegative ones (see, e.g.,  Ma and  R\"ockner (1992), Ch\,. II, Sec.\,2).
\hfill$\square$

		\begin{Proposition}
			\label{cor:duebanchmark}
			Let Assumptions \ref{ass:data} hold and assume  that
			$\delta(\cdot)\equiv \delta^o>0$,  $w(\cdot)\equiv w^o>0$, and $\rho_j=\rho$ for all $j=1,...,N$.
			Then
			$$\underline{\alpha}\equiv \frac{w^o}{\rho-\delta^o},
			$$
			which is (constant and) independent of ${\sigma}$; moreover,
			if ${\sigma}=0$(\,\footnote{To be precise, the case ${\sigma}=0$ should be treated separately, as it is out of our assumptions. 
		Nonetheless, this case can be easily treated pointwiseless on $x$ and gives raise, in the case under consideration  here, to the solutions we illustrate below.}), then
			$$
			\alpha_j\equiv \frac{w^o}{\rho-\delta^o}, \ \ \ \forall j=1,...,N,
			$$
			i.e. the same solution of the cooperative game case obtained for each diffusion coefficient ${\sigma}\geq 0$.
		\end{Proposition}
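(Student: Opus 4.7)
The plan is to verify the two asserted expressions by direct substitution into the defining ODEs, invoking the uniqueness of the solutions guaranteed by our earlier analysis, and then to deduce the identification of the strategies.

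First, consider the cooperative equilibrium under the hypotheses $\delta(\cdot)\equiv\delta^o$, $w(\cdot)\equiv w^o$. The defining ODE \eqref{ODEalphatris-sect3} reads
$$
\rho\,\underline{\alpha}(x)-\sigma\,\underline{\alpha}''(x)+\delta^o\,\underline{\alpha}(x)=w^o,\qquad x\in S^1.
$$
The constant function $\underline{\alpha}(x):=w^o/(\rho+\delta^o)$ belongs to $W^{2,2}(S^1;\R)$, has vanishing second derivative, and satisfies the equation identically in $x$ for every $\sigma>0$. By the $W^{2,2}$ uniqueness statement established together with Corollary \ref{prop:max} (itself a consequence of Proposition \ref{prop:mp}), this constant function is \emph{the} solution, independently of $\sigma$. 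Plugging this value into the formulas \eqref{iopt3-sect3}--\eqref{bopt3-sect3} yields the cooperative strategies $\underline{i}^*(x),\underline{b}^*(x)$ which depend on $x$ only through $A(x)$.

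Next, consider the non-cooperative equilibrium in the degenerate case $\sigma=0$. As indicated in the footnote, the state equation then decouples pointwise in $x$, and the ODE \eqref{ODEalpha2-sect2bis} reduces to the algebraic relation
$$
(\rho+\delta^o)\,\alpha_j(x)=\widehat{w_j}(x),\qquad x\in S^1.
$$
Consequently $\alpha_j(x)=w^o/(\rho+\delta^o)$ for every $x\in M_j$ (and $\alpha_j(x)=0$ on $S^1\setminus\overline{M_j}$). Since the equilibrium strategies \eqref{iopt-sect2}--\eqref{bopt-sect2} for player $j$ depend on $\alpha_j$ only through its values on $M_j$, we obtain
$$
i_{j,0}^*(x)=\Bigl(\tfrac{w^o}{\rho+\delta^o}\Bigr)^{-1/\gamma}(A_j(x)-1)^{(1-\gamma)/\gamma}+(\eta\theta)^{1/(1-\theta)}(A_j(x)-1)^{\theta/(1-\theta)},
$$
and analogously for $b_{j,0}^*$, matching exactly the cooperative expressions $\underline{i}^*(x),\underline{b}^*(x)$ obtained in the first step for any $\sigma\ge 0$.

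The main technical point to be careful about is the fact that the degenerate case $\sigma=0$ lies outside the standing Assumption \ref{ass:data}, so that strictly speaking the equation must be interpreted pointwise rather than as an abstract ODE in $H$; but this is precisely the interpretation announced in the footnote, and since we only use the values of $\alpha_j$ on $M_j$ to determine the strategies, no regularity issue (e.g.\ the discontinuity of $\alpha_j$ at $\partial M_j$) enters the comparison with the cooperative case. The remaining steps are routine substitutions.
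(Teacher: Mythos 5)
Your proof is correct and takes essentially the same approach as the paper --- the paper's own proof is the one-line remark that all claims can be checked by plugging the given constants into the ODEs for $\underline{\alpha}$ and $\alpha_j$, which is exactly what you carry out, including the useful extra observation that for $\sigma=0$ the function $\alpha_j$ vanishes off $\overline{M_j}$ but only its values on $M_j$ enter the equilibrium strategies. Note also that your value $w^o/(\rho+\delta^o)$ is the one actually dictated by the ODE; the denominator $\rho-\delta^o$ printed in the statement is a sign typo.
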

\begin{proof}
{All the claims can be just checked by plugging the given values into the ODE's for $\underline{\alpha}$ and $\alpha_j$.}
\end{proof}

\subsection{Series expansion of the $\alpha_j$'s}
\label{secapp:Fouries}
In this section
$A_j\equiv A_j^o>1$, $w_j\equiv w_j^o> 0$, $\delta\equiv \delta^o\ge 0$, $\sigma \equiv {\sigma}>0$, $v\equiv v^o\in \R$. We use the identification $S^1\cong 2\pi\R/\Z$ and assume, without loss of generality that $M_j=(0,\ell_j)$. Finally, to save notation, we suppress the subscript $j$. We are going to study the Fourier series expansion of $\alpha=\alpha_j$ in the case without or with advection. We notice that the convergence of the series is uniform on $S^1$ due to the smoothness of $\alpha$.

\subsubsection{The case without advection: $v^o=0$}
In this case $\mathcal{L}=\mathcal{L}^{*}$, hence
 $\alpha$ solves the equation
$
(\rho-\mathcal{L})\alpha=\widehat{w},
$
with $\widehat{w}$ defined as
$$
\widehat{w}(x):=\begin{cases} w^o, \ \ \ \ \ \ \ \ \mbox{if} \ x\in {M},\\
0, \ \ \ \ \ \ \ \ \ \  \mbox{if} \ x\notin M.
\end{cases}
$$
The set of elements of $H$
\begin{small}
\begin{equation}\label{eq:eigenvector}
\left\{\mathbf{e}_0(x):=\frac{1}{\sqrt {2\pi}}\mathbf{1}_{S^1}(x)\right\}\bigcup  \left\{\mathbf{e}_n^{(1)}(x):=\frac{1}{\sqrt \pi} \sin \left(nx\right), \ \mathbf{e}_n^{(2)}(x):=\frac{1}{\sqrt \pi} \cos \left(nx\right), \ n\in\N\setminus\{0\}\right\}
\end{equation}
\end{small}
is an orthonormal basis on $H$.
They are also eigenfunctions of {$\mathcal{L}=\mathcal{L}^{*}$} with associated eigenvalues
\begin{equation}\label{lambdaconst}
\mu_n=-\delta^o-{{\sigma}}n^2, \ \ \ \ n\in\N.
\end{equation}
We can expand in Fourier series
\begin{equation}\label{Fexp}
\alpha=\left\langle\alpha,\mathbf{e}_0\right\rangle_H \mathbf{e}_0+\sum_{n\in\mathbb{N}\setminus\{0\}, \, i=1,2}\langle\alpha,\mathbf{e}_n^{(i)}\rangle_H \,\mathbf{e}_n^{(i)}.
\end{equation}
Let us compute the coefficients of the series. As for $n=0$,
we notice that
$$\langle \alpha,(\rho-\mathcal{L})\mathbf{e}_0\rangle_H=
\langle (\rho- \mathcal{L})\alpha,\mathbf{e}_0\rangle_H
=\langle\widehat{w},\mathbf{e}_0\rangle_H,
$$
hence,
\begin{equation}\label{e0}
\left\langle\alpha,\mathbf{e}_0\right\rangle_H
=\left(\rho+ \delta^o\right)^{-1}
\left\langle\widehat{w},\mathbf{e}_0\right\rangle_H= \left(\rho+ \delta^o\right)^{-1}\frac{1}{\sqrt{2\pi}}\int_0^{2\pi}\widehat{w}(x) dx =\frac{1}{\sqrt{2\pi}}\frac{\ell w^o}{\rho+ \delta^o}.
\end{equation}
Similarly,
$$ \langle\alpha,\mathbf{e}^{(i)}_n\rangle_H
=\left(\rho+ \delta^o+{{\sigma}}{n^2}\right)^{-1}
\langle\widehat{w},\mathbf{e}^{(i)}_n\rangle_H, \ \ \ \ \ \ \ \forall i=1,2, \ \forall n\in\mathbb{N}\setminus\{0\}.
$$
We may compute
\begin{small}
\begin{equation}\label{scalar}
\displaystyle{\langle\widehat{w},\mathbf{e}_n^{(i)}\rangle_H=
	\begin{cases}
	\frac{1}{\sqrt{\pi}}\int_0^{2\pi}\widehat{w}(x)\sin \left({n}x\right) dx=\frac{w^o}{\sqrt{\pi}}\int_0^{\ell}\sin \left({n}x\right) dx=\frac{w^o}{\sqrt\pi} \frac{1}{n}\left[1-\cos \left(n\ell\right)\right],\ \ \ \ \mbox{if} \ i=1,\\\\
	\frac{1}{\sqrt{\pi}}\int_0^{2\pi}\widehat{w}(x)\cos \left({n}x\right) dx=\frac{w^o}{\sqrt{\pi}}\int_0^{\ell}\cos \left({n}x\right) dx=\frac{w^o}{\sqrt\pi} \frac{1}{n}\sin \left(n\ell\right),\ \ \ \ \ \ \ \ \ \ \, \mbox{if} \ i=2.
	\end{cases}}
\end{equation}
\end{small}
Plugging these results into \eqref{Fexp} yields
\begin{align*}
\alpha(x)&=\frac{1}{{2\pi}}\frac{\ell w^o}{\rho+ \delta^o}  +\frac{{w^o}}{\pi}  \sum_{n=1}^\infty  \frac{\sin \left(nx\right)\left(1-\cos \left({n}\ell\right)\right)+\sin (n\ell)\cos\left(nx\right)}{n\left(\rho+ \delta^o+{{\sigma}}{n^2}\right)},
\end{align*}
\subsubsection{The case with advection: $v^{o}\neq 0$}
Recalling the expression of $\mathcal{L}^\star$ provided in \eqref{eq:L*}, we have in the present case
$$[\mathcal{L}^\star\psi](x)= {\sigma}  \psi''(x)-v^o\psi'(x){-\delta^o}\psi(x),
\ \ \  \psi\in D(\mathcal{L^\star}).
$$
Consider again the family \eqref{eq:eigenvector}.
In this case $\mathbf{e}_0$ is still an eigenfunction of $\mathcal{L}^\star$, but  $\mathbf{e}^{(i)}_n$ are not eigenfunction of $\mathcal{L}^\star$ anymore for $n\in \mathbb{N}\setminus\{0\}$. However, still $H_n:=\mbox{Span}\{\mathbf{e}^{(1)}_n, \mathbf{e}^{(2)}_n\}$ are invariant subspaces for $\mathcal{L}^\star$ for each $n\in\mathbb{N}\setminus \{0\}$.
Indeed
$$
\mathcal{L}^\star\mathbf{e}^{(1)}_n= -{\sigma} n^2\mathbf{e}^{(1)}_n-v^on\mathbf{e}^{(2)}_n{-\delta^o} \mathbf{e}^{(1)}_n; \ \ \  \mathcal{L}^\star\mathbf{e}^{(2)}_n= -{\sigma} n^2\mathbf{e}^{(2)}_n+v^on\mathbf{e}^{(1)}_n{-\delta^o} \mathbf{e}^{(2)}_n;
$$
Arguing in a similar way as in the previous subsection we get, for each $n\in\mathbb{N}\setminus\{0\}$, the couple of equations
$$
(\rho+{\sigma}n^2+\delta^o)\langle \alpha, \mathbf{e}^{(1)}_n\rangle_H +v^on\langle \alpha, \mathbf{e}^{(2)}_n\rangle_H= \langle \widehat w, \mathbf{e}^{(1)}_n\rangle_H,
$$
$$
(\rho+{\sigma}n^2+\delta^o)\langle \alpha, \mathbf{e}^{(2)}_n\rangle_H -v^on\langle \alpha, \mathbf{e}^{(1)}_n\rangle_H= \langle \widehat w, \mathbf{e}^{(2)}_n\rangle_H.
$$
yielding, for each $n\in\mathbb{N}\setminus\{0\}$,
$$\langle \alpha, \mathbf{e}^{(1)}_n\rangle_H
= \frac{(\rho+{\sigma}n^2+\delta^o)\langle \widehat w, \mathbf{e}^{(1)}_n\rangle_H-v^on \langle \widehat w, \mathbf{e}^{(2)}_n\rangle_H}{(\rho+{\sigma}n^2+\delta^o)^2+(v^on)^2},$$
$$\langle \alpha, \mathbf{e}^{(2)}_n\rangle_H
= \frac{(\rho+{\sigma}n^2+\delta^o)\langle \widehat w, \mathbf{e}^{(2)}_n\rangle_H+v^on \langle \widehat w, \mathbf{e}^{(1)}_n\rangle_H}{(\rho+{\sigma}n^2+\delta^o)^2+(v^on)^2}.
$$
Using \eqref{Fexp}--\eqref{scalar} and the expressions above, we have for $x\in S^1$
\begin{equation*}
\begin{split}
\alpha(x)&= \frac{1}{{2\pi}}\frac{\ell w^o}{\rho+ \delta^o}+\frac{w^o}{ \pi}\sum_{n=1}^\infty  \frac{(\rho+{\sigma}n^2+\delta^o) \sin (n\ell)+v^on(1-\cos(n\ell)) }{n
	\left(\rho+\delta^o+{\sigma}n^2\right)}\cos (nx)\\&
+ \frac{w^o}{ \pi}\sum_{n=1}^\infty  \frac{(\rho+{\sigma}n^2+\delta^o) (1-\cos (n\ell))-v^on \sin (n\ell) }{n
	\left(\rho+\delta^o+{\sigma}n^2\right)}\sin (nx).
\end{split}
\end{equation*}

\end{footnotesize}

\end{document}